\title{A Self-scaled Approximate $\ell_0$ Regularization Robust Model for Outlier Detection
%%%% Cite as
%%%% Update your official citation here when published 
% \thanks{\textit{\underline{Citation}}: 
% \textbf{Authors. Title. Pages.... DOI:000000/11111.}} 
}
\author{
  Pengyang Song, Jue Wang \\
  AMSS Center for Forecasting Science \\
  Chinese Academy of Sciences \\
  Beijing \\
  \texttt{\{Jue Wang\}wjue@amss.ac.cn} \\
  %% examples of more authors
  %  \And
  % Author3 \\
  % Affiliation \\
  % Univ \\
  % City\\
  % \texttt{email@email} \\
  %% \AND
  %% Coauthor \\
  %% Affiliation \\
  %% Address \\
  %% \texttt{email} \\
  %% \And
  %% Coauthor \\
  %% Affiliation \\
  %% Address \\
  %% \texttt{email} \\
  %% \And
  %% Coauthor \\
  %% Affiliation \\
  %% Address \\
  %% \texttt{email} \\
}
\begin{document}
\maketitle

\newtheorem{theorem}{Theorem}
\newtheorem{lemma}{Lemma}            
\newtheorem{proposition}[theorem]{Proposition}
\newtheorem{corollary}[theorem]{Corollary}
\newtheorem{definition}{Definition}
\newtheorem{remark}{Remark}
\newtheorem{example}[theorem]{Example}
\newtheorem{assumption}[theorem]{Assumption}

\begin{abstract}
Robust regression models in the presence of outliers have significant practical relevance in areas such as signal processing,
financial econometrics, and energy management. Many existing robust regression methods, either grounded in statistical theory
or sparse signal recovery, typically rely on the explicit or implicit assumption of outlier sparsity to filter anomalies and
recover the underlying signal or data. However, these methods often suffer from limited robustness or high computational complexity,
rendering them inefficient for large-scale problems.
In this work, we propose a novel robust regression model based on a Self-scaled Approximate $\ell_0$ Regularization Model (SARM) scheme.
By introducing a self-scaling mechanism into the regularization term, the proposed model mitigates the negative impact of uneven
or excessively large outlier magnitudes on robustness. We also develop an alternating minimization algorithm grounded in Proximal Operators and Block Coordinate Descent. We rigorously prove the algorithm convergence, and under the Restricted Isometry Property (RIP) framework, we derive a theoretical upper bound on the estimation error associated with SARM.
Empirical comparisons with several state-of-the-art robust regression methods demonstrate that SARM not only achieves
superior robustness but also significantly improves computational efficiency. Motivated by both the theoretical error bound and
empirical observations, we further design a Two-Stage SARM (TSSARM) framework, which better utilizes
sample information when the singular values of the design matrix are widely spread, thereby enhancing robustness under certain conditions.
Finally, we validate our approach on a real-world load forecasting task.
The experimental results show that our method substantially enhances the robustness of load forecasting against adversarial data
attacks, which is increasingly critical in the era of heightened data security concerns.
\end{abstract}

% keywords can be removed
\keywords{First keyword \and Second keyword \and More}

\section{Introduction}

For a general linear regression model with $m$ training sample pairs $\{(x_i,y_i): x_i \in \mathbb{R}^n, y_i \in \mathbb{R}\}$, it is typically assumed that $x_i$ and $y_i$ satisfy the following relationship \cite{liu2018robust}:
\begin{equation} \label{normal}
y_i = x_i^T w  + e_i, i = 1,2,\cdots, m
\end{equation}
or in matrix form: $y = X w  + e$, where $y = (y_1, y_2, \cdots y_m)^T (\in \mathbb{R}^m)$ and $X = (x_1, x_2, \cdots x_m)^T (\in \mathbb{R}^{m\times n})$ are all known. $w \in \mathbb{R}^n$ is the parameter that need to be estimated and $\epsilon = (\epsilon_1, \epsilon_2, \cdots \epsilon_m)^T \in \mathbb{R}^m$ represents the observation error.
Without loss of generality, the matrix $X$ is assumed to be full rank.
However, in practical applications of regression models, training samples may contain some labels $y_i$ that are corrupted by large observation errors \cite{papageorgiou2015robust}, or we can call them outliers. 
In this case, it primarily assumed that the samples follow the relationship given below \cite{liu2018robust, suggala2019adaptive}:
\begin{equation} \label{attackted}
y = X w  + e + z,
\end{equation}
where $z \in \mathbb{R}^m$ denotes outliers and satisfies the following two assumptions:
\begin{enumerate}
    \item Outliers typically exhibit large observation errors, i.e. $min\{ |z_i|, z_i \neq 0 \} > \|\epsilon \|_{\infty}$.
    \item Only a subset of the training sample pairs contains outliers, i.e. $\|z\|_0 < m$, where $\| \cdot\|_0$ denotes the cardinality of non-zero entries.
\end{enumerate}
These outliers may result from unexpected events, such as when images are affected by accidental occlusions or illumination changes \cite{chen2021delta}, leading to anomalous pixels. Alternatively, they could arise from deliberate manipulation, for example, load demand data may be subject to data attacks aimed at disrupting power system scheduling \cite{luo2018robust, luo2023robust}. 
These outliers, originating from different sources, may significantly affect the estimation performance of Ordinary Least Squares (OLS) \cite{de2021review}.
In this context, robust regression models, a class of regression models that can resist the impact of outliers on the estimation of parameters, hold significant practical importance in numerous fields, including power systems \cite{hong2022data}, image processing \cite{zoubir2018robust, buccini2020large}, robot perception \cite{antonante2022outlier} and so on.

To the best of our knowledge, there are three main directions in the development of robust regression models. 

The first direction is Robust Statistics \cite{maronna2019robust}, represented by M-estimators and L-estimators \cite{de2021review}.
Specifically, several notable methods include Least Absolute Deviations (LAD), Iteratively Reweighted Least Squares (IRLS) \cite{beaton1974fitting, holland1977robust},  Least Median of Squares (LMedS) \cite{rousseeuw1984least}, Least Trimmed Squares (LTS) \cite{maronna2019robust}, and Random Sample Consensus (RANSAC) \cite{fischler1981random}. These methods each still have certain drawbacks. For example, RANSAC, LMedS and LTS face challenges in handling high-dimension problems \cite{liu2018robust, mitra2012analysis} and LAD is still influenced by outlier observation errors to some extent.

The second research direction approaches the problem from the perspective of Sparse Signal Recovery (SSR) \cite{mitra2012analysis}. 
By further assuming the sparsity of $z$ in (\ref{attackted}), Restricted Isometry Property (RIP) \cite{candes2005decoding} and techniques from Compressed Sensing can be applied.
Representative algorithms in this category include GARD \cite{papageorgiou2015robust, kallummil2019noise}, which employs greedy strategies, and AROSI \cite{liu2018robust}, which utilizes $\ell_0$-norm regularization. These two methods, especially GARD, still pose potential risks in terms of computational complexity when applied to large-scale datasets.

The third research direction is to reduce the impact of outliers by designing Non-convex Bounded Loss Functions \cite{chen2021delta,buccini2020large,fu2023robust}.
Among strategies for constructing such loss functions, truncation is one of the most popular approaches and TLRM is a representative example \cite{huang2024large}. The non-convex nature often introduces some optimization challenges for these models.

When corruption rate is low, most of the aforementioned methods generally perform well. However, when outliers stem from deliberate adversarial attacks—for example, cyber attacks targeting power grid systems \cite{luo2023robust, hong2022data}—corruption rate in the data may become significantly higher, and then many robust models tend to fail.
Moreover, some models are either designed based on LAD or require LAD as an initial estimator \cite{liu2018robust, she2011outlier}. Since LAD is typically solved via linear programming, such models often have high computational complexity when dealing with large-scale problems. On the other hand, models that do not rely on LAD tend to fail earlier as corruption rate increases, i.e. they have a smaller breakdown point.
In addition, we further observe that many existing robust models tend to focus on model structure or algorithm design, while relatively few studies consider enhancing robustness from the perspective of the training samples. Information such as sample feature dimensionality is often closely related to model robustness and should not be overlooked.

% In summary, a considerable number of effective methods, such as AROSI and TLRM, can be directly or indirectly transformed into $\ell_0$-regularization problems:
% \begin{equation} \label{L0}
%  L(w, z) = Loss(X^T w  - y - z) + \delta \| z \|_0,
% \end{equation}
% where $Loss$ denotes $\| \cdot \|_1$ or $\| \cdot \|_2^2$ and $\delta$ is a hyperparameter. This directly corresponds to the two related assumptions about the outlier variable $z$. However, the discontinuity and non-differentiability of the zero norm make the problem difficult to solve. 

Based on the above considerations, our contributions can be summarized as follows:
\begin{enumerate}
    % \item We propose a Self-scaled Approximate $\ell_0$ Regularization Model (SARM) and design an optimization algorithm for SARM by integrating Preconditioning, Block Coordinate Descent, and Proximal Operators. SARM is developed from $\ell_0$-regularized robust models; however, instead of directly using the discontinuous and non-differentiable $\ell_0$-norm, it approximates the $\ell_0$ regularization term via a self-scaling mechanism, which results in a continuous and differentiable loss function.
    % \item We design an optimization algorithm for SARM by integrating Preconditioning, Block Coordinate Descent, and Proximal Operators. The proposed algorithm exhibits low computational complexity and avoids an initial estimation using LAD or other computationally expensive robust models.
    % \item 
    \item We propose a Self-scaled Approximate $\ell_0$ Regularization Model (SARM) and design an tailored alternating minimization algorithm. SARM not only demonstrates strong robustness but also exhibits low computational complexity. The algorithmic structure of SARM also makes it highly compatible with GPU parallel computing, resulting in better computational efficiency on large-scale datasets.
    \item We rigorously prove the algorithm convergence, and under the Restricted Isometry Property framework, we derive a theoretical upper bound on the estimation error associated with SARM.
    \item Motivated by both the theoretical error bound and empirical observations, we further design a Two-Stage SARM (TSSARM) framework, which better utilizes sample information when the singular values of the design matrix satisfy certain conditions. We also provide a model selection scheme for SARM and TSSARM.
\end{enumerate}

We compare our method with several state-of-the-art robust models on ideal simulated data, and further validate its performance on a real-world power forecasting case. Experimental results demonstrate the effectiveness of the proposed model.

The remainder of this paper is organized as follows: Section \ref{SARM} describes the structure of the SARM model and the related algorithms. Section \ref{Theorem} presents the convergence analysis of the SARM-related algorithms and the theoretical error bound of SARM. Section \ref{Simulation} showcases the results of the simulation experiments. Section \ref{TSSARM} discusses the structure of the TSSARM model, the simulation results, and the model selection strategy. Section \ref{Applied} provides a case study on electrical load demand forecasting. The Appendix contains the detailed proof for Section \ref{Theorem}.

\subsection{Notations}

For clarity, we present the key notations and definitions that will be employed in this paper.

Assume that $\mathbf{a} = [a_1, a_2, \cdots a_d]^T \in \mathbb{R}^d$ and $\mathbf{b} = [b_1, b_2, \cdots b_d]^T \in \mathbb{R}^d$ are vectors, and
$\sigma : \mathbb{R}^d \to (-\infty, +\infty]$ be a proper and lower semicontinuous function.
\begin{enumerate}
\item The division $\frac{\mathbf{a}}{\mathbf{b}}$ is defined as:
\begin{equation}  \label{division}
 \frac{\mathbf{a}}{\mathbf{b}} = \left[\frac{a_1}{b_1}, \frac{a_1}{b_1}, \cdots, \frac{a_d}{b_d} \right]^T.
\end{equation}
\item If $f(x)$ is a single-valued function, $f(\mathbf{a})$ is defined as
\begin{equation} \label{multi-value function}
 f(\mathbf{a}) = \left[f(a_1) , f(a_2) , \cdots, f(a_d)  \right]^T.
\end{equation}
\item The concatenation of vectors \(\mathbf{a}\) and \(\mathbf{b}\) is defined as:
\begin{align} \label{concatenation}
    [\mathbf{a}; \mathbf{b}] = [a_1, \cdots a_d, b_1, \cdots b_d]^T
\end{align} 
\item An alternative representation of a vector $\mathbf{a}$ is defined as:
\begin{align} \label{component}
    [a_i] = [a_1, a_2, \cdots , a_d]^T = \mathbf{a}
\end{align} 
    \item For a given $\mathbf{b} \in \operatorname{dom} \sigma$, the \emph{Fréchet subdifferential} of $\sigma$ at $\mathbf{b}$, 
    written $\hat{\partial} \sigma(\mathbf{b})$, is the set of all vectors $\mathbf{u} \in \mathbb{R}^d$ which satisfy:
    \begin{align}
    \liminf_{\mathbf{a} \to \mathbf{b},\, \mathbf{a} \neq \mathbf{b}} \frac{\sigma(\mathbf{a}) - \sigma(\mathbf{b}) - \langle \mathbf{u} , \mathbf{a} - \mathbf{b} \rangle}{\|\mathbf{a} - \mathbf{b}\|} \geq 0.  \notag
    \end{align} 
    \item The \emph{limiting subdifferential}, or simply the \emph{subdifferential}, of $\sigma$ at $\mathbf{x} \in \mathbb{R}^d$, written $\partial \sigma(\mathbf{x})$, 
    is defined as \cite{mordukhovich2006variational}:
    \begin{align}
    \partial \sigma(x) := \left\{ \mathbf{u} \in \mathbb{R}^d : \exists\, \mathbf{x}_k \to \mathbf{x},\ \sigma(\mathbf{x}_k) \to \sigma(\mathbf{x}),\ \text{and } \mathbf{u}_k \in \hat{\partial} \sigma(x_k) \to \mathbf{u} \text{ as } k \to \infty \right\}. \notag
    \end{align} 
    \item For any set $D \subset \mathbb{R}^d$ and any point $\mathbf{a} \in \mathbb{R}^d$, the distance from $\mathbf{a}$ to $D$ is defined as
    \begin{align} \label{dist}
      \operatorname{dist}(\mathbf{a}, D) := \inf \{ \|\mathbf{b} - \mathbf{a}\| : \mathbf{b} \in D \}. \notag
    \end{align}
    If $S = \emptyset$, we define $\operatorname{dist}(\mathbf{b}, D) = \infty$ for all $\mathbf{b}$.
\end{enumerate}
A brief discussion on some basic properties of the \emph{Fréchet subdifferential} and the \emph{subdifferential} can be found in 
Definition 1 and the corresponding Remark 1 and  Proposition 1 of \cite{bolte2014proximal}.

The following presents the definitions of other notations:
\begin{enumerate}
    \item \textbf{Breakdown point}: It is defined as the maximum fraction of corrupted data beyond which the estimator can no longer guarantee recovery of the parameter $w$ with small error \cite{suggala2019adaptive,hampel1971general}.
    \item \textbf{$\ell_0$-Norm ($\|\cdot\|_0$)}:  It denotes the cardinality of non-zero entries of a vector. For a vector $\boldsymbol{a} \in \mathbb{R}^m$, $\|\boldsymbol{a}\|_0 := |\{i \in \{1, 2, \dots, m\} : a_i \ne 0\}|$.
    \item \textbf{$\ell_p$-Norm ($\|\cdot\|_p$)}:  For a vector $\boldsymbol{a} \in \mathbb{R}^m$, $\|a\|_p = \left( \sum_{i=1}^n |a_i|^p \right)^{1/p}, \quad \text{for } p \geq 1$.
    \item \textbf{Critical Points}: Points whose subdifferential contains 0 are called (limiting-)critical points. The set of critical points of $\sigma$ is denoted by $\operatorname{crit} \sigma$.
\end{enumerate}

% \begin{enumerate}[(i)]
%     \item For a given $x \in \operatorname{dom} \sigma$, the \emph{Fréchet subdifferential} of $\sigma$ at $x$, written $\hat{\partial} \sigma(x)$, is the set of all vectors $u \in \mathbb{R}^d$ which satisfy
%     \[
%     \liminf_{y \to x,\, y \neq x} \frac{\sigma(y) - \sigma(x) - \langle u, y - x \rangle}{\|y - x\|} \geq 0.
%     \]
%     When $x \notin \operatorname{dom} \sigma$, we set $\hat{\partial} \sigma(x) = \emptyset$.
    
%     \item The \emph{limiting subdifferential} \cite[27]{}, or simply the \emph{subdifferential}, of $\sigma$ at $x \in \mathbb{R}^d$, written $\partial \sigma(x)$, is defined through the following closure process:
%     \[
%     \partial \sigma(x) := \left\{ u \in \mathbb{R}^d : \exists\, x_k \to x,\ \sigma(x_k) \to \sigma(x),\ \text{and } u_k \in \hat{\partial} \sigma(x_k) \to u \text{ as } k \to \infty \right\}.
%     \]
% \end{enumerate}

\section{The Self-scaled Approximate $\ell_0$ Regularization Model} \label{SARM}

\subsection{The $\ell_0$ Regularization Model and The Limitations} \label{motivation}

We first illustrate our motivation from the limitations of the $\ell_0$ regularization model.
The $\ell_0$ regularization model is usually expressed as:
\begin{equation} \label{L01}
\min_{w \in \mathbb{R}^n, z \in \mathbb{R}^m} \quad Loss( y - X w - z ) + \delta \| z \|_0,
\end{equation}
and a closely related form of it is:
\begin{align} \label{L02}
& \min_{w \in \mathbb{R}^n, z \in \mathbb{R}^m} \quad \quad \| z \|_0 \\
& s.t. \quad Loss( y - X w - z ) \leq \varepsilon, \notag
\end{align}
where $Loss()$ usually denotes $\| \cdot \|_2^2$.
A considerable number of effective robust regression methods, such as GARD \cite{papageorgiou2015robust}, can be directly or indirectly transformed into these forms. 
AROSI \cite{liu2018robust} shares a similar structure, if $Loss()$ in (\ref{L01}) is replaced by $\| \cdot \|_1$.
Truncated loss functions are also closely related to $\ell_0$ regularization.
Essentially, the $\ell_0$-norm regularization model perfectly aligns with (\ref{attackted}) and its associated two assumptions. 

However, the main challenge faced by researchers lies in the fact that the $\ell_0$-norm is neither continuous nor differentiable, which makes this Optimization Problem (\ref{L01}) or (\ref{L02}) challenging to solve.
A classical approach is convex relaxation, where the $\ell_0$ regularization term is replaced by an $\ell_1$ regularization. Unfortunately, similar to LAD, the $\ell_1$-based relaxation exhibits sensitivity to the magnitude of outlier errors, which can be observed from the error bounds provided in \cite{mitra2012analysis}, and a similar argument is also presented in \cite{wang2015self}. 

\begin{algorithm}
\caption{An Optimization Algorithm for $\ell_0$-regularization Problem (\ref{L01})}
\begin{algorithmic}[1]
\Require $y, X, \delta > 0$
\State Initialization: $k = 0$, $w^{(0)} = 0$, $S_0 = \{1, \ldots, m\}$
\While{algorithm not converged}
    \State Iteration $k+1$
    \State (Update $w$)
    $w^{k+1} = \underset{x}{\arg\min}  Loss(y_{S_k} - X_{S_k}w)$
    \If{$Loss(y_{S_k} - X_{S_k}w^{k+1}) = Loss(y_{S_k} - X_{S_k}w^{k})$}
        \State Further update $w^{k+1} = w^{k}$
    \EndIf
    \State (Update $z$ and $S$)
    $z_i^{k+1} = 
    \begin{cases}
        0, & \text{if } Loss(y_i - x_iw^{k+1}) \leq \alpha \\
        y_i - x_iw^{k+1}, & \text{otherwise}
    \end{cases}$
    \State $S_{k+1} := \{i: z_i^{k+1} = 0\}$
    \State $k := k + 1$
\EndWhile
\Ensure Final solution $w^{k}$ and $z^{k}$
\end{algorithmic}
\label{AROSI}
\end{algorithm}

An alternative class of methods addresses Problem (\ref{L01}) directly. Due to the inherent properties of the $\ell_0$ regularization term, these methods exhibit insensitivity to the scale of outlier errors.
A typical and effective optimization algorithm comes from AROSI and a generalized form of it is shown in Algorithm \ref{AROSI}. 
When the truncated loss regression problem is approached from the perspective of the Majorization-Minimization Algorithm (MMA), a similar algorithm can also be derived \cite{huang2024large}. 
In simple terms, Algorithm \ref{AROSI} gradually filters out outliers during each iteration until convergence. 

In terms of computational complexity and robustness, Algorithm \ref{AROSI} demonstrates great performance, outperforming many robust models, such as GARD, IRLS, and LAD \cite{liu2018robust}. However, if the initial parameter estimate $w^{(1)}$ deviates significantly from the true $w$, Algorithm \ref{AROSI} may fail to correct itself through subsequent iterations.
This is also referred to as the "swamping effect" \cite{she2011outlier}, which means that outliers cause non-outlier samples to be mistakenly identified as outliers.

By examining the objective function of Optimization Problem (\ref{L01}), we identify certain structural aspects of the objective function that contribute to the aforementioned issue. 
Due to the optimization challenges posed by the $\ell_0$ regularization term, researchers are often forced to resort to alternately updating $w$ and $z$. However, when updating $z$, Algorithm \ref{AROSI} typically removes samples whose residuals exceed a certain threshold in a "hard" manner.
When the swamping effect is severe, many non-outlier samples are likely to be mistakenly discarded, leading to model failure.
Some of the more effective methods either leverage robust models such as LAD to obtain a more stable initial estimate \cite{she2011outlier} or directly replace $Loss()$ with $\| \cdot \|_1$ in order to prevent the excessive removal of non-outliers \cite{liu2018robust}, but the high computational complexity inherently introduced by the $\ell_1$ loss function (i.e., LAD) is unavoidable. 
Therefore, directly considering the original form of $\ell_0$ regularization may not be ideal. 

From the issues discussed above, we identify two key aspects that require special attention:
\begin{enumerate}
    \item The structural advantages of $\ell_0$ regularization should be taken into account to avoid sensitivity to the magnitude of outlier errors.
    \item A surrogate formulation of the $\ell_0$ regularization should be designed to be more optimization-friendly and more robust against the swamping effect.
    \item A more reliable initial estimate should be taken into account.
\end{enumerate}

The design of SARM is primarily inspired by the first two considerations, and to some extent, SARM alleviates the need for an initial estimate obtained via LAD. As for the third consideration, we will revisit the issue of initial estimates from a different perspective in our later development of TSSARM.

\subsection{SARM: Model Formulation and Algorithm Design}

Our main modification lies in the regularization term of model (\ref{L01}).
Specifically, we construct the following Self-scaled Approximate $\ell_0$ Regularization Model (SARM):
\begin{equation} \label{SARMOPT}
\min_{w \in \mathbb{R}^n, z \in \mathbb{R}^m} \frac{1}{2}\|y-Xw-z\|_2^2 + \delta \left\|\frac{z}{y-Xw}\right\|_1.
\end{equation}
where $\frac{z}{y-Xw} $ is defined as (\ref{division}). 
We first relax the $\ell_0$-norm to an $\ell_1$-norm, and then use the residual $Xw - y$ to scale it, thereby approximating the $\ell_0$-norm regularization term. Outliers are typically considered to have much larger observation errors and $z$ is non-zero only at the outliers, therefore for the true $w$ and $z$, $ \left\|\frac{z}{y-Xw}\right\|_1  = \left\|\frac{z}{z+\epsilon}\right\|_1 \approx \| z \|_0$. 

In fact, (\ref{SARMOPT}) is sufficient for designing a robust model and the potential division-by-zero problem can be resolved by adding a small positive constant to the denominator. However, for the sake of the convergence analysis presented later, it remains necessary to further smooth the regularization term.
We introduce a smoothing function:
\begin{equation} \label{SMOOTHF}
S(x) =
\begin{cases}
  \frac{1}{ 2 \sqrt{\delta} } x^2 + \frac{ \sqrt{\delta} }{2}, & \text{if } |x| < \sqrt{\delta} \\
  | x |, & \text{if } |x| \geq \sqrt{\delta},
\end{cases} 
\end{equation}
and reformulate Optimization Problem (\ref{SARMOPT}) to:
\begin{equation} \label{SARMOPT2}
\min_{w \in \mathbb{R}^n, z \in \mathbb{R}^m} \frac{1}{2}\|y-Xw-z\|_2^2 + \delta \left\|\frac{z}{S(y-Xw)}\right\|_1.
\end{equation}
The graph of Function (\ref{SMOOTHF}) is shown in Figure \ref{fig1}. 

\begin{figure}[htbp]
  \centering
  \includegraphics[width=0.6\textwidth]{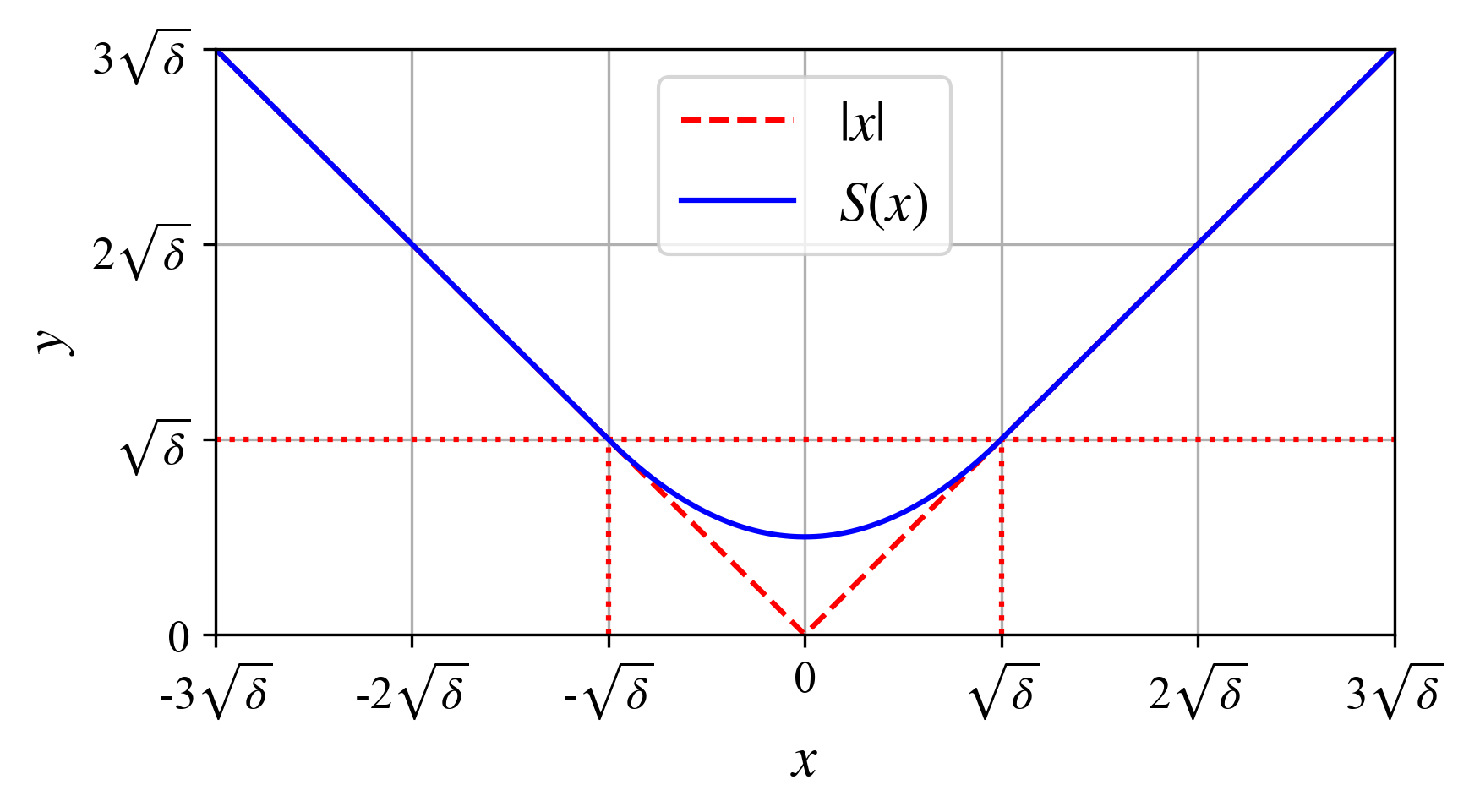}
  \caption{The graph of Function (\ref{SMOOTHF}).}
  \label{fig1}
\end{figure}

Next, we describe the iterative algorithm for (\ref{SARMOPT2}). 
The variables $z$ and $w$ are coupled, which makes direct optimization challenging. A straightforward approach is to decouple the variables $z$ and $w$ via alternating optimization. 
Specifically, at the $k$-th iteration, we first fix $z = z^{(k-1)}$ and perform one step of Gradient Descent on $w$ to obtain $w^{k}$. 

Then, we fix $w = w^{k}$ and proceed to optimize $z$. We denote the residual $y - Xw^{k} = r^{k}$, and the optimization problem is:
\begin{align} 
 & \min_{z \in \mathbb{R}^m}  \frac{1}{2}\|r^{k}-z\|_2^2 + \delta \left\|\frac{z}{S \left(r^{k}\right)}\right\|_1  \notag \\
 =& \min_{z \in \mathbb{R}^n}\sum_{i=1}^{m}  \left( \frac{1}{2}(r^{k}_i-z_i)^2 + \delta \left|\frac{z_i}{S(r^{k}_i)}\right| \right)  \notag \\
=& \sum_{i=1}^{m} \min_{z_i \in \mathbb{R}}  \left( \frac{1}{2}(r^{k}_i-z_i)^2 + \delta \left|\frac{z_i}{S(r^{k}_i)}\right| \right)  \notag \\
=& \sum_{i=1}^{m} \min_{z_i \in \mathbb{R}} \frac{1}{2}(r^{k}_i-z_i)^2 + t_i h(z_i),  \notag
\end{align}
where  $t_i = \frac{\delta}{S(r^{k}_i)} $ and $h(x) =|x|$.
Note that all $r^{k}_i$ are constants, and thus the optimal solutions $z = (z_1, z_2, \cdots, z_m)^T$ can be directly obtained via the proximal operator:
\begin{align}
z_i &={\arg\min}_{z_i \in \mathbb{R}} \frac{1}{2}(r^{k}_i-z_i)^2 + t_i h(z_i),\notag \\
&= prox_{t_i h}(r^{k}_i) \notag \\
&= sign(r^{k}_i) max\{ r^{k}_i - t_i, 0 \} \notag \\
&= sign(r^{k}_i) max \left\{ | r^{k}_i | - \frac{\delta}{S(r^{k}_i)}, 0 \right\} \label{prox1}\\
& =     \begin{cases}
        0, & \text{if } | r^{k}_i |  \leq \sqrt{\delta} \\
         sign(r^{k}_i) \cdot \left(|r^{k}_i|  - \frac{\delta}{S(r^{k}_i)} \right), & \text{if } | r^{k}_i  | > \sqrt{\delta}
    \end{cases}  \label{prox2} \\
& =     \begin{cases}
        0, & \text{if } | r^{k}_i |  \leq \sqrt{\delta} \\
         r^{k}_i  - \frac{\delta}{r^{k}_i } , & \text{if } | r^{k}_i|  > \sqrt{\delta},
    \end{cases} 
 \label{altz}
\end{align}
where (\ref{prox1}) $\iff$ (\ref{prox2}) because $|r^{k}_i|  - \frac{\delta}{S(r^{k}_i)} > 0 \iff |r^{k}_i| \cdot{S(r^{k}_i)} > {\delta} \iff | r^{k}_i|  > \sqrt{\delta}$. 
Thus, the full iteration framework is established.

However, it is worth noting that before performing the iterations, the input matrix $X$ needs to be preconditioned, i.e.,
\begin{align} \label{pre}
X_p = XL^{-1},
\end{align}
where $L \in \mathbb{R}^{n \times n}$ an upper triangular matrix obtained from the Cholesky decomposition of the symmetric matrix,  i.e. $X^T X = L^T L$. This procedure is aimed at normalizing the singular values of $X_p$ to 1, thereby facilitating faster convergence of the optimization process. 
Meanwhile, it can be verified that $X_p^T X_p = I$ and $ \|X_p \|_2 = 1$.
Finally, we apply a linear transformation to the obtained parameter estimate $\hat{w}_p$: $\hat{w} = L^{-1}\hat{w}_p$. $\hat{w}$ is the desired estimate of the true $w$ because $L$ is full rank and:
\begin{align}
&\min_{w \in \mathbb{R}^n, z \in \mathbb{R}^m} \frac{1}{2}\|y-Xw-z\|_2^2 + \delta \left\|\frac{z}{S(y-Xw)}\right\|_1  \notag \\
= &\min_{w_p \in \mathbb{R}^n, z \in \mathbb{R}^m} \frac{1}{2}\|y-XL^{-1}w_p-z\|_2^2 + \delta \left\|\frac{z}{S(y-XL^{-1}w_p)}\right\|_1 .  \notag
\end{align}
For convenience, we will use $X$ to denote $X_{p}$ in the following. 

The complete procedure is shown in Algorithm \ref{SARMAlgorithm}. In what follows, we further elaborate on the update step in line \ref{update w}. 

When performing gradient descent on $w$, the update rule is given by:
\begin{align}
w^{k+1} &= w^{k} - \alpha \nabla_w H (w^{k}, z^{k}) \notag  \notag \\
&= w^{k} - \left. \alpha \nabla_w (\frac{1}{2}\|y-Xw-z^{k}\|_2^2)\right|_{w = w^{k}} - \left. \alpha \cdot \delta \cdot \sum_{i=1}^{m}\nabla_w \left(\left|\frac{z^{k}_i}{S(x_i^T w-y_i)}\right| \right)\right|_{w = w^{k}}  \notag \\
&= w^{k} - \alpha X^T (Xw^{k}-y - z^{k}) -\left. \alpha \cdot \delta \cdot \sum_{i=1}^{m}\left(\frac{ \left|z^{k}_i\right| \cdot S'(y_i-x_i^Tw)}{S^2(y_i-x_i^Tw)} \right) \cdot x_i \right|_{w = w^{k}}  \notag \\
&= w^{k} - \alpha X^T (r^{k} - z^{k}) - \alpha \cdot \delta \cdot \sum_{i=1}^{m}\left(\frac{ \left|z^{k}_i\right| \cdot S'(y_i - x_i^Tw^{k})}{S^2(y_i - x_i^Tw^{k})} \right) \cdot x_i  \label{w1} \\
&= w^{k} - \alpha X^T (r^{k} - z^{k}) - \alpha \cdot \delta X^T \left(\frac{z^{k}}{S^2(r^{k})} \right). \label{w2} 
\end{align}
(\ref{w1}) $\iff$ (\ref{w2}) due to the fact that 
% Noting that according to (\ref{altz}),  we can obtain that:
\begin{align}
& \left|{z^{k}_i}\right| S'(y_i - x_i^Tw^{k}) \label{111}\\
=&      \begin{cases}
        0 \cdot S'(y_i - x_i^Tw^{k}), & \text{if } | r^{k}_i |  \leq \sqrt{\delta}\\
       \left|{z^{k}_i}\right| sign(y_i - x_i^Tw^{k}), & \text{if } | r^{k}_i  | > \sqrt{\delta}
    \end{cases} \label{113} \\
=&      \begin{cases}
        0, & \text{if } | r^{k}_i |  \leq \sqrt{\delta} \\
       {z^{k}_i}, & \text{if } | r^{k}_i  | > \sqrt{\delta}
    \end{cases} \label{114} \\
=&   z^{k}_i,  \notag
\end{align}
and (\ref{113}) $\iff$ (\ref{114}) because $r^{k}_i - \frac{\delta}{ r^{k}_i}$ and $r^{k}_i = y_i - x_i^Tw^{k}$ have the same sign when $| r^{k}_i  | > \sqrt{\delta}$. 

Although Algorithm \ref{SARMAlgorithm} is derived under the smoothed Optimization Problem (\ref{SARMOPT2}), a straightforward verification reveals that the same update rules can be obtained by directly handling the original Problem (\ref{SARMOPT}). The primary purpose of applying the smoothing function (\ref{SMOOTHF}) lies in facilitating the convergence analysis.

Next, we examine Algorithm \ref{SARMAlgorithm}. By further expanding the update rule for $w$, we obtain:
\begin{align}
     w^{k+1} &= w^{k} + \alpha X^T (y - Xw^{k} - z^{k}) - \alpha \cdot \delta \cdot X^T \frac{z^{k}}{S^2(r^{k})}  \notag \\
     &=(I-\alpha X^TX)w^{k} + \alpha X^T(y + z^{k})- \alpha \cdot \delta \cdot X^T \frac{z^{k}}{S^2(r^{k})} \label{e1}\\
     &=(1-\alpha)w^{k} + \alpha(X^TX)^{-1}X^T(y + z^{k})- \alpha \cdot \delta \cdot X^T \frac{z^{k}}{S^2(r^{k})},\label{e2}
\end{align}
where $I$ denotes the identity matrix. (\ref{e1}) $\iff$ (\ref{e2}) holds because the preconditioned matrix $X$ satisfies $X^TX = I$.
$(X^T X)^{-1} X^T (y + z^{k})$ corresponds to the parameter estimate obtained by performing linear regression with $y + z^{k}$ as the response variable.
This suggests that each update of $w$ involves a combination of two effects: a closed-form component obtained via linear regression on the adjusted response variable $y + z^{k}$, and a descent direction contributed by the gradient of the regularization term.
$\alpha$ determines the extent to which the previous parameter estimate is preserved. Typically, we set it to 1.

We further examine the update scheme of $z$. In each iteration, 
The update of $z$ improves the estimation of the parameter $w$ by influencing the adjusted labels:
\begin{align}
    y_i - z^{k}_i &= y_i - \left(r^{k}_i - \frac{\delta}{r^{k}_i} \right) \label{original observation} \\
    &=y_i - \left(y_i - x_i^Tw^{k} - \frac{\delta}{r^{k}_i} \right)  \notag\\
    &=x_i^Tw^{k} + \frac{\delta}{r^{k}_i} \label{fitted value}, 
\end{align}
where $r^{k}_i \geq \sqrt{\delta}$ and $r^{k}_i - \frac{\delta}{r^{k}_i} \geq 0$.
Algorithm \ref{SARMAlgorithm} does not remove samples whose residual exceeds a certain threshold, nor does it set $z^{k+1}_i = r^{k}_i$, which is similar to removing the sample. 
In contrast, Algorithm \ref{SARMAlgorithm} applies varying degrees of adjustment to potential outliers by different residual magnitudes. Specifically, the larger the residual, the closer the adjusted label $y_i + z^{k}_i$ is to the fitted value $x_i^Tw^{k}$; the smaller the residual, the closer the adjusted label $y_i + z^{k}_i$ remains to the original observation $y_i$, which can be observed according to (\ref{fitted value}) and (\ref{original observation}), respectively. 
Samples with large residuals are clearly more likely to be true outliers, whereas those with small residuals are often misidentified non-outliers. Therefore, the above design helps mitigate the influence of the swamping effect.

Next, we discuss the computational advantages of Algorithm \ref{SARMAlgorithm}. 
Cholesky decomposition is computationally efficient, with a complexity of only \(O(\frac{n^3}{3})\) \cite{golub2013matrix}. Consequently, the computational complexity of the preprocessing phase is \(O(mn^2 + \frac{n^3}{3})\). During the iterative process, the computational complexity per iteration is \(O(mn)\), which is much more low.

\begin{algorithm}
\caption{Algorithm For SARM}
\begin{algorithmic}[1]
\Require $y, X_{origin}, \alpha, \delta > 0$
\State Initialization: $k = 0$, $w^{(0)} = 0$, $z^{(0)} = 0$
\State Perform Cholesky decomposition on $X_{origin}^T X_{origin}$, and obtain an upper triangular matrix $L$
\State $X = X_{origin}L^{-1}$
\While{$H(w,z)$ not converged}
    \State $r^{k} = y - Xw^{k}$
    \State (Update $w$):
    $w^{k+1} = w^{k} - \alpha X^T (r^{k} - z^{k}) - \alpha \cdot \delta \cdot X^T \frac{z^{k}}{S^2(r^{k})}$ \label{update w}
    \State (Update $z$):
        $z^{k+1}_i = \begin{cases}
        0, & \text{if } | r^{k}_i |  \leq \sqrt{\delta} \\
        r^{k}_i - \frac{\delta}{r^{k}_i}, & \text{if } | r^{k}_i  | > \sqrt{\delta}
    \end{cases}$
    \State $k := k + 1$
\EndWhile
\State $w^{k} = L^{-1} w^{k}$
\Ensure Final solution $w^{k}$ and $z^{k}$
\end{algorithmic}
\label{SARMAlgorithm}
\end{algorithm}

%--------------------------------------------------------------------------------------------------
% Theoretical Analysis
\section{Theoretical Analysis} \label{Theorem} 

This section addresses two fundamental theoretical aspects: the convergence properties of Algorithm \ref{SARMAlgorithm} and the theoretical error bound associated with the SARM model.
Our convergence proof is established within the framework of non-convex optimization, primarily relying on a quadratic upper bound satisfied during the iterative process of Algorithm \ref{SARMAlgorithm}, as well as the Kurdyka–Łojasiewicz (KL) property.
The error bound proof is based on the Restricted Isometry Property (RIP) condition, along with a set of strong underlying assumptions.

\subsection{Convergence Analysis}

For convenience, we first clarify a notation of the objective function given a $\delta$:
\begin{align}
    H(w, z) &= \frac{1}{2}\|y-Xw-z\|_2^2 + \delta \left\|\frac{z}{S(y-Xw)}\right\|_1 \label{objective}
\end{align}
and assume that $w^{k+1}, w^{k}, z^{k+1}, z^{k}$ are obtained from the $k$-th and $(k+1)$-th iteration of Algorithm \ref{SARMAlgorithm}.
We also suppose that $\alpha$ in Algorithm \ref{SARMAlgorithm} satisfied $0 \leq \alpha \leq \frac{2}{L}$. 
Note that \( L \in \mathbb{R}\) here does not refer to the Lipschitz constant; it will be defined latter in Lemma \ref{Quadratic Upper Bound}.

The convergence analysis is structured in three main steps:

\textbf{STEP1. (Sufficient Decrease Condition).} 
\begin{theorem} \label{Sufficient decrease}
For any $w^{k+1}, w^{k}, z^{k+1}, z^{k}$, there exists $\rho_1 \in \mathbb{R}$ such that:
\begin{align}
    \rho_1 \left(\left\|[w^{k+1}; z^{k+1}] - [w^{k}; z^{k}]\right\|^2_2 \right) \leq H \left( w^{k+1}, z^{k+1} \right) - H\left(w^{k}, z^{k}\right), \quad k=1,2, \cdots   \notag
\end{align}
\end{theorem}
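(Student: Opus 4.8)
The plan is to exploit the two-block structure of Algorithm \ref{SARMAlgorithm}, in which one iteration consists of a gradient step in $w$ (with $z$ frozen) followed by an exact minimization in $z$ (with $w$ frozen). I would therefore insert the intermediate point $(w^{k+1}, z^{k})$ and split the total change as $H(w^{k+1},z^{k+1}) - H(w^{k},z^{k}) = [H(w^{k+1},z^{k}) - H(w^{k},z^{k})] + [H(w^{k+1},z^{k+1}) - H(w^{k+1},z^{k})]$, so that the first bracket is controlled entirely by the $w$-update and the second entirely by the $z$-update.

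For the $w$-half-step I would first observe that, with $z=z^{k}$ fixed, the map $w \mapsto H(w,z^{k})$ is continuously differentiable: because the smoothing function satisfies $S(\cdot)\ge \sqrt{\delta}/2>0$, the regularizer $\delta\sum_i |z_i^{k}|/S(y_i-x_i^{T}w)$ is smooth in $w$. I would then invoke the quadratic upper bound of Lemma \ref{Quadratic Upper Bound}, $H(w^{k+1},z^{k}) \le H(w^{k},z^{k}) + \langle \nabla_w H(w^{k},z^{k}),\, w^{k+1}-w^{k}\rangle + \tfrac{L}{2}\|w^{k+1}-w^{k}\|_2^2$, and substitute the gradient relation $w^{k+1}-w^{k}=-\alpha\,\nabla_w H(w^{k},z^{k})$. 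The inner product then collapses to $-\tfrac{1}{\alpha}\|w^{k+1}-w^{k}\|_2^2$, giving $H(w^{k+1},z^{k}) - H(w^{k},z^{k}) \le -\bigl(\tfrac{1}{\alpha}-\tfrac{L}{2}\bigr)\|w^{k+1}-w^{k}\|_2^2$, whose coefficient is nonnegative exactly under the standing hypothesis $0\le\alpha\le 2/L$.

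For the $z$-half-step I would use that $z^{k+1}$ is the exact minimizer of $z\mapsto H(w^{k+1},z)=\tfrac12\|r-z\|_2^2+\delta\sum_i |z_i|/S(r_i)$, $r=y-Xw^{k+1}$, supplied in closed form by (\ref{altz}). Since the quadratic part has Hessian $I$, this objective is $1$-strongly convex, so the optimality condition $0\in\partial_z H(w^{k+1},z^{k+1})$ yields $H(w^{k+1},z^{k}) - H(w^{k+1},z^{k+1}) \ge \tfrac12\|z^{k+1}-z^{k}\|_2^2$. Combining the two half-step estimates and using $\|[w^{k+1};z^{k+1}]-[w^{k};z^{k}]\|_2^2 = \|w^{k+1}-w^{k}\|_2^2+\|z^{k+1}-z^{k}\|_2^2$ produces the sufficient-decrease inequality $H(w^{k+1},z^{k+1}) \le H(w^{k},z^{k}) - \rho_1\,\|[w^{k+1};z^{k+1}]-[w^{k};z^{k}]\|_2^2$ with the explicit constant $\rho_1 = \min\{\tfrac{1}{\alpha}-\tfrac{L}{2},\,\tfrac12\}>0$.

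The main obstacle is Lemma \ref{Quadratic Upper Bound} itself: producing a single, iterate-independent modulus $L$ for the quadratic upper bound of $w\mapsto H(w,z)$. The difficulty is the nonlinear factor $1/S(y-Xw)$, whose gradient carries the combination $S'/S^{2}$; bounding its Lipschitz constant requires exactly the two-sided estimates the smoothing provides, namely $S\ge\sqrt{\delta}/2$ and $|S'|\le 1$, together with boundedness of the iterates, which I would secure by noting that the decrease just established confines $(w^{k},z^{k})$ to a level set of $H$. A secondary point to settle is the ordering of the $z$-update: the clean telescoping above presumes the Gauss--Seidel order, in which the residual defining $z^{k+1}$ is formed from $w^{k+1}$ (matching the text's ``fix $w=w^{k}$ and optimize $z$''); I would verify that it is under this order that the constant $\rho_1$ is valid.
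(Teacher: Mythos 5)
Your proposal is correct, and it shares the paper's skeleton for the $w$-block while handling the $z$-block by a genuinely different argument. Like the paper, you split the decrease at the intermediate point $(w^{k+1},z^{k})$ and control the first bracket by Lemma \ref{Quadratic Upper Bound} together with $w^{k+1}-w^{k}=-\alpha\nabla_w H(w^{k},z^{k})$, arriving at the same coefficient $\frac{1}{\alpha}-\frac{L}{2}$. For the second bracket, the paper uses only $H(w^{k+1},z^{k})-H(w^{k+1},z^{k+1})\geq 0$ (bare optimality of $z^{k+1}$) and then recovers control of $\|z^{k+1}-z^{k}\|_2$ indirectly: the $z$-update is the thresholding map $T$ of (\ref{TX}), which is $2$-Lipschitz, and preconditioning gives $\|X\|_2=1$, so $\|z^{k+1}-z^{k}\|_2\leq 2\|w^{k+1}-w^{k}\|_2$ and hence $\rho_1=\frac{1}{5}\left(\frac{1}{\alpha}-\frac{L}{2}\right)$. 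You instead extract a quantitative decrease from the $z$-step via $1$-strong convexity of $z\mapsto\frac{1}{2}\|r-z\|_2^2+\delta\sum_i |z_i|/S(r_i)$ (valid: nonnegative weights plus exact minimization), obtaining $\rho_1=\min\left\{\frac{1}{\alpha}-\frac{L}{2},\,\frac{1}{2}\right\}$, which is never worse than the paper's constant and avoids both the Lipschitz analysis of $T$ and the reliance on preconditioning. Your ordering concern resolves exactly as you assumed: although Algorithm \ref{SARMAlgorithm} literally writes the $z^{k+1}$-update with $r^{k}$, the paper's own proof treats $z^{k+1}$ as the minimizer of $H(w^{k+1},\cdot)$, i.e.\ Gauss--Seidel order. (Also, your final inequality is stated in the meaningful decrease form $H(w^{k+1},z^{k+1})\leq H(w^{k},z^{k})-\rho_1\|\cdot\|_2^2$; the paper's displayed statement has the difference $H(w^{k+1},z^{k+1})-H(w^{k},z^{k})$ on the right, which is a sign typo given the monotone decrease it proves.)

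One side remark in your last paragraph is wrong, though fortunately unnecessary. You propose to secure an iterate-independent modulus $L$ for Lemma \ref{Quadratic Upper Bound} by confining the iterates to a level set of $H$. But level sets of $H$ are not bounded, since $H$ is not coercive: along the curve $z=y-Xw$ with $\|w\|_2\to\infty$ one has
\begin{align}
H(w,\,y-Xw)\;=\;\delta\left\|\frac{y-Xw}{S(y-Xw)}\right\|_1\;\leq\;\delta m, \notag
\end{align}
so sublevel sets contain unbounded sets; this is precisely why the paper must \emph{assume} boundedness of $\{[w^{k};z^{k}]\}$ in Theorem \ref{Apply KL Property} rather than derive it. The paper's proof of Lemma \ref{Quadratic Upper Bound} needs no such boundedness: because $z^{k}$ is produced by the thresholding step, $|z^{k}_i|\leq S(r^{k}_i)$, so the factor $|z^{k}_i|/S(r^{k}_i)\leq 1$ can be absorbed, and the remaining functions $\gamma=S'/S^2$ and $\kappa=S'/S$ are globally bounded and Lipschitz. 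If you want your writeup self-contained, replace the level-set argument by this observation.
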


\textbf{STEP2. (A Subgradient Lower Bound).}
\begin{theorem} \label{Subgradient Lower Bound}
For any $w^{k+1}, w^{k}, z^{k+1}, z^{k}$, there exists $ A^{k+1} = \nabla_w H(w^{k+1}, z^{k+1}) $, $ B^{k+1} \in \partial_z H(w^{k+1}, z^{k+1}) $ and $\rho_2 \in \mathbb{R}$ such that:
\begin{align}
\left\|[A^{k+1}; B^{k+1}]\right\|_2 \leq \rho_2 \left\|[w^{k+1}; z^{k+1}] - [w^{k}; z^{k}] \right\|_2 , \quad k=1,2, \cdots  \notag
\end{align}
\end{theorem}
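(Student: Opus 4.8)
The plan is to follow the subgradient-bound step of the proximal alternating minimization (PALM) convergence scheme, exploiting that the only nonsmoothness of $H$ sits in the $z$ block. First I would record the structure of the limiting subdifferential at the iterate. Since $S(\cdot) \geq \sqrt{\delta}/2 > 0$ everywhere and $S \in C^1$, the regularization term $\delta\sum_i |z_i|/S(y_i - x_i^T w)$ is continuously differentiable in $w$, and the sole nonsmoothness arises from the scalar factors $|z_i|$. Consequently $\partial H(w^{k+1},z^{k+1}) = \{\nabla_w H(w^{k+1},z^{k+1})\}\times\partial_z H(w^{k+1},z^{k+1})$, so setting $A^{k+1}:=\nabla_w H(w^{k+1},z^{k+1})$ automatically lands in the $w$ block. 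It then remains to (i) produce an admissible $B^{k+1}\in\partial_z H(w^{k+1},z^{k+1})$ and (ii) bound $\|A^{k+1}\|_2$ and $\|B^{k+1}\|_2$ each by a constant multiple of $\|[w^{k+1};z^{k+1}]-[w^{k};z^{k}]\|_2$; since this joint norm dominates $\|w^{k+1}-w^{k}\|_2$, it suffices to control both blocks by $\|w^{k+1}-w^{k}\|_2$.

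For the $z$ block I would exploit that the update $z^{k+1}$ exactly minimizes $H(w^{k},\cdot)$, whose proximal optimality condition reads $z^{k+1}_i - r^{k}_i + \tfrac{\delta}{S(r^{k}_i)} s_i = 0$ for some selection $s_i \in \partial|z^{k+1}_i| \subseteq [-1,1]$, where $r^{k}=y-Xw^{k}$. Reusing the same $s_i$ inside $\partial_{z_i}H(w^{k+1},z^{k+1}) = (z^{k+1}_i - r^{k+1}_i) + \tfrac{\delta}{S(r^{k+1}_i)}\partial|z^{k+1}_i|$, with $r^{k+1}=y-Xw^{k+1}$, I set $B^{k+1}_i := (z^{k+1}_i - r^{k+1}_i) + \tfrac{\delta}{S(r^{k+1}_i)}s_i$. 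Subtracting the two expressions cancels the $z^{k+1}_i$ term and yields $B^{k+1}_i = (r^{k}_i - r^{k+1}_i) + \delta\big(\tfrac{1}{S(r^{k+1}_i)} - \tfrac{1}{S(r^{k}_i)}\big)s_i$. Here $|r^{k}_i - r^{k+1}_i| = |x_i^T(w^{k+1}-w^{k})| \leq \|x_i\|\,\|w^{k+1}-w^{k}\|$, and since $1/S$ is Lipschitz with constant at most $4/\delta$ (from $|S'|\leq 1$ and $S\geq\sqrt{\delta}/2$), the second term is bounded by $4\|x_i\|\,\|w^{k+1}-w^{k}\|$. Summing squares over $i$ and using $\sum_i\|x_i\|^2 = \|X\|_F^2 = n$ after preconditioning ($X^TX = I$) gives $\|B^{k+1}\|_2 \leq 5\sqrt{n}\,\|w^{k+1}-w^{k}\|_2$.

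For the $w$ block I would use that, by the identity $|z^{k}_i|\,S'(r^{k}_i)=z^{k}_i$ verified earlier, Algorithm~\ref{SARMAlgorithm} updates $w$ via an exact gradient step $w^{k+1}=w^{k}-\alpha\nabla_w H(w^{k},z^{k})$, hence $\nabla_w H(w^{k},z^{k}) = \tfrac{1}{\alpha}(w^{k}-w^{k+1})$. Writing $A^{k+1} = \big(\nabla_w H(w^{k+1},z^{k+1}) - \nabla_w H(w^{k},z^{k})\big) + \tfrac{1}{\alpha}(w^{k}-w^{k+1})$ and applying Lipschitz continuity of $\nabla_w H$ with some constant $M$ gives $\|A^{k+1}\|_2 \leq (M + 1/\alpha)\,\|[w^{k+1};z^{k+1}]-[w^{k};z^{k}]\|_2$. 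Combining the two blocks, $\rho_2 = \sqrt{(M+1/\alpha)^2 + 25n}$ validates the claim.

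The main obstacle is establishing the Lipschitz constant $M$ of $\nabla_w H$. Its regularization part $\delta X^T[\,|z_i|\,S'(r_i)/S^2(r_i)\,]$ is not \emph{globally} Lipschitz, because the bounded, Lipschitz factor $\psi(r):=S'(r)/S^2(r)$ is multiplied by the unbounded $|z_i|$; I would therefore first invoke the sufficient-decrease result (Theorem~\ref{Sufficient decrease}) together with the coercivity of $H$ to confine the entire iterate sequence to a bounded set on which $|z_i|\leq R$, and there verify that $\psi$ is Lipschitz (a product of the Lipschitz function $S'$ and the bounded, Lipschitz function $1/S^2$, both controlled by $S\geq\sqrt{\delta}/2$). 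This boundedness-and-regularity bookkeeping, combined with carefully tracking the residual mismatch between $r^{k}$ and $r^{k+1}$ in the $z$ block, is where essentially all the effort lies; once $M$ is in hand, the estimates above are routine.
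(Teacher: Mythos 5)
Your $z$-block is correct, and in fact slightly more robust than what the paper does: the cancellation $B^{k+1}_i = (r^{k}_i - r^{k+1}_i) + \delta\bigl(\tfrac{1}{S(r^{k+1}_i)} - \tfrac{1}{S(r^{k}_i)}\bigr)s_i$ obtained by reusing the selection $s_i$ from the prox optimality condition is valid, and it degenerates to the paper's own choice (the paper updates $z$ against the \emph{current} residual $r^{k+1}$, so its optimality condition makes $B^{k+1}=\mathbf{0}$ admissible, which is all it uses). The genuine gap is in your $w$-block. You bound $\|A^{k+1}\|_2$ through a Lipschitz constant $M$ of $\nabla_w H$ jointly in $(w,z)$, correctly note that no global such constant exists because the unbounded factor $|z_i|$ multiplies $\gamma(r_i)=S'(r_i)/S^2(r_i)$, and then propose to repair this by combining sufficient decrease with \emph{coercivity} of $H$ to confine the iterates to a bounded set. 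But $H$ is not coercive: along the ray $z = y - Xw$ with $\|w\|_2 \to \infty$, the quadratic term vanishes and the regularizer satisfies $\delta\sum_i |y_i - x_i^T w|/S(y_i - x_i^T w) \le \delta m$, since $|x|/S(x)\le 1$ for every $x$ (this is exactly $(|x|-\sqrt{\delta})^2 \ge 0$). Hence $H$ is bounded by $\delta m$ on an unbounded set, its sublevel sets are unbounded, and sufficient decrease cannot give you boundedness of the iterates. This is precisely why the paper must \emph{assume} boundedness of $\{[w^k;z^k]\}$ in Theorem \ref{Apply KL Property}; Theorem \ref{Subgradient Lower Bound} is stated, and proved in the paper, with no such hypothesis and with an explicit $\rho_2$ uniform in $k$.

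The paper closes exactly this hole with an algorithmic, not a compactness, argument: because the $z$-update has the form $z_i = T(r_i)$ with $|T(x)| \le S(x)$, the dangerous ratio satisfies $|z_i|/S(r_i) \le 1$ along the iterates. One can therefore factor $|z_i| = \tfrac{|z_i|}{S(r_i)}\cdot S(r_i)$ in the difference of regularization gradients and reduce everything to genuinely global constants: the Lipschitz constant $L_1$ of $S$, the bound $L_2$ on $\gamma = S'/S^2$, and the Lipschitz constant $L_3$ of $\kappa = S'/S$. Combined with rewriting $\nabla_w H(w^{k+1},z^{k+1})$ through the gradient-step identity (as you also do), this yields $\|A^{k+1}\|_2 \le \bigl(\tfrac{1-\alpha}{\alpha} + \delta(L_1L_2+L_3)\bigr)\|w^{k}-w^{k+1}\|_2 + (\alpha+\delta L_2)\|z^{k}-z^{k+1}\|_2$, hence a $\rho_2$ independent of any boundedness assumption. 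To salvage your route, replace the coercivity step by this same inequality $|z^{k}_i| \le S(\cdot)$ — your own $z$-block already contains the needed Lipschitz estimates for $1/S$, so the modification is local. Merely assuming bounded iterates instead would prove a weaker statement, with a constant depending on the particular trajectory, which is not what Theorem \ref{Subgradient Lower Bound} asserts.
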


\textbf{STEP3. (Apply the Kurdyka-Łojasiewicz (KL) Property).} 

Assume that the sequence \(\{[w^{k}; z^{k}]\} \subset \mathbb{R}^{m+n}\) generated by Algorithm \ref{SARMAlgorithm} is bounded. 
Then proof the objective function \(H(w, z)\) is a KL function. 
Then, following the standard KL framework \cite{bolte2014proximal, attouch2010proximal,attouch2013convergence,liu2020optimization}, 
it can be shown that \(\{[w^{k}; z^{k}]\}\) satisfies a finite length property, and therefore, it is a Cauchy sequence.
\begin{theorem} \label{Apply KL Property}
For any $w^{k+1}, w^{k}, z^{k+1}, z^{k}$, if \(\{[w^{k}; z^{k}]\} \subset \mathbb{R}^{m+n}\) obtained by Algorithm \ref{SARMAlgorithm} is bounded, 
then the following  results holds:
\begin{enumerate}
    \item The sequence \(\{[w^{k}; z^{k}]\}\) has finite length, i.e.
    \begin{align}
    \sum_{k=1}^{\infty} \left\| [w^{k+1}; z^{k+1}] - [w^{k}; z^{k}] \right\|_2 < + \infty  \notag
    \end{align}
    \item \(\{[w^{k}; z^{k}]\}\)  converges to a critical point \(\{[w^{*}; z^{*}]\}\) of $H(w, z)$.
\end{enumerate}
\end{theorem}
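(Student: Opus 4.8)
The plan is to follow the now-standard Kurdyka–Łojasiewicz (KL) recipe for nonconvex descent schemes (as in \cite{bolte2014proximal,attouch2013convergence,liu2020optimization}), combining the two ingredients already in hand — the sufficient decrease of Theorem \ref{Sufficient decrease} and the subgradient bound of Theorem \ref{Subgradient Lower Bound} — with the KL property of $H$. Write $x^k = [w^k; z^k]$ and $H^k = H(w^k, z^k)$, and read Theorem \ref{Sufficient decrease} in its descent form $H^k - H^{k+1} \geq a\|x^{k+1}-x^k\|_2^2$ for a constant $a>0$. First I would record two immediate consequences: since $H \geq 0$ is bounded below and $\{H^k\}$ is nonincreasing, $H^k \downarrow H^*$ for some $H^* \geq 0$; and telescoping the decrease inequality gives $\sum_k \|x^{k+1}-x^k\|_2^2 < +\infty$, so in particular $\|x^{k+1}-x^k\|_2 \to 0$.

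Next I would analyze the set $\omega$ of limit points of $\{x^k\}$, which is nonempty and compact since the sequence is bounded by hypothesis, and connected because $\|x^{k+1}-x^k\|_2 \to 0$. Along any subsequence $x^{k_j}\to \bar{x}$, continuity of $H$ (the smoothing (\ref{SMOOTHF}) keeps the denominator $S(\cdot)\geq \tfrac{\sqrt{\delta}}{2}>0$, so $H$ is continuous, not merely lower semicontinuous) yields $H(\bar{x}) = \lim_j H^{k_j} = H^*$; hence $H \equiv H^*$ on $\omega$. Moreover, Theorem \ref{Subgradient Lower Bound} furnishes $[A^{k+1};B^{k+1}] \in \partial H(x^{k+1})$ with $\|[A^{k+1};B^{k+1}]\|_2 \leq \rho_2 \|x^{k+1}-x^k\|_2 \to 0$; passing to the limit and invoking closedness of the graph of the limiting subdifferential $\partial H$ shows $0 \in \partial H(\bar{x})$, so $\omega \subseteq \operatorname{crit} H$.

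The step requiring the most care is verifying that $H$ is a KL function, which I would obtain by showing that $H$ is semialgebraic and then appealing to the fact that semialgebraic functions enjoy the KL property \cite{bolte2014proximal}. The squared-residual term is polynomial, and the regularizer $\delta\|z/S(y-Xw)\|_1$ is a finite sum of absolute values of the quotients $z_i/S(y_i - x_i^T w)$; since $S$ is piecewise polynomial with semialgebraic graph and is bounded below by $\tfrac{\sqrt{\delta}}{2}$, each quotient has semialgebraic graph, and finite sums, absolute values and compositions preserve semialgebraicity, so $H$ is semialgebraic. Applying the uniformized KL property on the compact set $\omega$ on which $H$ is constant then produces a single concave desingularizing function $\varphi$ and a neighborhood of $\omega$ on which $\varphi'(H(x)-H^*)\,\operatorname{dist}(0,\partial H(x)) \geq 1$.

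Finally I would run the telescoping finite-length estimate. For $k$ large enough that $x^k$ lies in the KL neighborhood, concavity of $\varphi$ gives $\varphi(H^k - H^*) - \varphi(H^{k+1}-H^*) \geq \varphi'(H^k - H^*)(H^k - H^{k+1})$; bounding $\varphi'(H^k-H^*) \geq 1/\operatorname{dist}(0,\partial H(x^k)) \geq 1/(\rho_2\|x^k - x^{k-1}\|_2)$ via Theorem \ref{Subgradient Lower Bound} and $H^k - H^{k+1} \geq a\|x^{k+1}-x^k\|_2^2$ via Theorem \ref{Sufficient decrease}, then using $2\sqrt{uv}\leq u+v$, yields $2\|x^{k+1}-x^k\|_2 \leq \|x^k - x^{k-1}\|_2 + C\big(\varphi(H^k-H^*)-\varphi(H^{k+1}-H^*)\big)$ for a suitable constant $C>0$. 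Summing over $k$ makes the right-hand side telescope (using $\varphi \geq 0$), which proves $\sum_k \|x^{k+1}-x^k\|_2 < +\infty$, i.e. the finite-length property. Finite length forces $\{x^k\}$ to be Cauchy, hence convergent to some $x^* = [w^*; z^*]$; since $x^*$ is then a limit point, it lies in $\operatorname{crit} H$ by the previous paragraph, establishing both conclusions of the theorem.
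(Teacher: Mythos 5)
Your proposal is correct and follows essentially the same route as the paper's own proof: sufficient decrease plus the subgradient bound, limit-point analysis giving a nonempty compact connected set of critical points on which $H$ is constant, semialgebraicity of $H$ to obtain the KL property, the uniformized KL inequality on that compact set, and the concavity/AM--GM telescoping argument yielding finite length, the Cauchy property, and convergence to a critical point. The only detail the paper treats separately (and you should too, since the KL inequality requires the strict inequality $H(u^k) > H^*$) is the degenerate case where $H(u^{k^*}) = H^*$ at some finite $k^*$, in which case the sufficient decrease condition forces the sequence to be eventually constant and the conclusion is immediate.
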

The complete proof is provided in the appendix.

\subsection{Error Bound Analysis}

In this subsection, we provide a theoretical error bound for SARM. Prior to that, we make several simple modifications to the outlier representation in (\ref{attackted}). 
We denote the true \( w \) and \( z \) as \( w^{true} \) and \( z^{true} \), and still assume that the samples follow the relationship:
\begin{equation} \label{modified attacked}
y = X w^{true}  + e + z^{true}.
\end{equation}
In addition, we assume that the random noise \( e \) and the outlier variable \( z \) satisfy:
\begin{equation} \label{condition1}
 \{ i| e_i \neq 0 \} \cap \{ i| z_i^{true} \neq 0 \} = \emptyset, 
\end{equation}
which is imposed to ensure a clear separation between the informative (clean) samples and the outliers. 
Moreover, since the observation errors associated with outliers are typically large, we assume that:
\begin{align} \label{condition2}
\min_{w \in \mathbb{R}^n} \frac{1}{2}\|y-Xw\|_2^2 \geq \varepsilon.
\end{align}
Condition (\ref{condition2}) is used to facilitate the subsequent derivation of the error bound, and it is typically a natural assumption in practice.
$ \| e\|_2^2$ is denoted as \( \varepsilon \) and the sparsity of $z$, i.e. $|\{ i| z_i^{true} \neq 0 \}| $ is denoted as $ k^{true}$.

We also present a definition that is crucial in sparse robust models: the Restricted Isometry Property (R.I.P.).
\begin{definition} \label{RIP}
    \textbf{(Restricted Isometry Property)} The Restricted Isometry Property of order \( s \) with constant \( \mu_s \in (0,1) \) for a preconditoned matrix \( X \in \mathbb{R}^{m \times n} \) states that for all \( s \)-sparse vectors \( [w;z] \in \mathbb{R}^{m+n} \), the following inequality holds:
\begin{align}
(1 - \mu_s) \| [w;z] \|_2^2 \leq \| [I,X^T] [w;z] \|_2^2 \leq (1 + \mu_s) \| [w;z] \|_2^2.
\end{align}
\end{definition}

Based on these preliminaries, we have the following theorem.
\begin{theorem} \label{upper bound}
Suppose that for any \( \delta > 0 \), the optimization problem (\ref{SARMOPT2}) has a unique global solution \( [ w^*_{\delta}; z^*_{\delta} ] \) and there exists a sufficiently large compact set \( U \subset \mathbb{R}^{m+n} \) such that $ [ w^*_{\delta}; z^*_{\delta} ] \in U, \forall \delta>0$.
Then there exists a a lower bound \( \delta^{lb} >0 \), such that when \( \delta \geq \delta^{lb} \),  $[w^*_{\delta};z^*_{\delta}]$ satisfies \( \frac{1}{2}\|y-Xw^*_{\delta}-z^*_{\delta}\|_2^2 \geq \varepsilon \). Moreover, for any $\delta \geq \delta^{lb}$, the following inequality holds:
\begin{align}
    \left\|w^* -  w^{true}\right\|_2 \leq \min_{\lambda \in (1, +\infty)} \frac{ \left( \lambda - \frac{1}{\lambda} + 1 \right ) \cdot \sqrt{m\delta} + \varepsilon}{\sqrt{1- \mu_{s} }}, \label{error upper bound expression}
\end{align}
where  
\begin{align}
s = \left(\frac{\lambda^2}{ \lambda^2 - 1 } + 1 \right) k^{true} + n.
\end{align}
\end{theorem}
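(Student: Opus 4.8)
My plan rests on a closed-form description of the optimal $z^*_\delta$ in terms of the residual $r^* := y - Xw^*_\delta$. Because $[w^*_\delta; z^*_\delta]$ is a global minimizer it is, in particular, optimal in $z$ for the fixed $w = w^*_\delta$, so $z^*_\delta$ must coincide with the proximal map (\ref{altz}): $z^*_i = 0$ when $|r^*_i| \le \sqrt{\delta}$, and $z^*_i = r^*_i - \delta/r^*_i$ when $|r^*_i| > \sqrt{\delta}$. The first thing I would extract is the uniform bound $|(r^* - z^*)_i| \le \sqrt{\delta}$ for every $i$ (it equals $|r^*_i|$ in the first case and $\delta/|r^*_i| < \sqrt{\delta}$ in the second), which yields the crucial estimate $\|y - Xw^*_\delta - z^*_\delta\|_2 \le \sqrt{m\delta}$. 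A companion computation shows that any coordinate with $\sqrt{\delta} < |r^*_i| < \lambda\sqrt{\delta}$ satisfies $|z^*_i| = |r^*_i| - \delta/|r^*_i| < (\lambda - 1/\lambda)\sqrt{\delta}$; this will control the ``medium'' part of $z^*$ later.

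For the residual lower bound I would argue by compactness. Since $[w^*_\delta; z^*_\delta]$ remains in the compact set $U$, the residual $r^*$ is bounded by a constant $R$ independent of $\delta$, so once $\delta > R^2$ the thresholding above forces $z^*_\delta = 0$; the objective then reduces to $\tfrac{1}{2}\|y-Xw\|_2^2$, whose minimum is at least $\varepsilon$ by (\ref{condition2}). This produces a $\delta^{lb}$ for which $\tfrac{1}{2}\|y-Xw^*_\delta - z^*_\delta\|_2^2 \ge \varepsilon$ whenever $\delta \ge \delta^{lb}$. The second ingredient is the optimality comparison $H(w^*_\delta,z^*_\delta) \le H(w^{true},z^{true})$. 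Evaluating the right-hand side, the data-fit term is $\tfrac{1}{2}\|e\|_2^2$, and since the smoothing function obeys $S(x) \ge |x|$ (so every scaled coordinate has modulus at most $1$) the regularizer is at most $\delta k^{true}$; hence $H(w^*_\delta,z^*_\delta) \le \tfrac{1}{2}\|e\|_2^2 + \delta k^{true}$.

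The heart of the proof is a thresholded support decomposition governed by $\lambda$. Fix $\lambda > 1$ and set $T_1 = \{ i : |r^*_i| \ge \lambda\sqrt{\delta}\}$. On $T_1$ each active term of the regularizer equals $1 - \delta/(r^*_i)^2 \ge 1 - 1/\lambda^2$, so comparing the regularizer against the two bounds above (the residual lower bound $\varepsilon$ and the optimality bound) gives $|T_1| \le \tfrac{\lambda^2}{\lambda^2-1}k^{true}$. I then set $T = \operatorname{supp}(z^{true}) \cup T_1$, so that $|T| \le (\tfrac{\lambda^2}{\lambda^2-1}+1)k^{true}$ and the vector $[\Delta w; (\Delta z)_T]$ is $s$-sparse with $s = n + |T|$, where $\Delta w = w^*_\delta - w^{true}$ and $\Delta z = z^*_\delta - z^{true}$. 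Applying the RIP of Definition \ref{RIP} gives $\sqrt{1-\mu_s}\,\|\Delta w\|_2 \le \|X\Delta w + (\Delta z)_T\|_2$. To bound the right-hand side I use $X\Delta w + \Delta z = e - (r^* - z^*)$, whence $X\Delta w + (\Delta z)_T = e - (r^*-z^*) - (\Delta z)_{T^c}$; off $T$ we have $z^{true}=0$ and $|z^*_i| < (\lambda-1/\lambda)\sqrt{\delta}$, so the triangle inequality bounds the three pieces by $\|e\|_2$, $\sqrt{m\delta}$, and $(\lambda-1/\lambda)\sqrt{m\delta}$ respectively. This gives $\|\Delta w\|_2 \le [(\lambda - 1/\lambda + 1)\sqrt{m\delta} + \|e\|_2]/\sqrt{1-\mu_s}$ for every $\lambda > 1$ (with $\|e\|_2$ the additive noise contribution appearing as the last summand of (\ref{error upper bound expression})), and taking the infimum over $\lambda$ yields the stated bound.

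The main obstacle is precisely that $z^*_\delta$ need not be sparse, so the RIP cannot be invoked on $[\Delta w; \Delta z]$ directly. Everything hinges on choosing the split level $\lambda\sqrt{\delta}$ so that it simultaneously (i) caps the cardinality of the ``large-residual'' set $T_1$ through the regularizer --- fixing the admissible sparsity order $s$ and hence $\mu_s$ --- and (ii) controls the discarded tail $(\Delta z)_{T^c}$ in the $\ell_2$ sense through the uniform per-coordinate estimate $(\lambda-1/\lambda)\sqrt{\delta}$. These two requirements pull $\lambda$ in opposite directions, which is what forces the $\min_\lambda$ in the statement, and the bookkeeping that keeps the supports of $z^{true}$, of $T_1$, and of the noise consistent with the RIP order $s$ is the step I expect to demand the most care.
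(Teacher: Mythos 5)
Your proposal is correct, and it reaches the theorem by a genuinely different route from the paper in all three of its main steps. For the existence of $\delta^{lb}$, the paper invokes Berge's Maximum Theorem (Theorem \ref{Berge's Maximum Theorem}) to obtain continuity of $\delta \mapsto [w^*_{\delta}; z^*_{\delta}]$ and then studies the limits $\delta \to 0$ and $\delta \to +\infty$ (Lemma \ref{deltalimitopt}); your argument --- compactness of $U$ bounds the residual uniformly in $\delta$, so for large $\delta$ the proximal thresholding forces $z^*_{\delta}=0$ and (\ref{condition2}) applies directly --- is more elementary and bypasses the correspondence-continuity machinery entirely, at the price of a potentially much larger (more conservative) $\delta^{lb}$ than the crossing point the paper's continuity argument aims at. For the cardinality bound, the paper first proves that the penalty solution also solves the constrained problem (\ref{equalformoptimization}) and compares regularizers there; you stay in penalty form and use the residual lower bound $\frac{1}{2}\|y-Xw^*-z^*\|_2^2 \geq \varepsilon \geq \frac{1}{2}\|e\|_2^2$ to absorb the data-fit discrepancy, which yields the same $|T_1| \leq \frac{\lambda^2}{\lambda^2-1}k^{true}$ with less apparatus. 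For the error bound itself, the paper uses the stationarity condition $\nabla_w H(w^*,z^*)=0$ together with $X^TX=I$ to produce the vector $(w^{true}-w^*)+X^T(z^{true}-z^*_a)$ and applies Definition \ref{RIP} to the operator $[I,X^T]$; you instead use the purely algebraic identity $X\Delta w + \Delta z = e - (r^*-z^*)$ restricted to $T$, working in residual space with the operator $[X,I]$. Both decompositions give the same sparsity budget $s$ and the same three error terms, and your route has the advantage of never invoking first-order optimality in $w$ (only the exact prox characterization in $z$ and global optimality).

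One step needs repair, and it is exactly the point where the two routes diverge: Definition \ref{RIP} as stated controls $\|w + X^Tz\|_2$, not $\|Xw+z\|_2$, so the inequality $\sqrt{1-\mu_s}\,\|[\Delta w;(\Delta z)_T]\|_2 \leq \|X\Delta w + (\Delta z)_T\|_2$ does not literally follow by ``applying the RIP of Definition \ref{RIP}.'' It is nonetheless true in your setting: since the preconditioned $X$ satisfies $X^TX=I$, the matrix $XX^T$ is an orthogonal projector, hence $\|X^Tz\|_2 \leq \|z\|_2$ and $\|Xw+z\|_2^2 = \|w+X^Tz\|_2^2 + \left(\|z\|_2^2 - \|X^Tz\|_2^2\right) \geq \|w+X^Tz\|_2^2$; alternatively, premultiply your identity by $X^T$ (using $\|X^T\|_2 = 1$ so the right-hand bound is unchanged) and invoke Definition \ref{RIP} verbatim, which is essentially what the paper's stationarity route does automatically. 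Finally, your additive noise term comes out as $\|e\|_2$ rather than $\varepsilon = \|e\|_2^2$; this mismatch is inherited from the paper itself, whose own proof bounds $\|X^Te\|_2$ by $\|e\|_2$ yet writes $\varepsilon$, so it is not a defect of your argument.
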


\begin{remark}
    The above theorem requires several remarks. First, the assumption that \(U\) is a compact set is made to satisfy the conditions required for applying Berge's theorem. 
    Second, assuming global optimality is a standard practice in the theoretical analysis of non-convex optimization. 
    Third, if the optimal solution is not unique, it becomes theoretically difficult to quantify the distance between the estimated parameters \( [ w^*_{\delta}; z^*_{\delta} ] \) and the true parameters \( [w^{true}, z^{true} ]\).
    Finally, it should be acknowledged that, because of the theoretical difficulties associated with non-convex optimization, the assumptions made to derive the error bound are considerably stronger than those used to prove convergence.
\end{remark}

%--------------------------------------------------------------------------------------------------
% Simulation Experiments
\section{Simulation Experiments} \label{Simulation}

Simulation experiments are conducted in this section to evaluate and compare the performance of SARM against several state-of-the-art robust regression approaches, with a focus on both robustness and computational efficiency.
Our study concentrates on the scenario where the feature dimension is less than the sample size ($n<m$).
In the experiments, we assume the standard deviation $\sigma$ of the inlier noise is known as prior information, which is a common assumption in robust regression research \cite{liu2018robust, papageorgiou2015robust}.

%--------------------------------------------------------------------------------------------------
% Comparative Models
\subsection{Comparative Models} \label{Comparative Models}
The main comparative models considered include:
\begin{itemize} 
    \item [(1)] \textbf{Ideal}: Ideal solution where $z$ is assumed to be exactly known,i.e.:
    \begin{align}
        w_{Ideal} = \underset{w \in \mathbb{R}^n}{\arg\min}\| y - Xw - z\|_2^2 \notag
    \end{align}
    \item [(2)] \textbf{Oracle}: Oracle solution \cite{candes2008highly} under the oracle knowledge of the outlier support $\mathcal{S} = supp(z) = \left\{  i| z_i \neq 0\right\}$:
    \begin{align}
        w_{Oracle} = \underset{w \in \mathbb{R}^n}{\arg\min}\| y_{\mathcal{S}^C} - X_{\mathcal{S}^C}w\|_2^2, \notag
    \end{align}
    where $\mathcal{S}^C$ denotes the complement of $\mathcal{S}$.
    \item [(3)] \textbf{MLR}: Multiple Linear Regression (MLR) solved via Ordinary Least Squares (OLS):
    \begin{align}
        w_{MLR} = \underset{w \in \mathbb{R}^n}{\arg\min}\| y - Xw\|_2^2, \notag
    \end{align}
    \item [(4)] \textbf{IRLS}: Generalized M-estimators with Bisquare weighting function \cite{maronna2019robust,beaton1974fitting}.  It is solved  Iteratively Reweighted Least Squares (IRLS) \cite{holland1977robust}, and we set $c = 4.685$, which is the default value.
    \item [(5)] \textbf{L1}:  Least Absolute Deviation (LAD, $\ell_1$ estimator):
    \begin{align}
        w_{\ell_1} = \underset{w \in \mathbb{R}^n}{\arg\min}\| y - Xw\|_1. \notag
    \end{align}
    It can be expressed as a Linear Programming (LP) by introducing auxiliary variables $v_1, v_2 \in \mathbb{R}^m$:
    \begin{equation}
    \begin{aligned}
        &\underset{w \in \mathbb{R}^n,v_1, v_2 \in \mathbb{R}^m}{\arg\min} \quad v_1 + v_2 \\
         s.t. y - &Xw = v_1-v_2, v_1, v_2 \geq 0.
    \end{aligned}
    \end{equation}
    \item [(6)] \textbf{TRLM}: Truncated Loss Regression via Majorization-minimization algorithm (TLRM) \cite{huang2024large}. The TRLM algorithm applied to linear regression has a similar form as Algorithm \ref{AROSI}. $Loss$ is specified as $\|\cdot\|_2^2$. \(\alpha\) is set to $5\sigma$.
    \item [(7)] \textbf{SOCP}: Second-Order Cone Programming (SOCP):
    \begin{equation}
    \begin{aligned}
        \underset{w \in \mathbb{R}^n,z \in \mathbb{R}^m}{\arg\min}\| z\|_1, \quad s.t. \| y-Xw-z\|_2 \leq \sqrt{m}\sigma
    \end{aligned}
    \end{equation}
    It can be expressed in the form mentioned in \cite{papageorgiou2015robust} Section 2.2. We solve it by ECOS (the Embedded Conic
Solver) \cite{domahidi2013ecos}.
    \item [(8)] \textbf{L1reg}: $\ell_1$ regularization problem \cite{jin2010algorithms} (equivalent to the Huber regression model):
    \begin{align}
        w_{\ell_1reg} = \underset{w \in \mathbb{R}^n}{\arg\min}\| y - Xw - z\|_2^2 + \lambda\|z \|_1, \notag
    \end{align}
    where $\lambda = \frac{\sigma\sqrt{2log(m)}}{3}$ (consistent with \cite{jin2010algorithms}).
        we solve it by transforming it into:
    \begin{equation}
    \begin{aligned}
        \underset{w \in \mathbb{R}^n,z_1 \in \mathbb{R}^m, z_2 \in \mathbb{R}^m}{\arg\min} \| y - &Xw - z_1 + z_2\|_2^2 + \lambda z_1 + \lambda z_2, \\
        s.t. &z_1 \geq 0 ,z_2 \geq 0.
    \end{aligned}
    \end{equation}
    We accelerate the optimization by replacing the solver from ADMM (used in \cite{papageorgiou2015robust}) to ECOS \cite{domahidi2013ecos}.
    \item [(9)] \textbf{GARD}: Greedy Algorithm for Robust Denoising (GARD) \cite{papageorgiou2015robust}. It solves
    \begin{align}
    \min_{w \in \mathbb{R}^n,z \in \mathbb{R}^m} \|z\|_0, \quad s.t. \quad \|y -[X,I][w;z]\|_2 \leq \sqrt{m}\sigma,
    \end{align}
    by a greedy Algorithm, which identifies and eliminates potential outliers one by one using Orthogonal Matching Pursuit (OMP).
    \item [(10)] \textbf{IPOD}:  Thresholding-based Iterative Procedure for Outlier Detection (IPOD, or ${\Theta}$-IPOD) \cite{she2011outlier}. It enhances robustness through alternating iterations and hard thresholding, relying on a preliminary robust regression to provide an initial estimate. Following \cite{liu2018robust}, we use $\ell_1$ estimator as the initial estimate.
    \item [(11)] \textbf{AROSI}:  The Algorithm for Robust Outlier Support Identification (AROSI) \cite{liu2018robust}. It solves (\ref{L01}) by setting the loss function as \( \| \cdot \|_1 \) in Algorithm \ref{AROSI}. \(\alpha\) is set to $5\sigma$. The solution method is the same as for L1.
\end{itemize}

% There still exist some classical robust regression models such as LMedS, LTS, and RANSAC. However, the inherent combinatorial complexity of these algorithms renders them impractical for solving high-dimensional problems \cite{liu2018robust,papageorgiou2015robust}.
For SARM, the parameter \(\delta\) is set to $6\sigma^2$.
The convergence threshold for all models is set to 1e-6.
The theoretical complexities of SARM and the aforementioned algorithms are summarized in Table \ref{tab:timecomplexity}. In general, SARM demonstrates some theoretical advantages. However, since the number of iterations in practice often varies, the specific computational efficiency of the model remains to be experimentally validated.

\begin{table}[htbp]
\centering
\caption{Theoretical Computational Complexity Comparison}
\setlength{\tabcolsep}{12pt} 
\renewcommand{\arraystretch}{1.3} 
\begin{tabular}{|c|c|c|}
\hline
\textbf{Algorithm} & \textbf{Preprocessing Complexity} & \textbf{Iteration Complexity} \\ 
\hline
MLR    & None                               & $\mathcal{O}(m n^2 + n^3)$ \\ 
\hline
IRLS   & None                               & $\mathcal{O}(m n^2 + n^3)$ per step \\  
\hline
L1     & None                               & $\mathcal{O}(m^3)$ per step \cite{liu2018robust, nesterov2018lectures} \\ 
\hline
TRLM   & None                               & $\mathcal{O}(m n^2 + n^3)$ per step \\ 
\hline
SOCP   & None                               & $\mathcal{O}((m + n)^{2.5} m)$ \cite{papageorgiou2015robust} \\  
\hline
L1reg  & None                               & $\mathcal{O}(m^3)$ per step \cite{liu2018robust} \\ 
\hline
GARD   & None                               & $\mathcal{O}\left(\frac{n^3}{3} + \frac{K^3}{2} + (m + 3K) n^2 + 3 K m n \right)$ \cite{papageorgiou2015robust} \\ 
\hline
IPOD   & $\mathcal{O}(m^3 + m n^2)$         & $\mathcal{O}(m n)$ per step \cite{liu2018robust, she2011outlier} \\ 
\hline
AROSI  & None                               & $\mathcal{O}(m^3)$ per step \cite{liu2018robust}  \\ 
\hline
SARM   & $\mathcal{O}\left(\frac{n^3}{3} + m n^2\right)$ & $\mathcal{O}(m n)$ per step \\ 
\hline
\end{tabular}
\begin{tablenotes}
    \footnotesize
    \item The complexities of L1 and L1reg are based on Interior Point Method (IPM) \cite{nesterov2018lectures}. Due to sparsity, the actual computational complexity is lower than the theoretical value.
    \item For MLR, GARD, and SOCP, total complexity is shown; for other methods, complexity depends on iteration count.
    \item For GARD, \(K\) denotes the number of iterations, i.e., the number of outliers identified.
\end{tablenotes}
\label{tab:timecomplexity}
\end{table}

%--------------------------------------------------------------------------------------------------
% Implementation Details and Experimental Setup
\subsection{Implementation Details and Experimental Setup}

All experiments are conducted on a server equipped with Intel(R) Xeon(R) Silver 4410Y CPUs, 128.0 GB usable RAM. We use 50-core multithreaded computing for the implementation of the experiments.

The simulation experiments adopt settings drawn from various existing studies to demonstrate the broad applicability of SARM.

Our simulation experimental setup is as follows:
\begin{enumerate}
  \item \textbf{Step 1}: Generate a random matrix \(X \in \mathbb{R}^{m \times n} \).
  \item \textbf{Step 2}: Generate \(w \in \mathbb{R}^n\) by i.i.d. \( \mathcal{N}(0, \sigma_w^2) \). Compute \(Xw\).
  \item \textbf{Step 3}: Generate inlier noise \(e = (e_1, \ldots, e_m)\) and add $e$ to \(Xw\), with \(e_i\) i.i.d. \(\mathcal{N}(0, \sigma^2)\).
  \item \textbf{Step 4}: Select a proportion \( p \) to define the number of outlier samples $k= round(p \cdot m)$;
  \item \textbf{Step 5}: Randomly and uniformly select \(k\) outlier locations and apply corruptions. Obtain $y$;
  \item \textbf{Step 6}: Estimate \(w\) utilizing different robust models.
\end{enumerate}

Following the above procedure, we designed a total of seven experimental setups.
\begin{enumerate}
  \item \textbf{Type 1}: \(m = 512\), \(\sigma_w = 1\), \(X_{ij} \in \mathcal{N}(0, 1)\) and i.i.d., \(\sigma = \text{median}(|Xw|)/16\) \cite{candes2008highly}. Corruptions are obtained from $0.5 \times \mathcal{N}(12\sigma, (4\sigma)^2) + 0.5 \times \mathcal{N}(-12\sigma, (4\sigma)^2)$ \cite{liu2018robust}.  \(n \in \{16, 32, 64, 128\}\).
  \item \textbf{Type 2}:  \(m = 600\), \(\sigma_w = 5\), \(X_{ij} \in U(0, 1)\) and i.i.d., \(\sigma =1\). Corruptions are obtained from $\{-25, 25\}$ \cite{papageorgiou2015robust}.  \(n \in \{50, 100, 170\}\).
  \item \textbf{Type 3}: Identical to Type1, except that corruptions are obtained from $ \mathcal{N}(12\sigma, (4\sigma)^2)$.
  \item \textbf{Type 4}: Identical to Type2, except that corruptions are obtained from $\{25\}$.
  \item \textbf{Type 5}: \(m = 512\), \(\sigma_w = 1\), \(X_{ij} \in \mathcal{N}(0, 1)\) and i.i.d., \(\sigma = \text{median}(|Xw|)/16\). Corruptions are obtained from $\mathcal{N}(0, (\kappa\sigma)^2)$.  \(n=64\).  \(\kappa \in \{8, 12, 16\}\) \cite{liu2018robust}.
  % \item \textbf{Type 6}: Identical to Type1, except that $X = DB$, \(D \in \mathbb{R}^{m \times n} \), \(D_{ij} \in \mathcal{N}(0, 1)\).
  % \item \textbf{Type 7}: Identical to Type2, except that $X = DB$, \(D \in \mathbb{R}^{m \times n} \), \(D_{ij} \in U(0, 1)\).
\end{enumerate}
% where \( B \in \mathbb{R}^{n \times n} \) is set to satisfy: $B_{ii} = 1$, and $ B_{ij} = 0.4,i \ne j.$
We primarily focus on Type 1, Type 2, and Type 5. Type 3 and Type 4 correspond to cases with unimodal outliers. As these cases differ only slightly from Type 1 and Type 2, we present their results in the appendix for brevity.
For the estimated parameter $\hat{w}$, we measure its Relative \(\ell_2\)-Error \cite{liu2018robust, elad2010sparse}
$
\frac{\|\hat{w} - w\|_2}{\|w\|_2},
$
where $w$ is the true parameter. We repeat the experiments 200 times and report the mean Relative \(\ell_2\)-Error to reflect the robustness.

%--------------------------------------------------------------------------------------------------
% Robustness Performance Comparision
\subsection{Robustness Performance Comparision}

We first evaluate and compare the performance of SARM against the robust models discussed in Section \ref{Comparative Models}, under the experimental setting of Type 1. The comparison results are shown in Fig. \ref{Type1}.
As shown, the breakdown point of SARM is generally higher than that of the other competing models.
When the sample dimensionality is relatively low, the robustness advantage of SARM becomes more pronounced. As the dimensionality increases, the robustness performance of SARM gradually approaches that of AROSI. It is also worth noting that the robustness of all models tends to degrade significantly as the dimensionality increases.

Next, we analyze the possible reasons behind the performance differences. Our analysis mainly considers TLRM, AROSI, and SARM. Since TLRM uses the $\ell_2$ norm loss function in its initial estimation stage, it is particularly vulnerable to the swamping effect, given that the $\ell_2$ loss is considerably more sensitive to outliers than the $\ell_1$ loss. Thanks to the enhanced robustness of the $\ell_1$ loss, AROSI's initial estimation does not deviate excessively from the true value, thereby mitigating the risk of incorrectly discarding inliers. Nevertheless, when the corruption rate becomes sufficiently high, the robustness of the $\ell_1$ loss is no longer adequate for AROSI to resist the swamping effect. Unlike AROSI, SARM does not rely heavily on the robustness of the initial estimate. Its initial parameter estimate is essentially the same as that of TLRM. However, through the self-adjusting mechanism embedded in the iterative procedure of Algorithm \ref{SARMAlgorithm}, SARM demonstrates significantly improved robustness compared to both TLRM and AROSI. In contrast to truncated $\ell_0$-regularized methods such as TLRM and AROSI, SARM considers all data points globally, which allows it to avoid the drawbacks associated with the swamping effect.

\begin{figure}[htbp]
  \centering

  \begin{subfigure}[b]{0.49\textwidth}
    \centering
    \includegraphics[width=\linewidth]{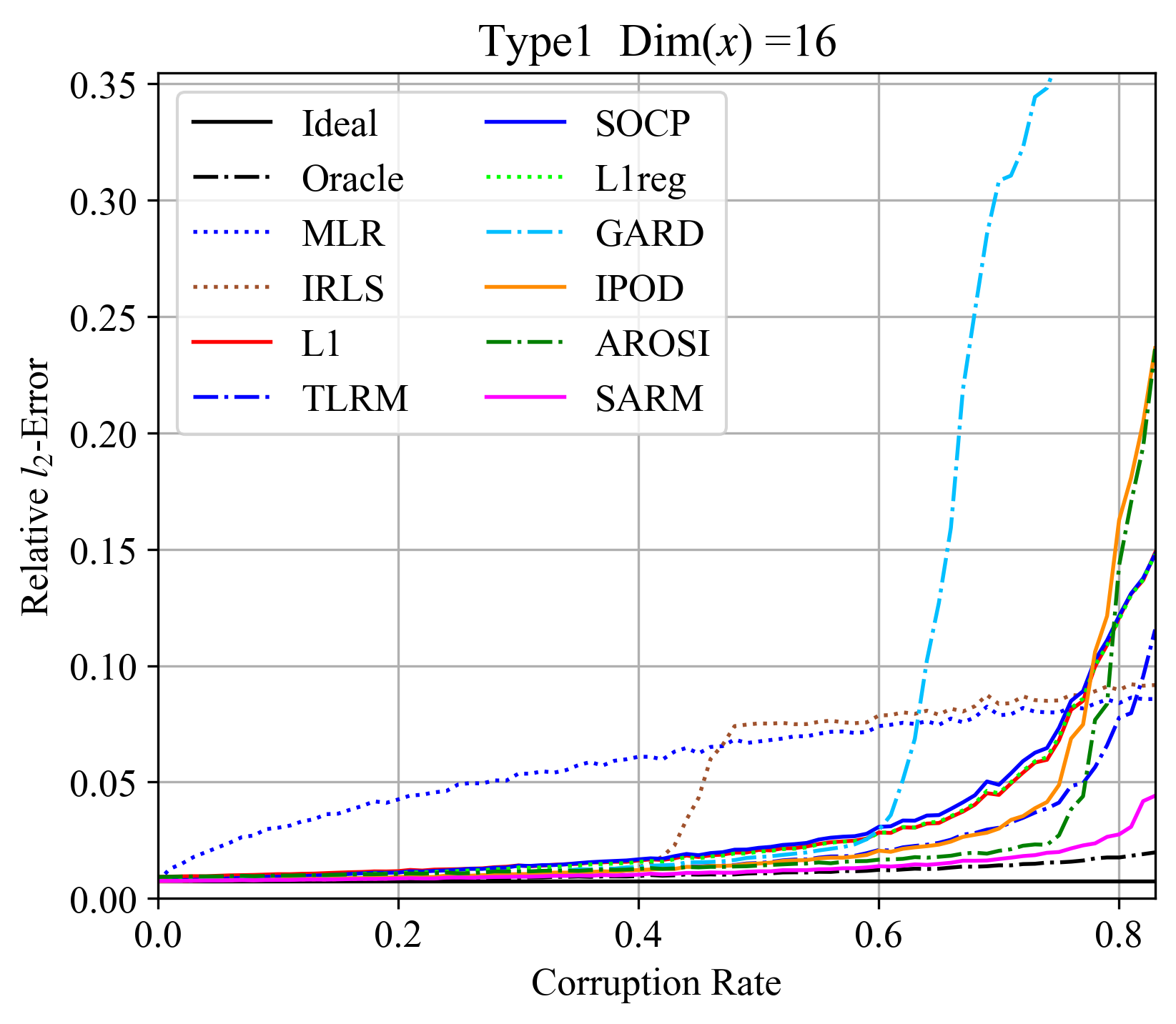}
  \end{subfigure}
  \hfill
  \begin{subfigure}[b]{0.49\textwidth}
    \centering
    \includegraphics[width=\linewidth]{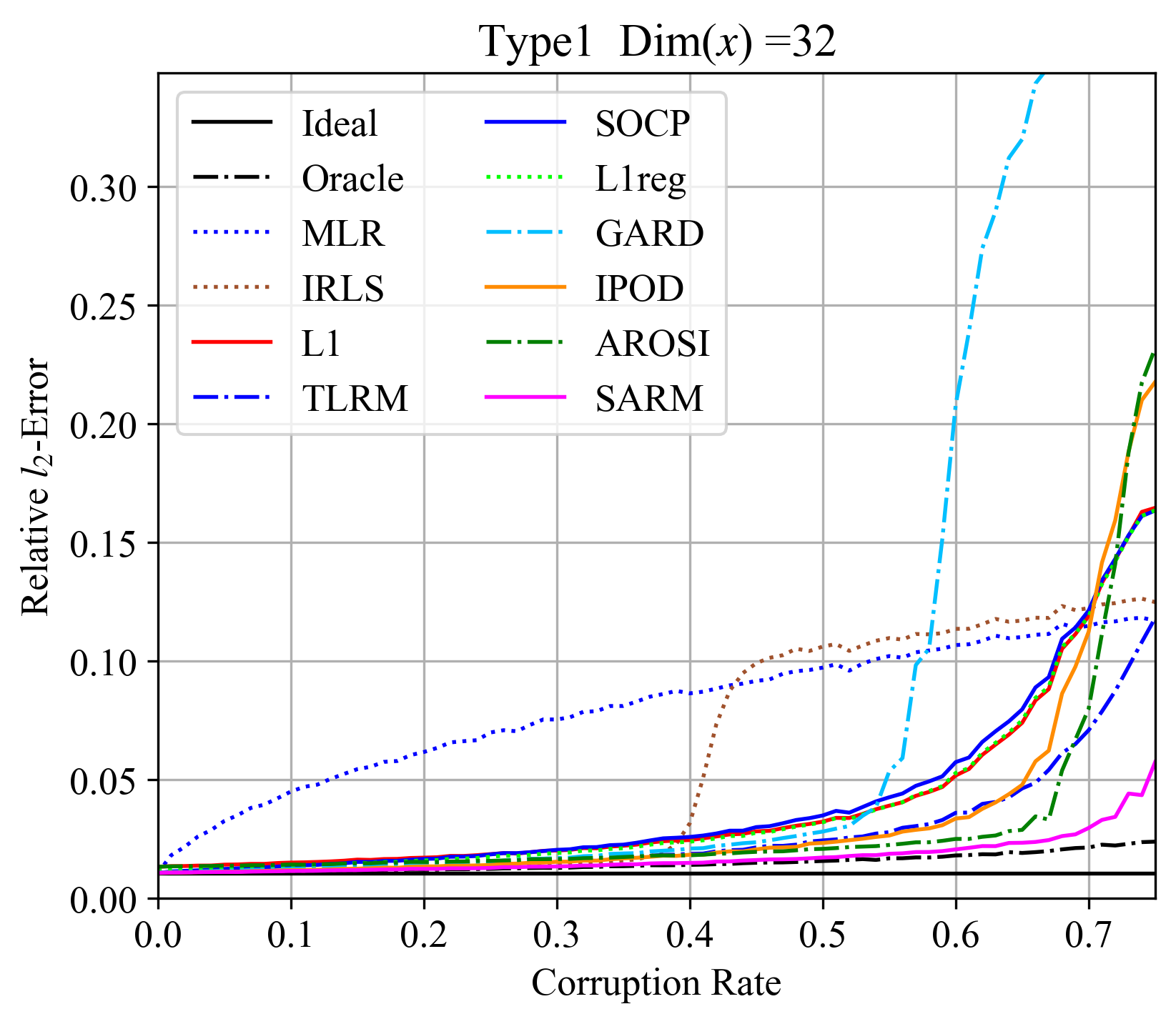}
  \end{subfigure}

  \vspace{1em}

  \begin{subfigure}[b]{0.49\textwidth}
    \centering
    \includegraphics[width=\linewidth]{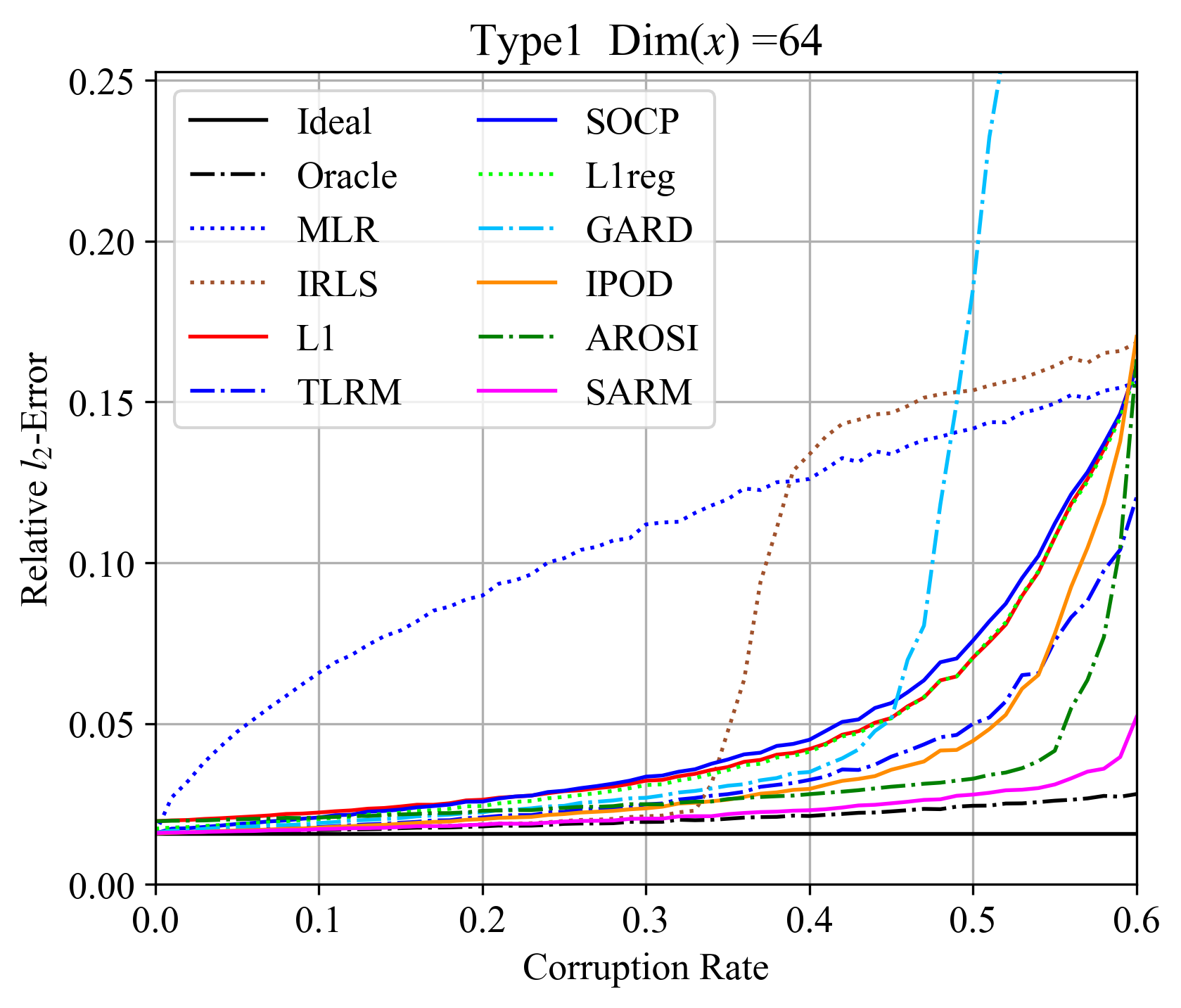}
  \end{subfigure}
  \hfill
  \begin{subfigure}[b]{0.48\textwidth}
    \centering
    \includegraphics[width=\linewidth]{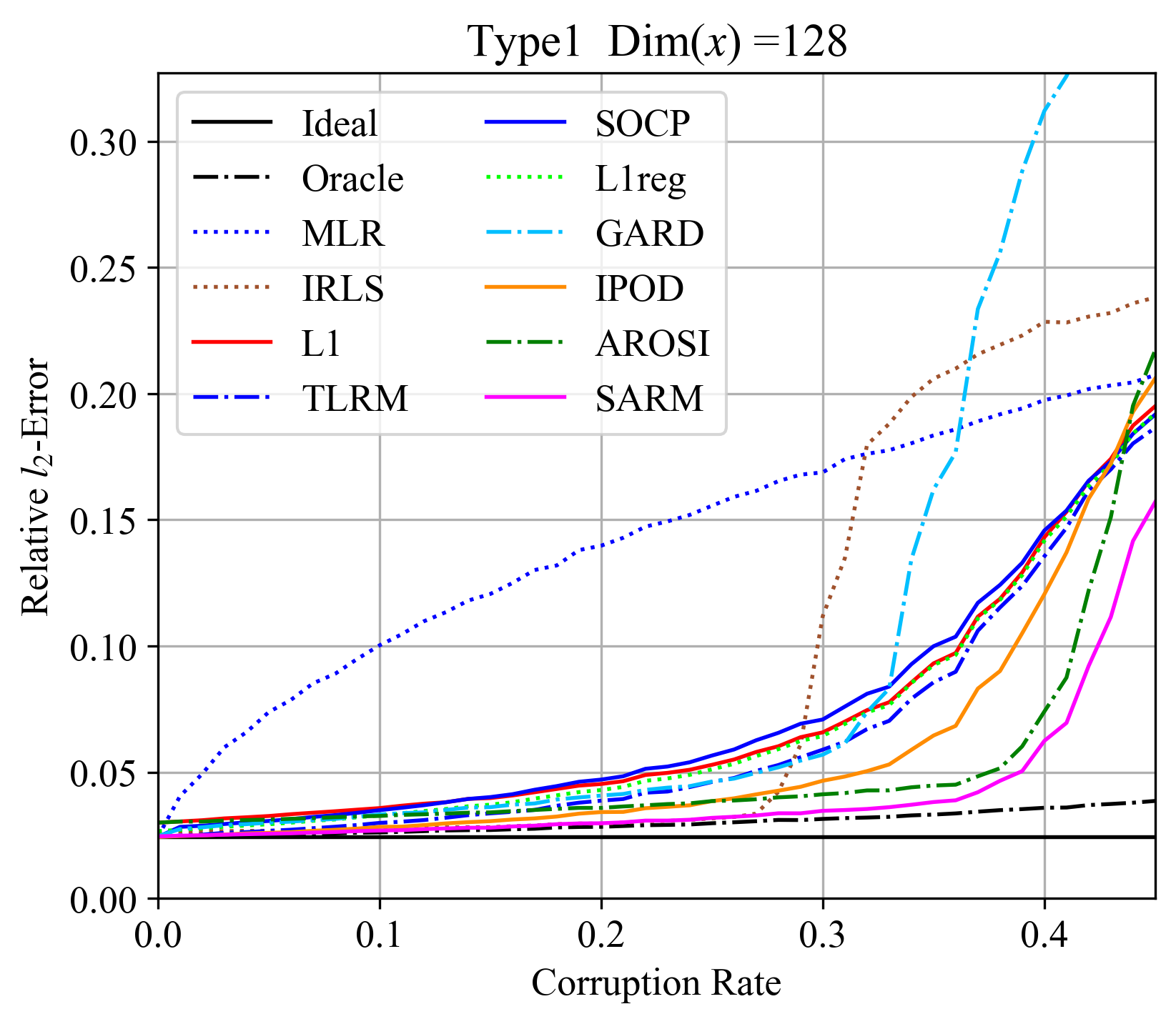}
  \end{subfigure}

  \caption{Empirical Evaluation of Robustness: Type1}
  \label{Type1}
\end{figure}

%--------------------------------------------------------------------------------------------------
% Convergence Analysis
\subsection{Convergence Analysis}

In the following, we investigate the convergence of Algorithm \ref{SARMAlgorithm}, which serves as the solver for SARM. While its convergence has been rigorously proven in Theorem \ref{Apply KL Property} of Section \ref{Theorem}, we provide additional empirical evidence through numerical experiments to corroborate our theoretical analysis.

We continue to use Type 1 as a representative case, set Dim($x$)=$64$, and gradually increase the corruption rate $p$ from $0.0$ to $0.6$. The Log Scaled experimental results are shown in Fig. \ref{Convergence Analysis}.
We first observe that $H^{k} - H^{k+1} > 0$, which confirms the monotonic decrease of the sequence ${H^k}$ in accordance with (\ref{monotonicity}). Furthermore, the trajectories of $\log(H^{k} - H^{k+1})$, $\log\| \mathbf{w}^{k+1} - \mathbf{w}^{k} \|_2$, and $\log\| \mathbf{z}^{k+1} - \mathbf{z}^{k} \|_2$ are nearly indistinguishable, indicating that these quantities exhibit similar decay behavior and are of comparable magnitude.
This serves as mutual validation of Theorem \ref{Subgradient Lower Bound}.
After sufficient iterations, Algorithm \ref{SARMAlgorithm} demonstrates Q-linear convergence behavior, suggesting a rapid overall convergence. Additionally, it is observed that the iteration count of Algorithm \ref{SARMAlgorithm} is affected by the outlier fraction $p$, with larger proportions of outliers necessitating more iterations.

\begin{figure}[htbp]
  \centering
  \begin{subfigure}[b]{0.34\textwidth}
    \centering
    \includegraphics[width=\linewidth]{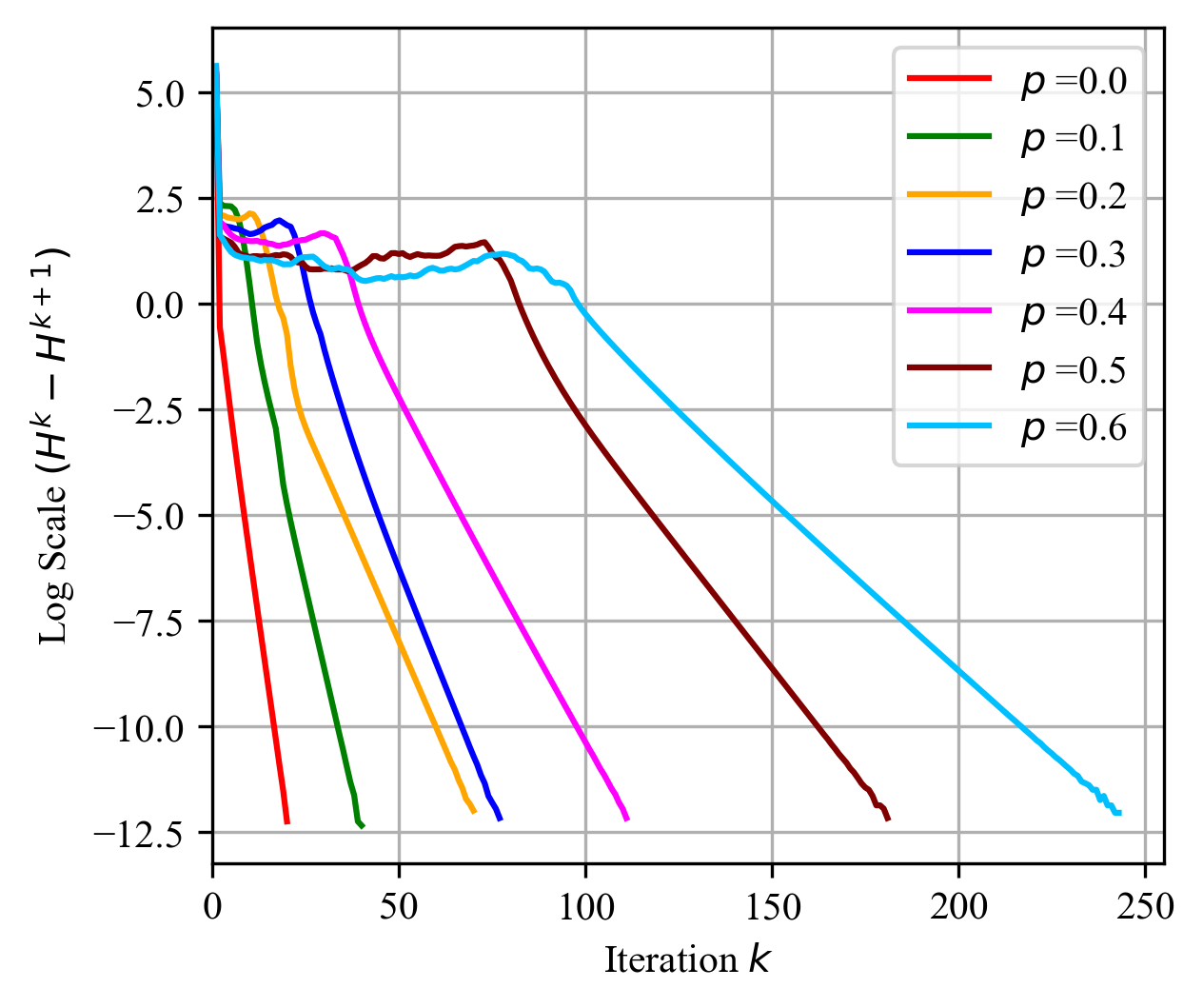}
    \caption{Log Scale $H^{k} - H^{k+1}$}
    \label{Convergence Analysis1}
  \end{subfigure}
  \hfill
  \begin{subfigure}[b]{0.32\textwidth}
    \centering
    \includegraphics[width=\linewidth]{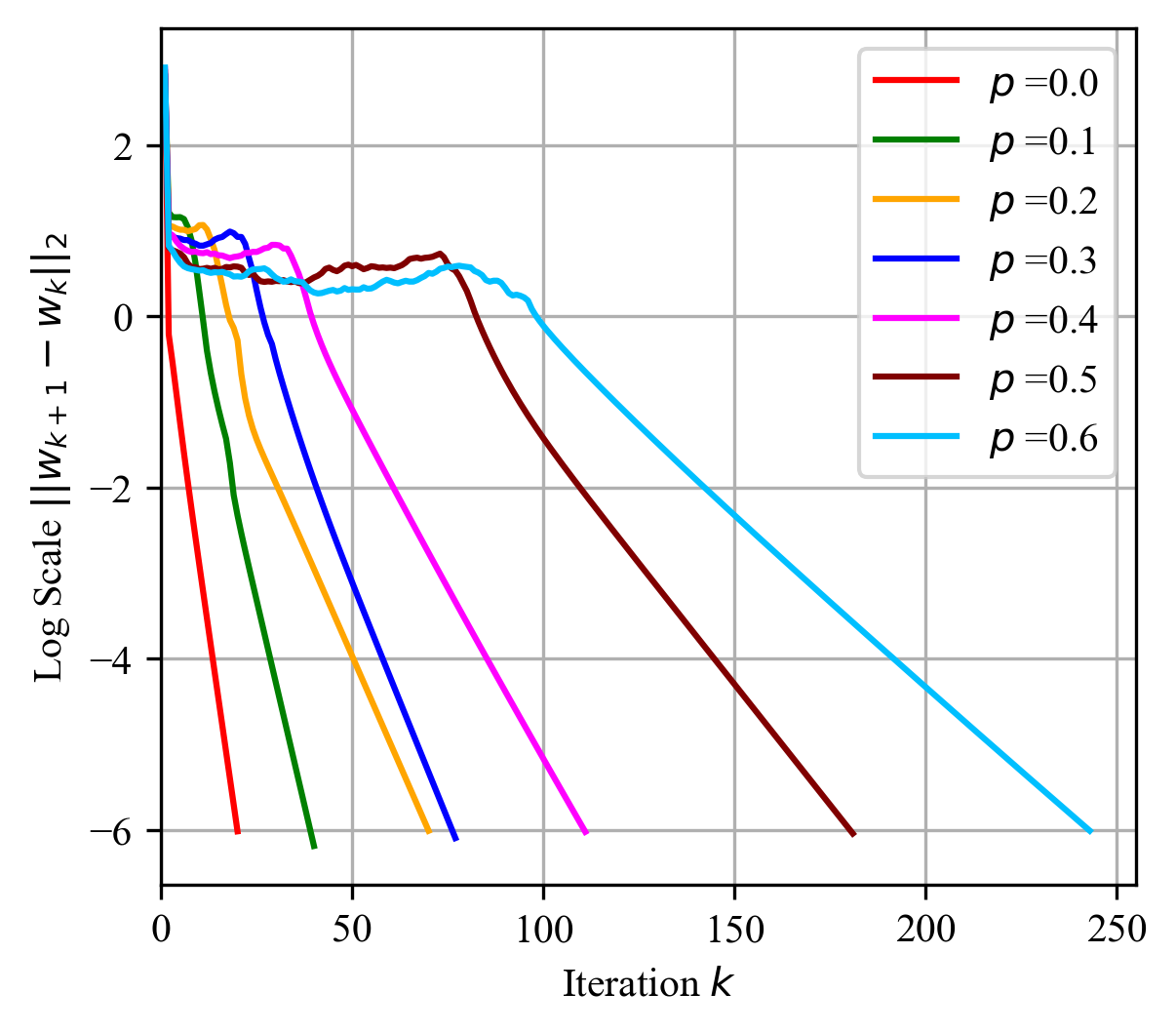}
    \caption{Log Scale $\| w^{k+1} - w^{k} \|_2$}
    \label{Convergence Analysis2}
  \end{subfigure}
  \hfill
  \begin{subfigure}[b]{0.32\textwidth}
    \centering
    \includegraphics[width=\linewidth]{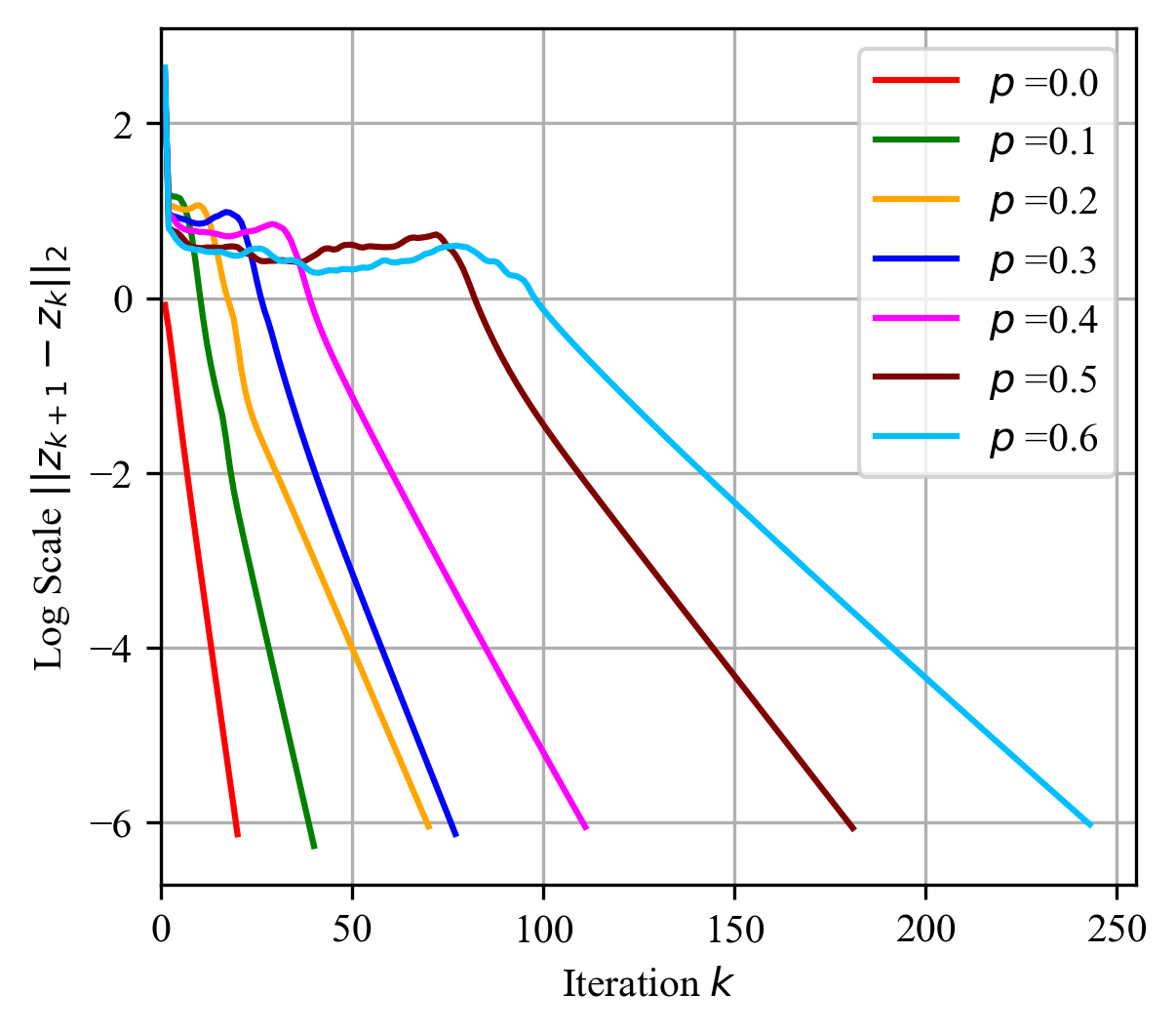}
    \caption{Log Scale $\| z^{k+1} - z^{k} \|_2$}
    \label{Convergence Analysis3}
  \end{subfigure}
  \caption{Convergence Analysis (Type1, Dim($x$)=$64$)}
  \label{Convergence Analysis}
\end{figure}

%--------------------------------------------------------------------------------------------------
% Computational Efficiency Comparision
\subsection{Computational Efficiency Comparision}

We proceed to compare the computational efficiency. While Table \ref{tab:timecomplexity} provides the theoretical computational complexities of various models as an initial reference, practical performance is influenced by algorithmic properties and hardware-level factors, making theoretical complexity insufficient to fully capture real-world efficiency. Hence, we present empirical comparisons based on numerical experiments.

We first fix the sample size and adopt the Type 1 experimental setup with Dim($x$)=$128$. The experimental results are shown in Fig. \ref{Computational Efficiency}.
It can be seen that GARD achieves optimal performance at low outlier ratios (the number of outliers $K \ll m$), but its computational time escalates sharply with increasing outlier fractions.
In contrast, TLRM and SARM consistently demonstrate superior computational efficiency across varying corruption rates. 
However, as illustrated in Fig. \ref{Type1}, SARM demonstrates substantially stronger robustness compared to TLRM under similar computational efficiency. It is also noteworthy that as the corruption rate increases, both SARM and TLRM exhibit a rise in computational time. This phenomenon is primarily attributed to the increased number of iterations, which is consistent with the trend observed in Fig. \ref{Convergence Analysis}.
Among other competing methods, AROSI, whose robustness is the closest to that of SARM, exhibits significantly lower computational efficiency. This is a common issue among models involving the $\ell_1$-loss function and linear programming. Although certain commercial solvers or specially designed optimization algorithms may improve the computational efficiency of such methods, the results presented in Fig. \ref{Computational Efficiency} already provide a representative comparison in terms of efficiency. 

Additionally, another noteworthy advantage of SARM deserves attention.
From the algorithmic structure of SARM, its primary computational bottleneck lies in two matrix-vector multiplications. This implies that SARM is particularly amenable to acceleration via parallel matrix operations, especially when tackling large-scale problems. To evaluate its computational efficiency in such settings, we conduct experiments based on the Type 1 setup while maintaining a fixed sample-to-feature ratio \( \frac{m}{n} =10\), a corruption rate \( p = 0.25 \), and input dimension Dim($x$)=$64$. We gradually increase the sample size to assess the scalability. The results are reported in Fig. \ref{Computational Efficiency}.
We only retain a subset of models that are most representative in terms of computational efficiency.
SARM(GPU) refers to the algorithm executed with GPU.
SARM (CPU) refers to the algorithm executed with CPU (consistent with the other experiments). As shown, even when using CPU, SARM demonstrates high computational efficiency when handling large-scale data. Upon switching to GPU computation, the efficiency of SARM improves significantly. However, when the data scale is relatively small, the CPU implementation becomes less efficient compared to the GPU. This is primarily due to the communication overhead between the CPU and GPU dominating the total runtime.

\begin{figure}[htbp]
  \centering
  \begin{subfigure}[b]{0.535\textwidth}
    \centering
    \includegraphics[width=\linewidth]{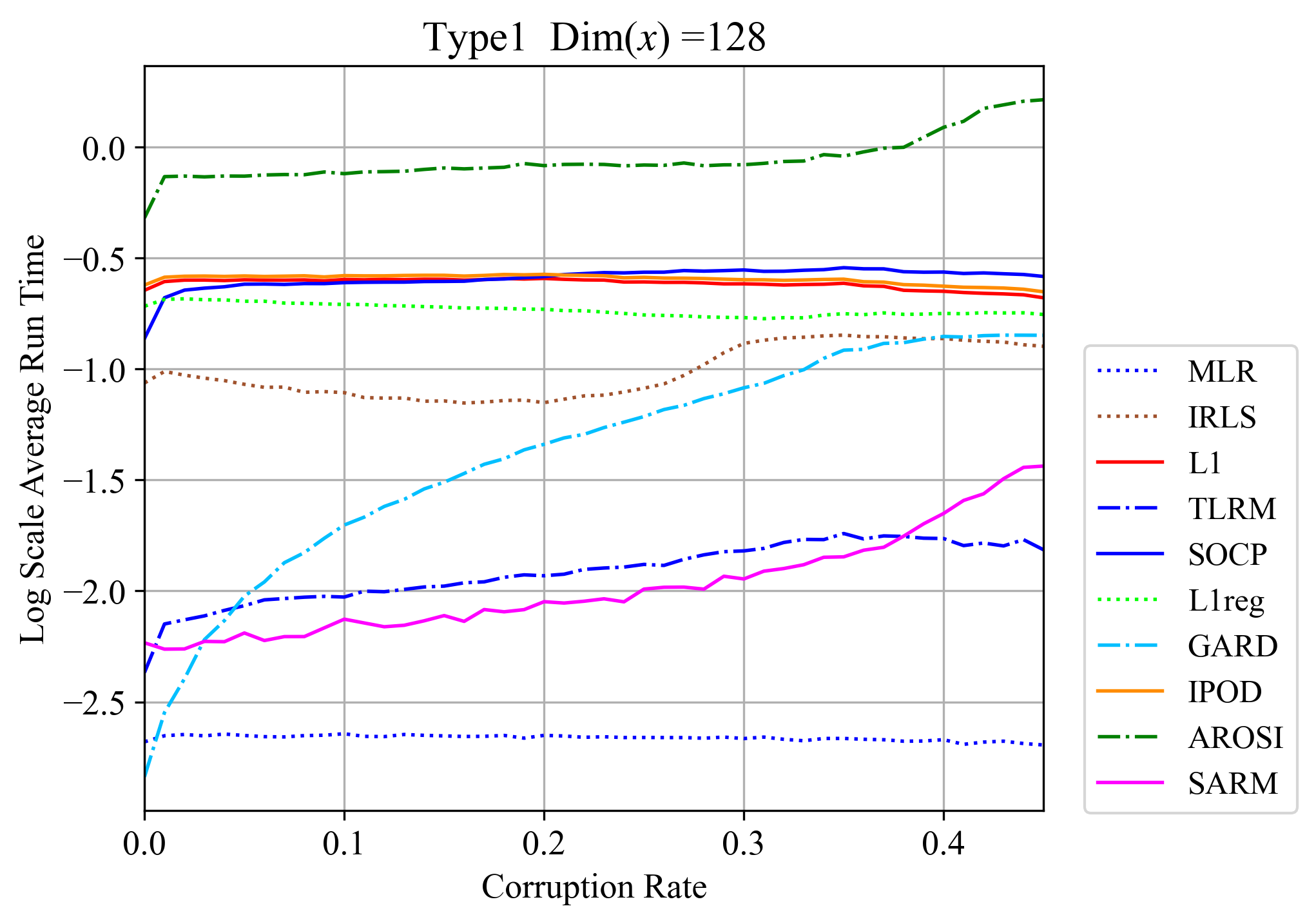}
    \caption{Log scale average run time vs. Corruption rate.}
  \end{subfigure}
  \hfill
  \begin{subfigure}[b]{0.45\textwidth}
    \centering
    \includegraphics[width=\linewidth]{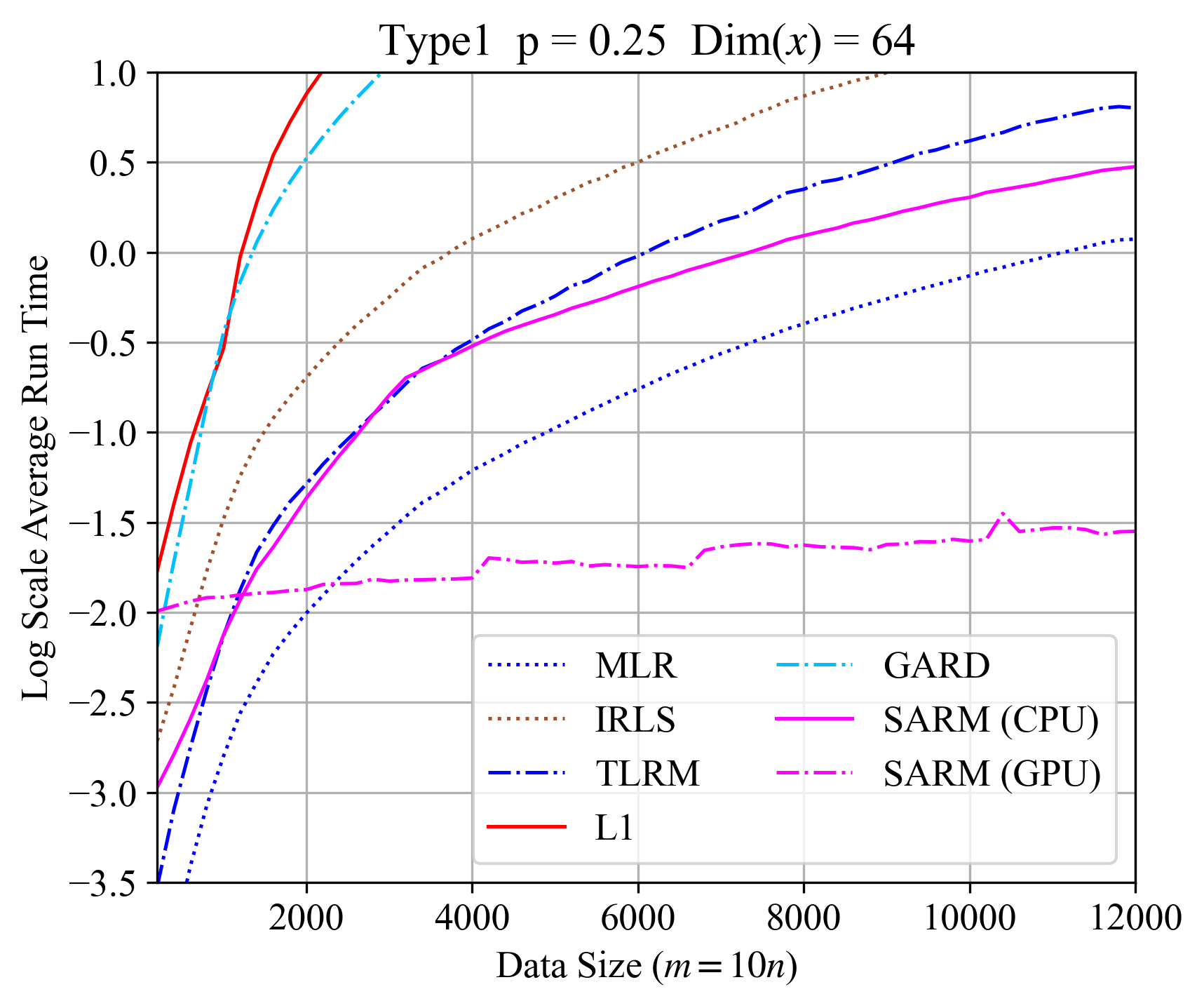}
    \caption{Log scale average run time vs. data scale.}
  \end{subfigure}
  \caption{Empirical Evaluation of Computational Efficiency}
  \label{Computational Efficiency}
\end{figure}

%--------------------------------------------------------------------------------------------------
% Parameter Sensitivity Analysis
\subsection{Parameter Sensitivity Analysis}

We next investigate the sensitivity of Algorithm \ref{SARMAlgorithm} with respect to the parameter $\delta$. When \(\delta\) is small, the overall accuracy of parameter estimation tends to degrade. As \(\delta\) increases, the estimation performance gradually improves. This behavior is consistent with the theoretical requirement in Theorem \ref{upper bound}, which specifies that \(\delta\) should exceed a certain threshold to ensure accurate estimation. 
Intuitively, a sufficiently large \(\delta\) helps reduce the risk of incorrectly identifying non-outliers as outliers. 

Empirically, we observe that when \(\delta\) is within the range of $3\sigma^2-6\sigma^2$, SARM maintains relatively stable performance. However, when \(\delta\) becomes too large (e.g., \(\delta = 10\sigma^2\)), the robustness of SARM to outliers begins to deteriorate. This observation is in line with Theorem \ref{upper bound} and the bound specified by the inequality \eqref{lambdanomo2}. 
An excessively large $\delta$ may impair the discriminative ability of SARM, resulting in diminished precision in outlier detection.

\begin{figure}[htbp]
  \centering
  \begin{subfigure}[b]{0.485\textwidth}
    \centering
    \includegraphics[width=\linewidth]{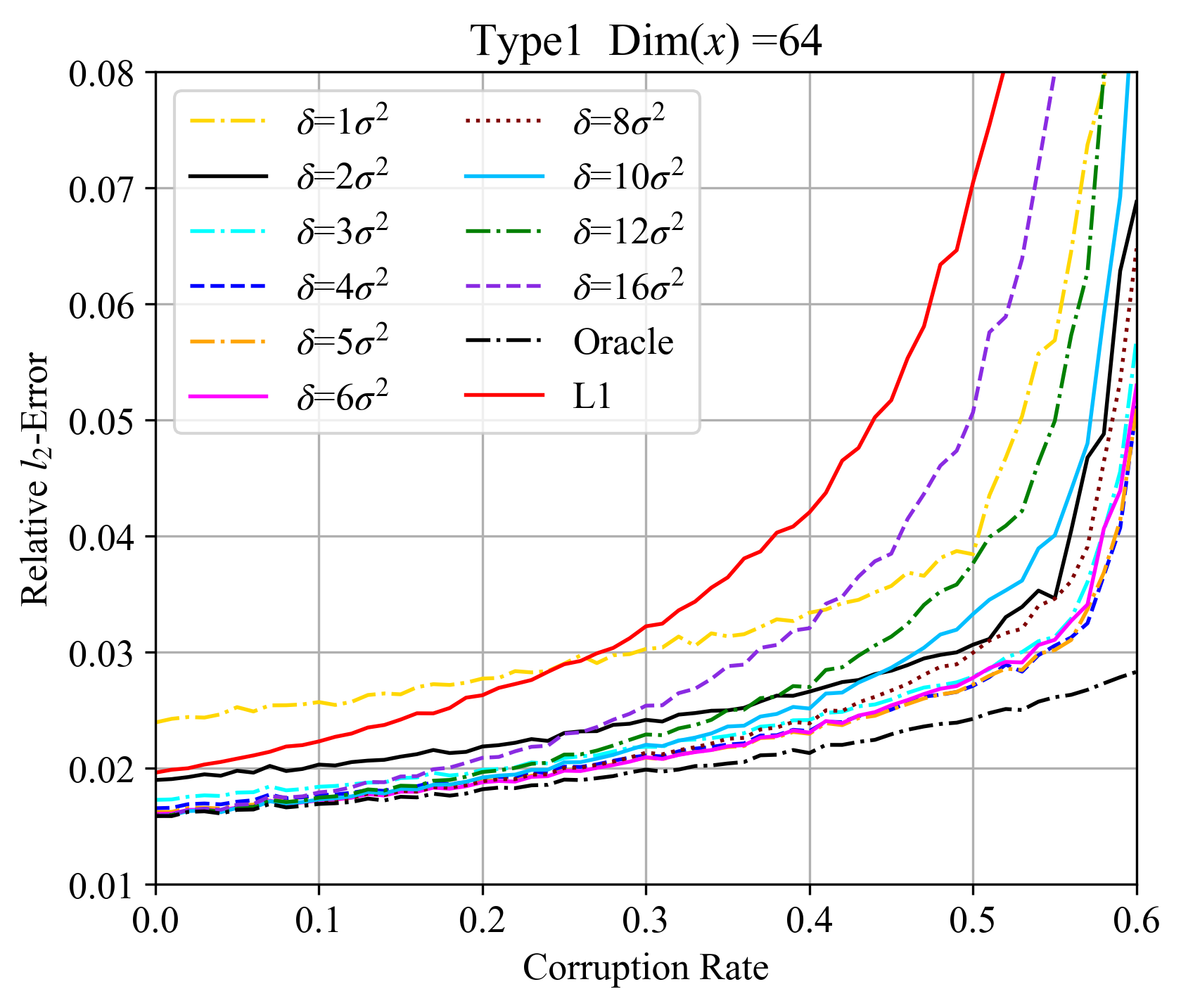}
  \end{subfigure}
  \hfill
  \begin{subfigure}[b]{0.495\textwidth}
    \centering
    \includegraphics[width=\linewidth]{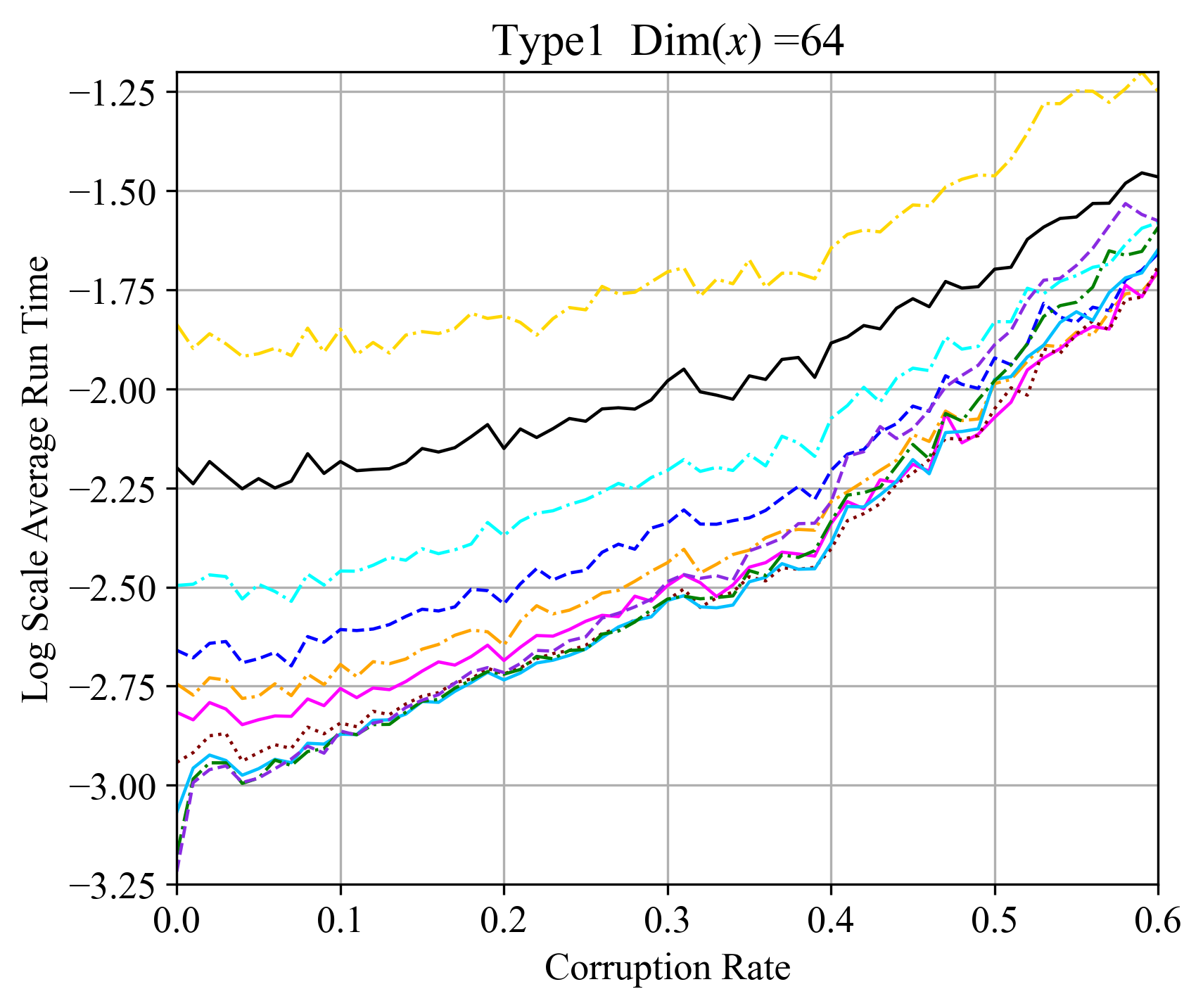}
  \end{subfigure}
  \caption{Sensitivity Analysis for $\delta$. }
  \label{Sensitivity Analysis}
\end{figure}

%--------------------------------------------------------------------------------------------------
% Parameter Sensitivity Analysis
\subsection{Extended Comparative Experiments}

While considering a single outlier setting such as Type 1 provides initial insights, it is insufficient to fully demonstrate the generalizability of SARM. To further assess its robustness and adaptability across diverse contamination scenarios, we extend our investigation to include additional synthetic settings: Type 2 and Type 5.

Type 2 simulates a contamination mechanism that involves point-mass contamination of large magnitude in response.
The experimental results are presented in Fig. \ref{Type2}.
We observe that the performances of L1 and those models relying on L1 for initial estimation or iteration (e.g., AROSI and IPOD) almost coincide.
However, the performance of TRLM is unexpectedly superior. This suggests that the robustness of L1 against the swamping effect may vary depending on the distribution of outliers, and that the $\ell_2$ loss function could potentially exhibit better performance under point-mass contamination scenarios.
Despite variations in robustness exhibited by other models under different contamination scenarios, SARM consistently demonstrates superior robust performance in this challenging scenario.

Type 5 describes a scenario where a subset of sample points exhibits an abnormally increased variance. The experimental results are presented in Fig. \ref{Type5}. In this setting, the performance of convex $\ell_1$-relaxation-based models tends to degrade as the variance ($(\kappa \sigma)^2$) of outliers increases. This again confirms that $\ell_1$-relaxation-based approaches are sensitive to the magnitude of outlier deviation. In contrast, models based on $\ell_0$ penalties, such as GARD and AROSI, as well as truncation-based methods like TRLM and IPOD, exhibit stable or even improved performance with increasing outlier variance. Overall, SARM continues to demonstrate nearly optimal robustness in this scenario.

The above experimental results further demonstrate the broad applicability of SARM across different types of data contamination.

\begin{figure}[htbp]
  \centering
  \begin{subfigure}[b]{0.325\textwidth}
    \centering
    \includegraphics[width=\linewidth]{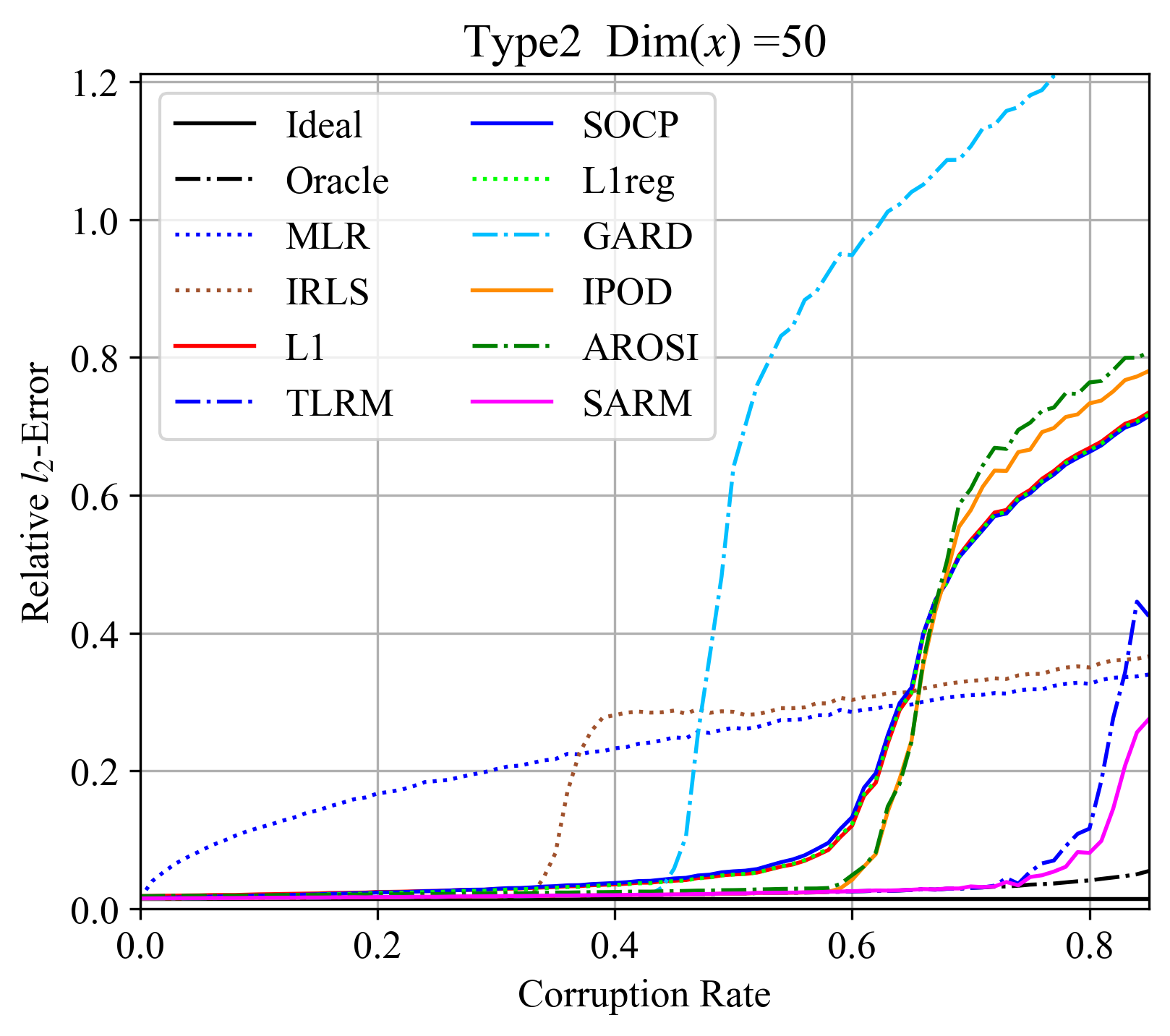}
  \end{subfigure}
  \hfill
  \begin{subfigure}[b]{0.33\textwidth}
    \centering
    \includegraphics[width=\linewidth]{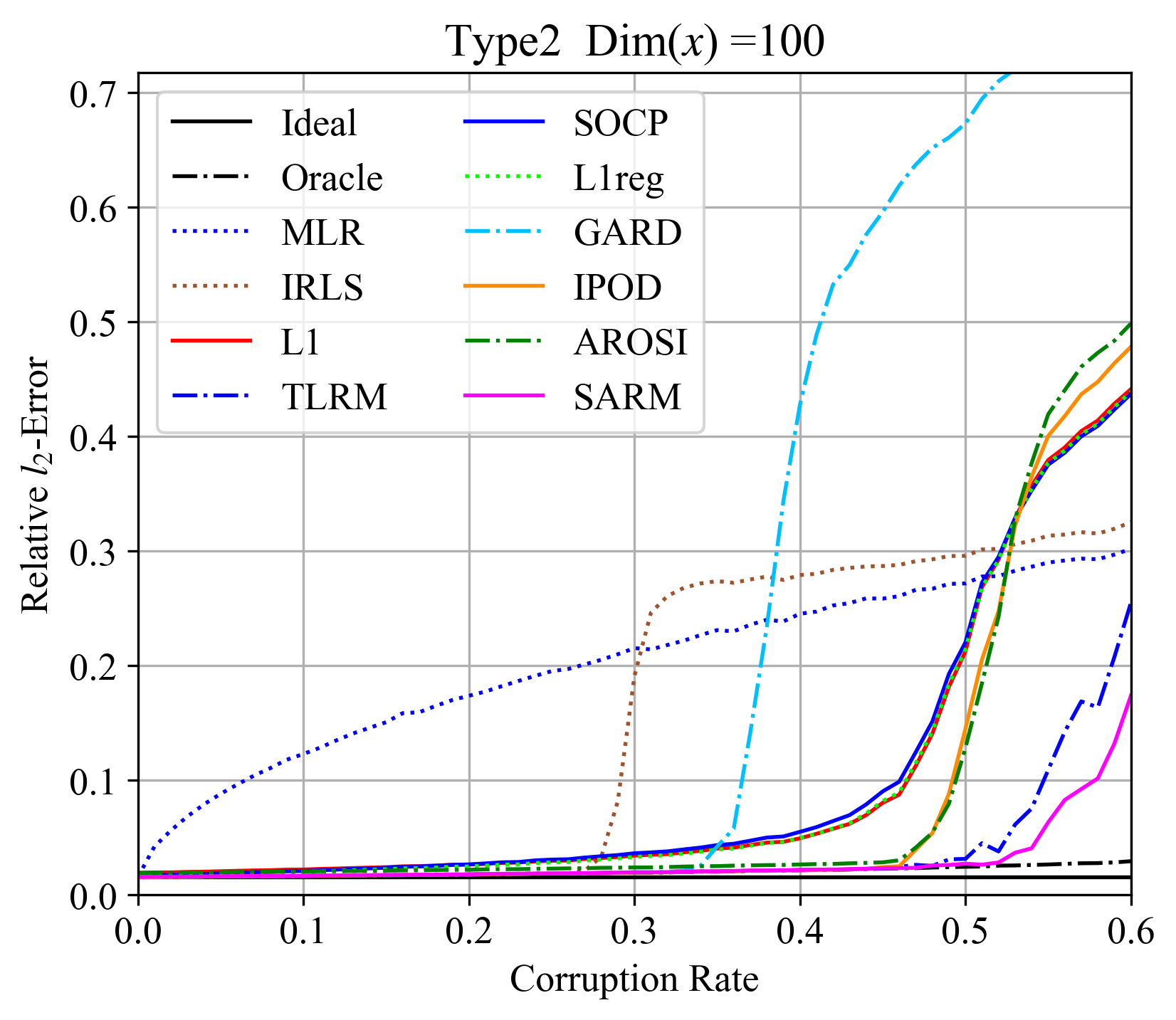}
  \end{subfigure}
  \hfill
  \begin{subfigure}[b]{0.325\textwidth}
    \centering
    \includegraphics[width=\linewidth]{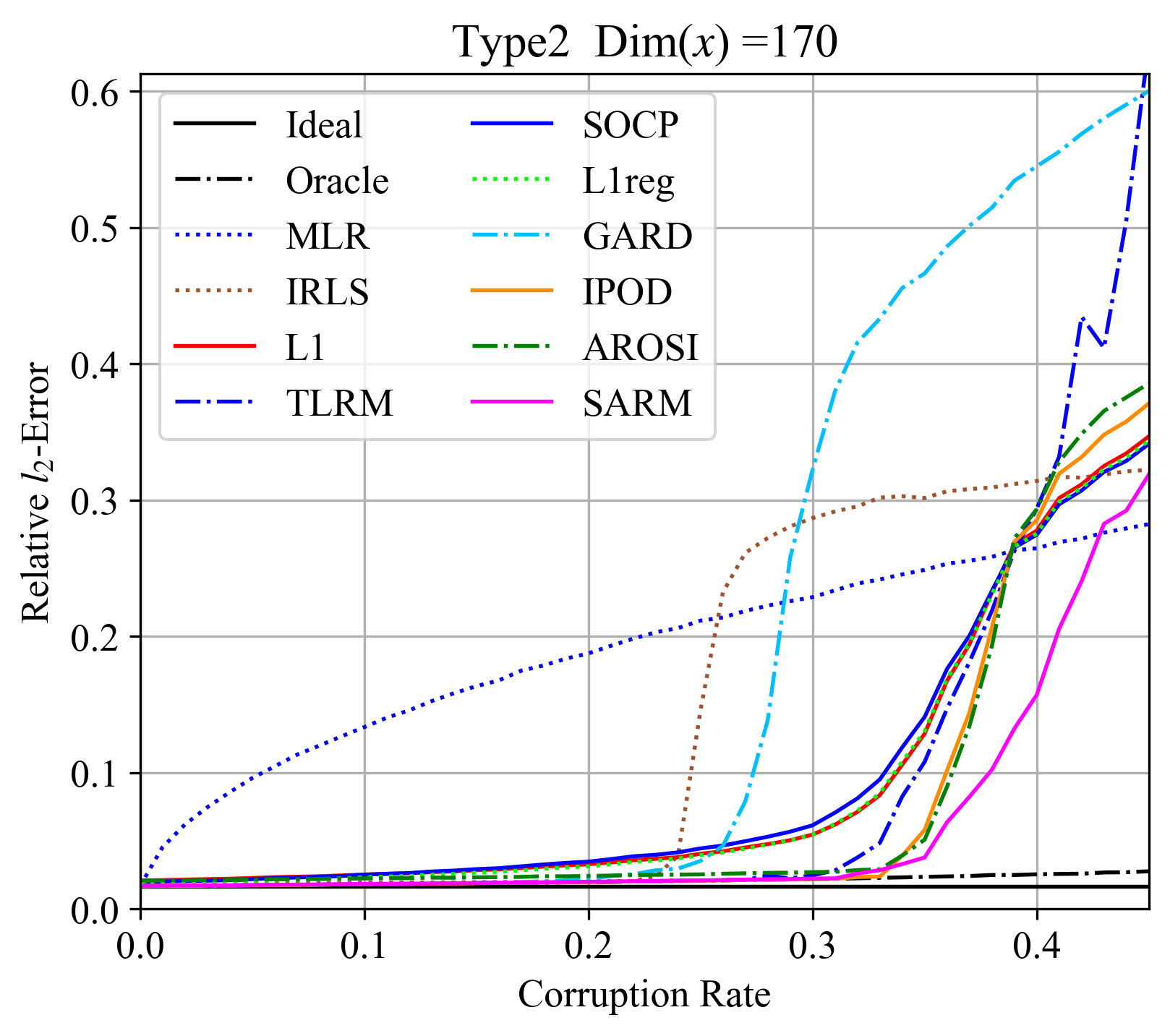}
  \end{subfigure}
  \caption{Empirical Evaluation of Robustness: Type2}
  \label{Type2}
\end{figure}

\begin{figure}[htbp]
  \centering
  \begin{subfigure}[b]{0.33\textwidth}
    \centering
    \includegraphics[width=\linewidth]{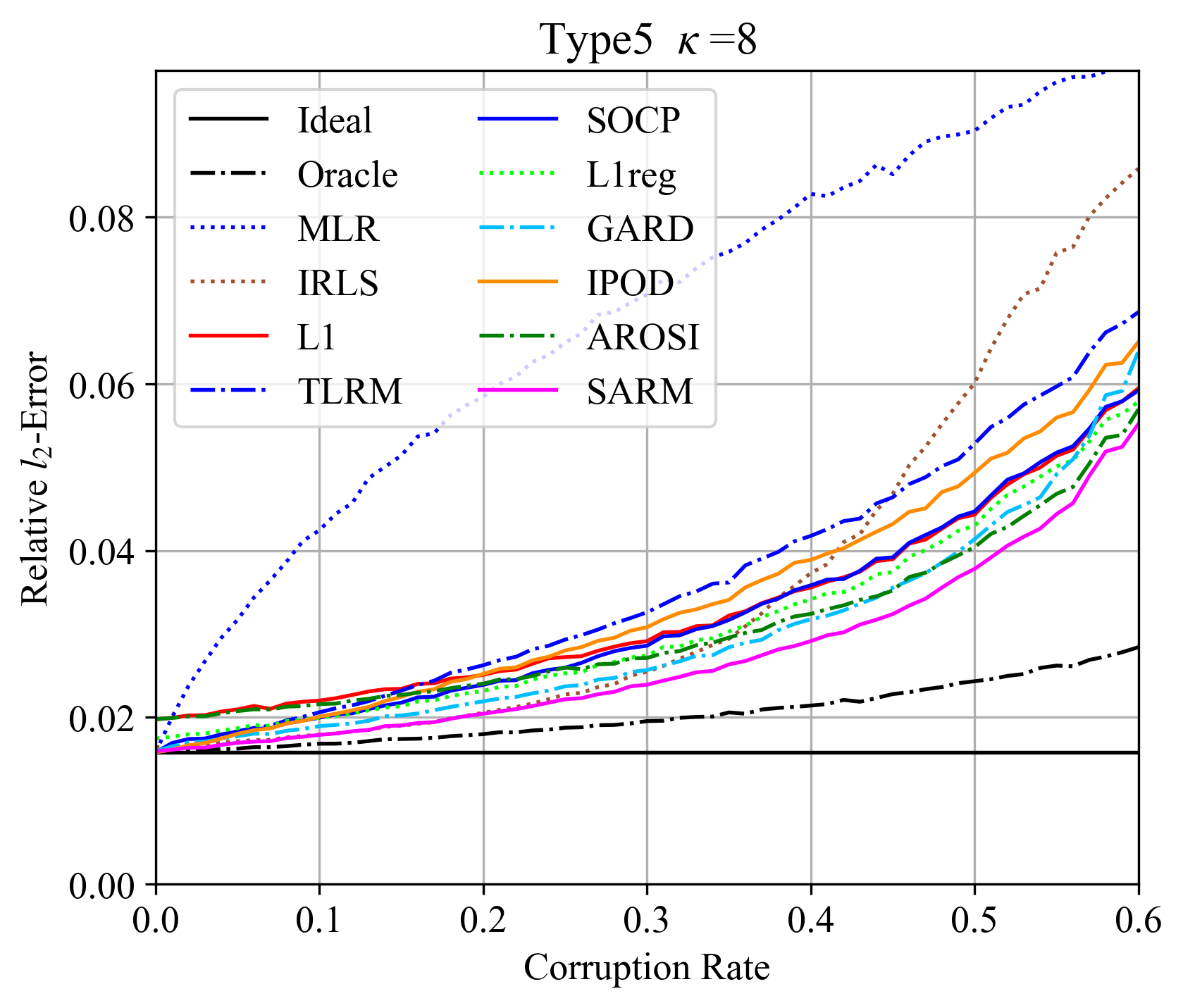}
  \end{subfigure}
  \hfill
  \begin{subfigure}[b]{0.33\textwidth}
    \centering
    \includegraphics[width=\linewidth]{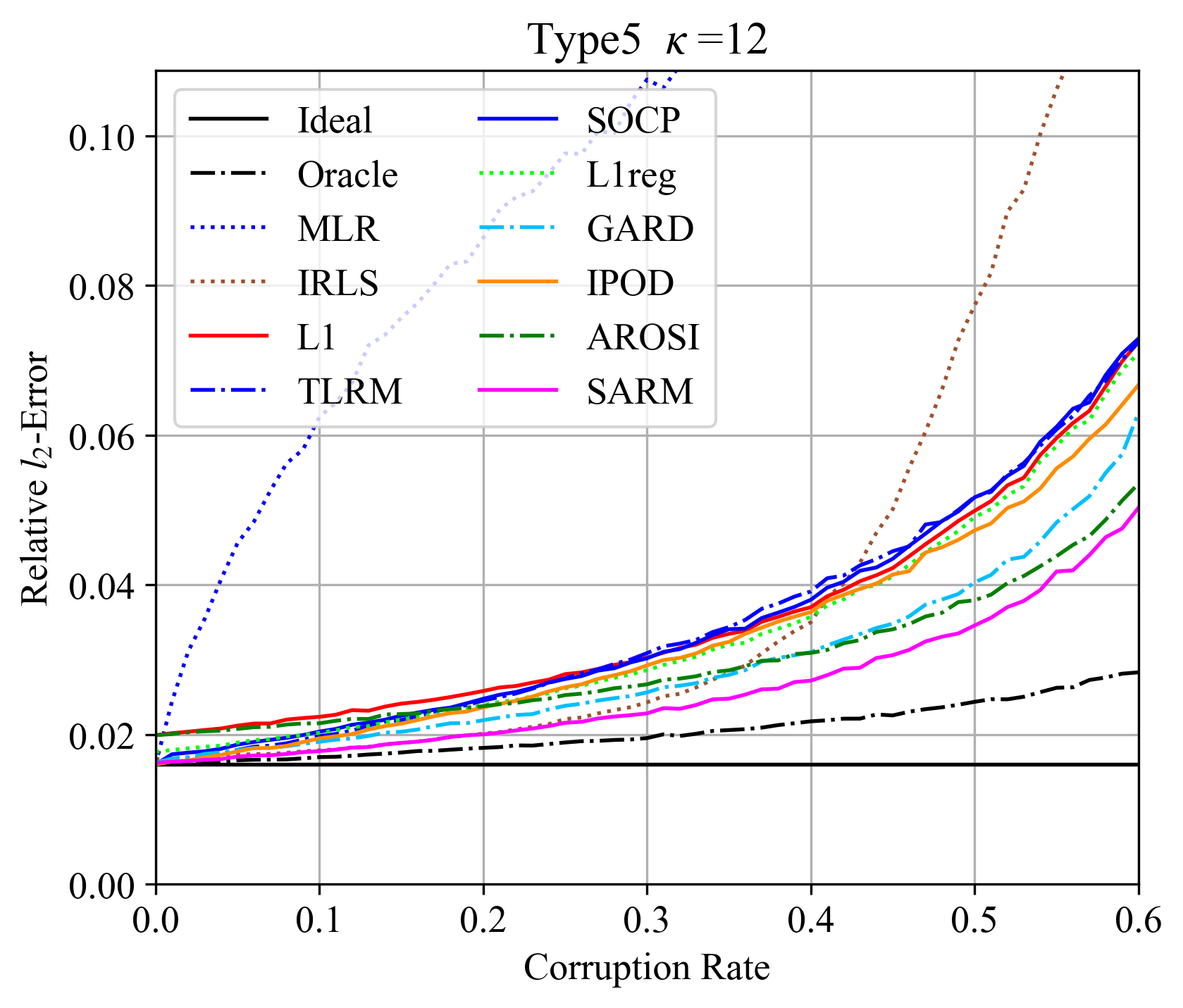}
  \end{subfigure}
  \hfill
  \begin{subfigure}[b]{0.33\textwidth}
    \centering
    \includegraphics[width=\linewidth]{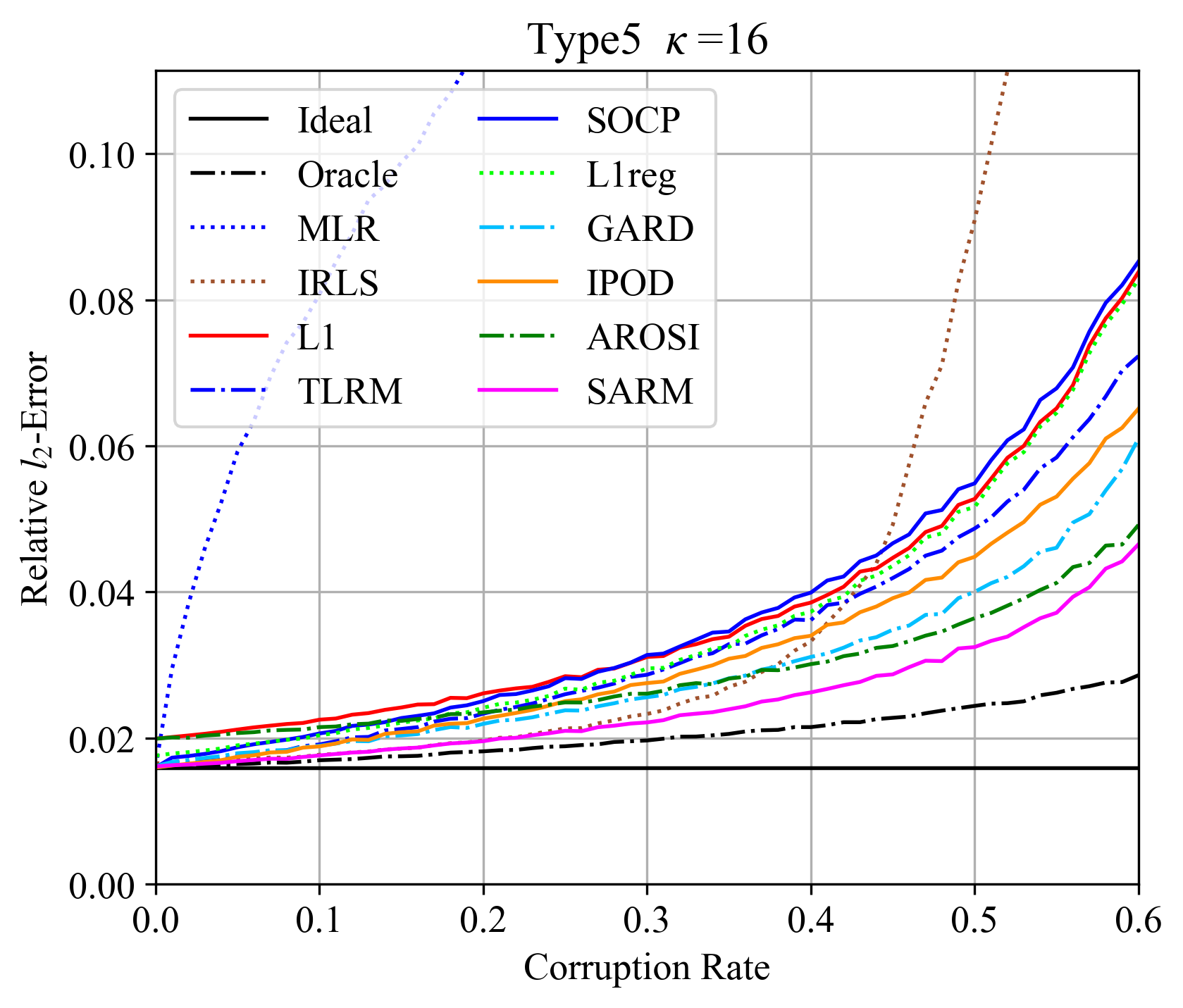}
  \end{subfigure}
  \caption{Empirical Evaluation of Robustness: Type5}
  \label{Type5}
\end{figure}

%--------------------------------------------------------------------------------------------------
% The Two-Stage SARM Framework
\section{The Two-Stage SARM Framework} \label{TSSARM}

Let us now revisit Fig. \ref{Type1} and Fig. \ref{Type2}. An evident pattern emerges from the results: as the dimensionality of the feature space increases, the breakdown point of the estimators tends to decline accordingly.
A similar conclusion can also be reflected through the upper bound of the estimation error (Theorem \ref{upper bound}).
Based on this observation, a straightforward idea is to enhance model robustness by applying PCA or other dimensionality reduction methods. However, such a strategy inevitably sacrifices potential information. In light of the third concern raised in Section \ref{motivation}, we consider that this property of robust models may instead be leveraged by SARM to provide better initialization under certain conditions.

Assume that \( X \) is the normalized input matrix (Min-max, Z-score, etc.).
We apply an orthogonal transformation matrix \( Q \in \mathbb{R}^{n\times n} \) to the original design matrix \( X \):
\begin{align}
y &= XQ w  + e + z.
\end{align}
Subsequently, we select \( q \) columns of $Q$ to perform feature compression on \( X \):
\begin{align}
y &= X\left[Q_q, Q_{n-q} \right]\begin{bmatrix}
w_{q}  \\
 w_{n-q}
\end{bmatrix}  + e + z = XQ_q w_q + XQ_{n-q} w_{n-q}  + e + z,
\end{align}
When \( X Q_{n-q} w \) is sufficiently small, it can be temporarily regarded as part of random noise $e$.
Considering the following inequality:
\begin{equation}
\|XQ_{n-q}w_{n-q}\|_2 \leq \|XQ_{n-q}\|_2 \| w_{n-q}\|_2 ,
\end{equation}
we aim to minimize \( \|XQ_{n-q}\|_2 \) to satisfy the assumption that \( XQ_{n-q}w \) is sufficiently small.
\( \|XQ_{n-q}\|_2^2 \) is equivalent to:
\begin{align}
 & \max_{u \in \mathbb{R}^{n-q}, \|u \|_2 = 1} u^T Q_{n-q}^T X^T XQ_{n-q} u. \label{min2norm}
\end{align}
Given the singular value decomposition (SVD) $X = U\Sigma V^T$ and suppsoing $\tilde{Q}_{n-q} = V^T Q_{n-q}$, (\ref{min2norm}) is also equivalent to:
\begin{equation}
 \max_{u \in \mathbb{R}^{n-q}, \|u \|_2 = 1} u^T \tilde{Q}^T_{n-q}\Sigma^2 \tilde{Q}_{n-q} u,
\end{equation}
where $\Sigma^2 = \mathrm{diag}(s_1^2, s_2^2, \ldots, s_n^2)$ and $s_i$ is the $i$-th largest singular value of $X$.
Since the diagonal elements of $\Sigma$ are arranged in descending order, choosing 
$\tilde{Q}_{n-q} = \begin{bmatrix}
\mathbf{0}  \\
I_{n-q}
\end{bmatrix}$ results in the minimal \( \|XQ_{n-q}\|_2 = s_{q+1} \), where $s_{q+1}$ denotes ${(q+1)}$-th singular value of $X$. Therefore, we take $Q_{n-q} =( V^T)^{-1}\begin{bmatrix}
\mathbf{0}  \\
I_{n-q}
\end{bmatrix} = V_{n-q}$, which denote the last $n-q$ columns of $V$. Then we take $Q = V$ and $XV_{n-q} w=e_c$:
\begin{align}
y &= XV_q w + (e_c  + e) + z.
\end{align}

We still need to apply preconditioning to \(X V^T_q\) and the new input matrix is \(X V_q \Sigma_q^{-1}\), where $\Sigma_q^{-1} = \mathrm{diag}(\frac{1}{s_1}, \frac{1}{s_2}, \ldots, \frac{1}{s_q})$. Note that the essence of preconditioning is to normalize the singular values of the input matrix $X$ to be all equal to 1 and \(X V_q \Sigma_q^{-1} = U \Sigma V^T V_q \Sigma_q^{-1}=U_q\) satisfies the condition.
From a computational efficiency perspective, it is unnecessary to explicitly compute $U$. Instead, we can obtain $V$ and $\Sigma^2$ by conduct SVD on $X^TX$: $X^TX = V\Sigma^2V^T$. This helps reduce the computational complexity, especially when $n \ll m$.

When the singular values of \(X\) exhibit significant imbalance, we can select an appropriate \(q\) such that \(e_c\) is sufficiently small.
This allows us to obtain a more robust initial estimate using fewer sample features.
Subsequently, we reconsider all features and re-execute the SARM algorithm conditioned on this initial estimate, ultimately obtaining the estimate of \(w\). The specific procedure is presented in Algorithm \ref{SARMTSAlgorithm}.
We refer to this algorithm as SARM with Two Stages (SARMTS), where the two stages correspond to the initial pre-estimation and correction modules. 

Here, our method for selecting \(q\) is based on $\frac{s_q}{s_1}$; in other words, we set a threshold $\eta \in [0,1]$ and choose the smallest integer $q$ 
 such that \(\frac{\|X U_{n-q} \|_2^2}{\|X\|_2^2} = \frac{s_{q+1}}{s_1} < \eta\). 
 $\eta$ is typically required to be sufficiently small. In our implementation, it is set to $0.005$.
 This approach is similar to PCA dimensionality reduction, but since the process is derived purely from SVD, it focuses solely on matrix approximation. In addition, a larger parameter $\delta_{pre}$ is set during the pre-estimation stage, which corresponds to the fact that $\|e_u + e\|_2 $ is usually larger than $ \|e\|_2$.

\begin{algorithm}
\caption{Algorithm For SARM with Two Stages}
\begin{algorithmic}[1]
\Require $y, X_{origin}, \alpha, \eta \in [0,1],\delta >0, \delta_{pre} >0$
\State Perform Singular Value Decomposition on $X^T X$, and obtain the left singular vectors matrix $U$
\State Select the smallest integer \( q \) such that $s_{q+1} < \eta s_1$
\State Initialization: $k = 0$, $w^{0} = 0 \in \mathbb{R}^q$, $z^{0} = 0 \in \mathbb{R}^m$
\State \(X = X_{origin} U_q \Sigma_q^{-\frac{1}{2}} \in \mathbb{R}^{m\times q}\)
\While{$H(w,z)$ not converged}
    \State $r^{k} = y - Xw^{k}$
    \State (Update $w$):
    $w^{k+1} = w^{k} - \alpha X^T (r^{k} - z^{k}) - \alpha \cdot \delta_{pre} \cdot X^T \frac{z^{k}}{S^2(r^{k})}$ \label{update w}
    \State (Update $z$):
        $z^{k+1}_i = \begin{cases}
        0, & \text{if } | r^{k}_i |  \leq \sqrt{\delta} \\
        r^{k}_i - \frac{\delta}{r^{k}_i}, & \text{if } | r^{k}_i  | > \sqrt{\delta}
    \end{cases}$
    \State $k := k + 1$
\EndWhile
\State Obtain $w_{pre} = w^{k}\in \mathbb{R}^q$ and $z_{pre} = z^{k}\in \mathbb{R}^m$.
\State Initialization: $k = 0$, $w^{0} = [w_{pre}; 0] \in \mathbb{R}^n$, $z^{0} = z_{pre} \in \mathbb{R}^m$
\State \(X = X_{origin} U \Sigma^{-\frac{1}{2}} \in \mathbb{R}^{m\times n}\)
\While{$H(w,z)$ not converged}
    \State $r^{k} = y - Xw^{k}$
    \State (Update $w$):
    $w^{k+1} = w^{k} - \alpha X^T (r^{k} - z^{k}) - \alpha \cdot \delta \cdot X^T \frac{z^{k}}{S^2(r^{k})}$ \label{update w}
    \State (Update $z$):
        $z^{k+1}_i = \begin{cases}
        0, & \text{if } | r^{k}_i |  \leq \sqrt{\delta} \\
        r^{k}_i - \frac{\delta}{r^{k}_i}, & \text{if } | r^{k}_i  | > \sqrt{\delta}
    \end{cases}$
    \State $k := k + 1$
\EndWhile
\State $w^{k} = U \Sigma^{-\frac{1}{2}} w^{k}$
\Ensure Final solution $w^{k}$ and $z^{k}$
\end{algorithmic}
\label{SARMTSAlgorithm}
\end{algorithm}

Under the Type1 and Type2 settings, the condition number of the input matrix \(X\) (i.e., the ratio between its largest and smallest singular values) is typically small, in which case SARMTS degenerates into the standard SARM. To demonstrate the performance of SARMTS under imbalanced singular value distributions, we design a synthetic experiment where the singular values of \(X\) are deliberately made highly unequal. The specific setup is as follows:
\begin{enumerate}
  \item \textbf{Type6}: \(m = 512\), \(\sigma_w = 1\), $X = DB$, \(\sigma = \text{median}(|Xw|)/16\). Corruptions are obtained from $0.5 \times \mathcal{N}(12\sigma, (4\sigma)^2) + 0.5 \times \mathcal{N}(-12\sigma, (4\sigma)^2)$.  \(n \in \{64, 128,192\}\). \(D \in \mathbb{R}^{m \times n} \), \(D_{ij} \in \mathcal{N}(0, 1)\).
\end{enumerate}
where \( B \in \mathbb{R}^{n \times n} \) is constructed as the following way.
First, we generate a diagonal vector $\mathbf{d}$ of length \( n \) exponentially decreasing values by first constructing \( n \) linearly spaced points $d_i$ over the interval \([-0.1, \log(1.1/n)]\) and then applying the exponential function element-wise.
The diagonal entries of matrix \(B\) are set to $\mathbf{d}$, and the elements in the \(j\)-th column are assigned as $B_{ij} =  \frac{1-\mathbf{d}_j}{n-1}, \forall i$. The resulting matrix \( X = DB \) possesses a highly unbalanced singular value distribution. As an example, Fig. \ref{SVDsimu} shows the singular value distribution.
The experimental results are shown in Fig. \ref{type6}.

\begin{figure}[htbp]
  \centering
  \begin{subfigure}[b]{0.33\textwidth}
    \centering
    \includegraphics[width=\linewidth]{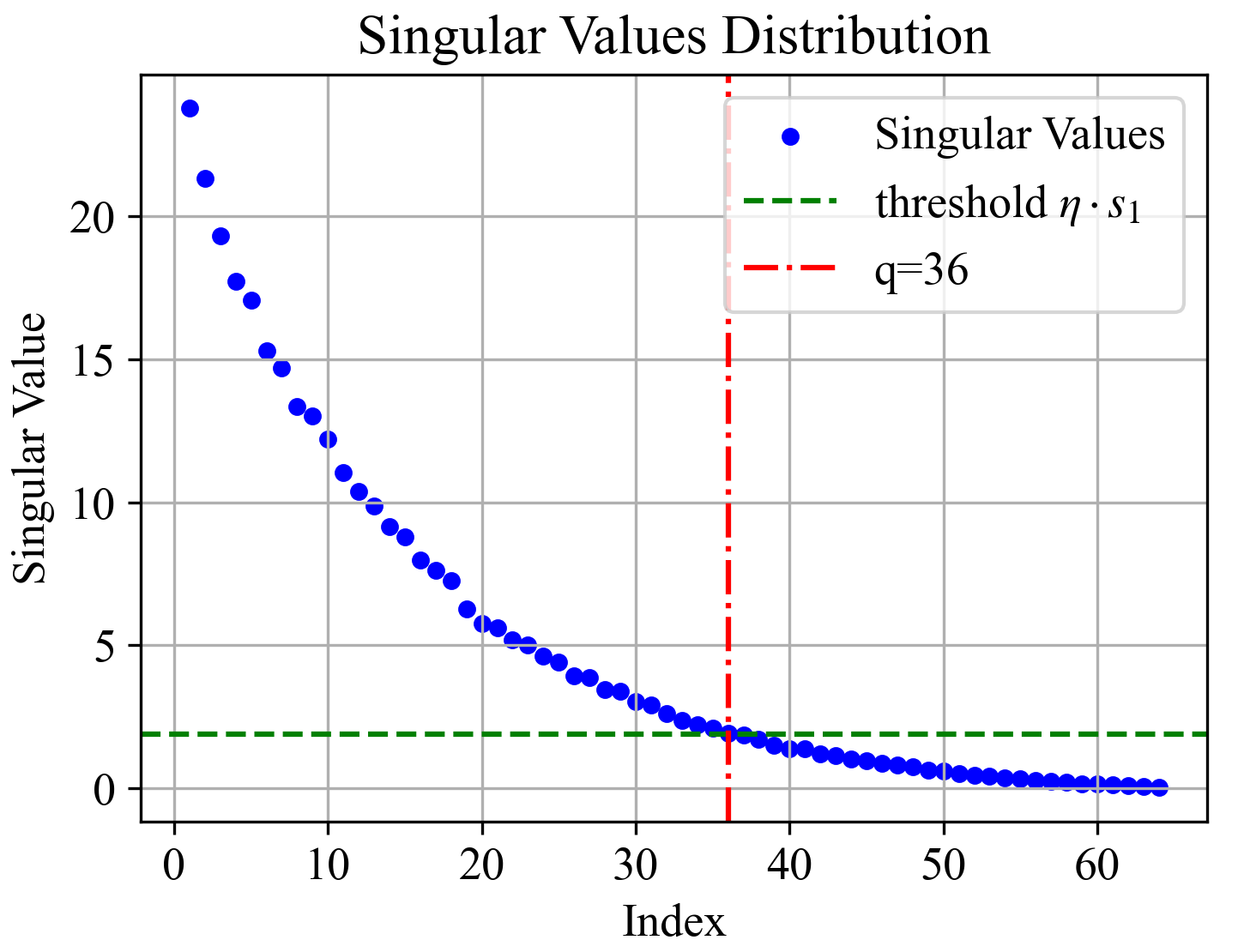}
    \caption{Type6 Dim(x)=64}
  \end{subfigure}
  \hfill
  \begin{subfigure}[b]{0.33\textwidth}
    \centering
    \includegraphics[width=\linewidth]{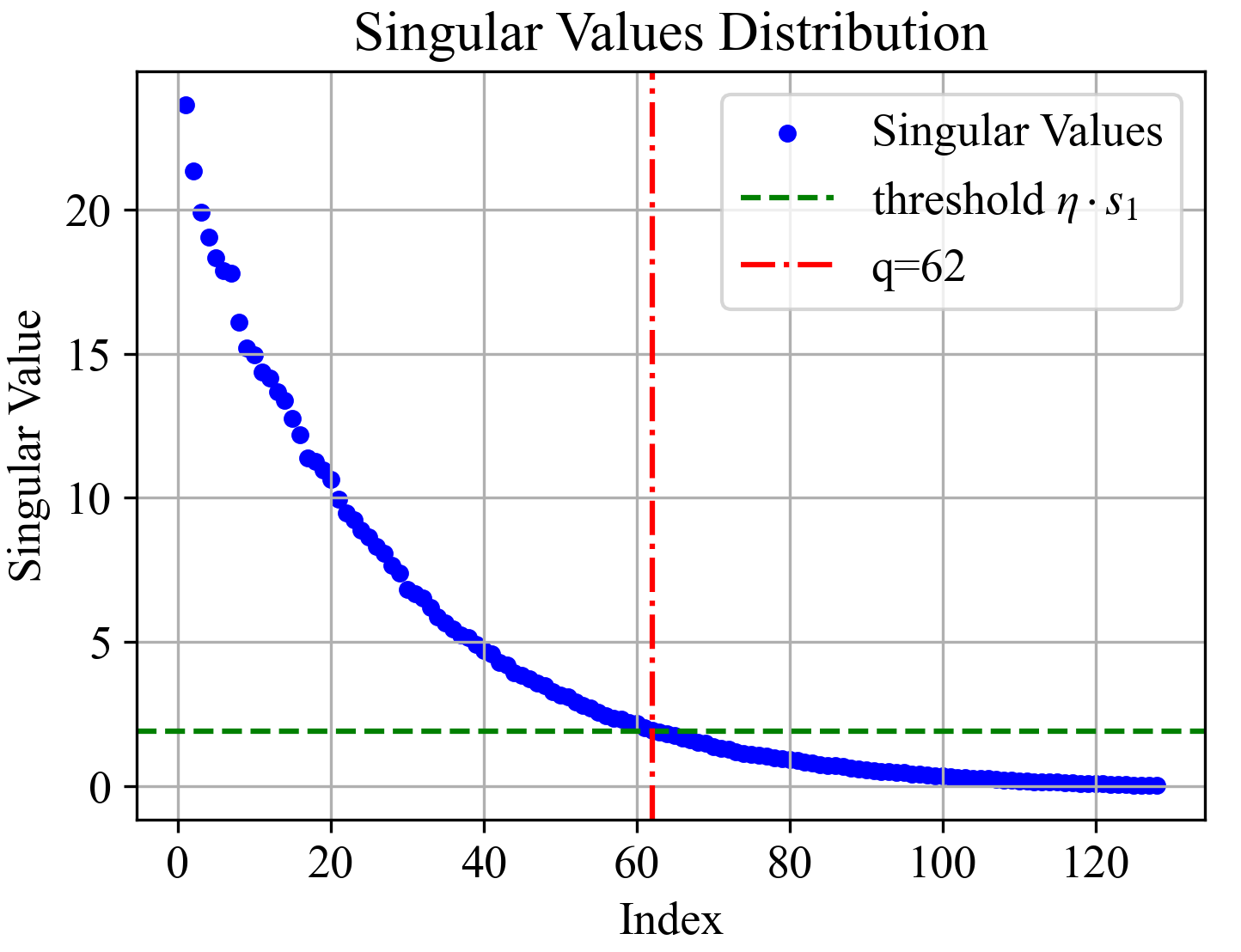}
    \caption{Type6 Dim(x)=128}
  \end{subfigure}
  \hfill
  \begin{subfigure}[b]{0.33\textwidth}
    \centering
    \includegraphics[width=\linewidth]{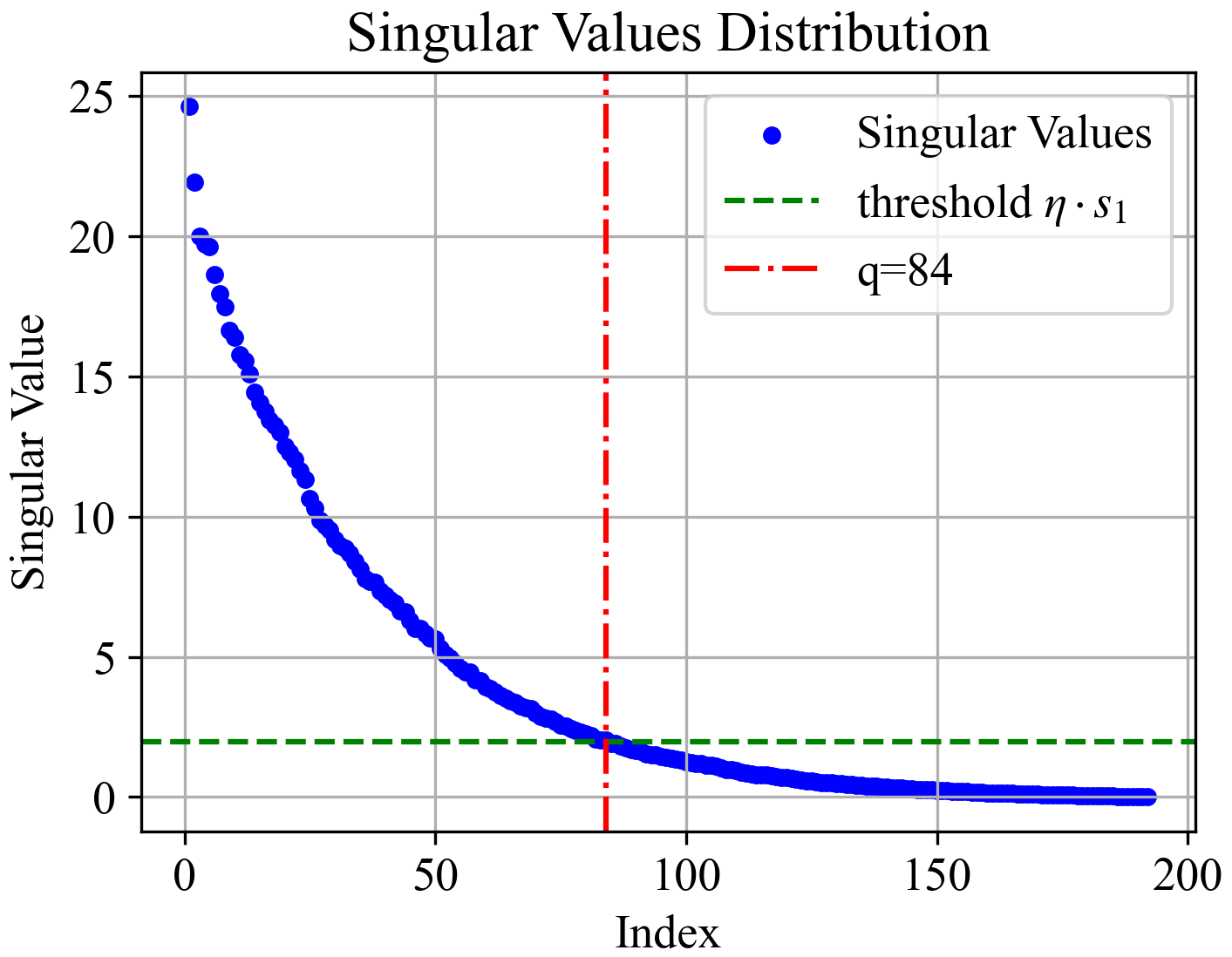}
    \caption{Type6 Dim(x)=192}
  \end{subfigure}
  
  \caption{The singular value distribution for Type6}
  \label{SVDsimu}
\end{figure}

\begin{figure}[htbp]
  \centering
  \begin{subfigure}[b]{0.33\textwidth}
    \centering
    \includegraphics[width=\linewidth]{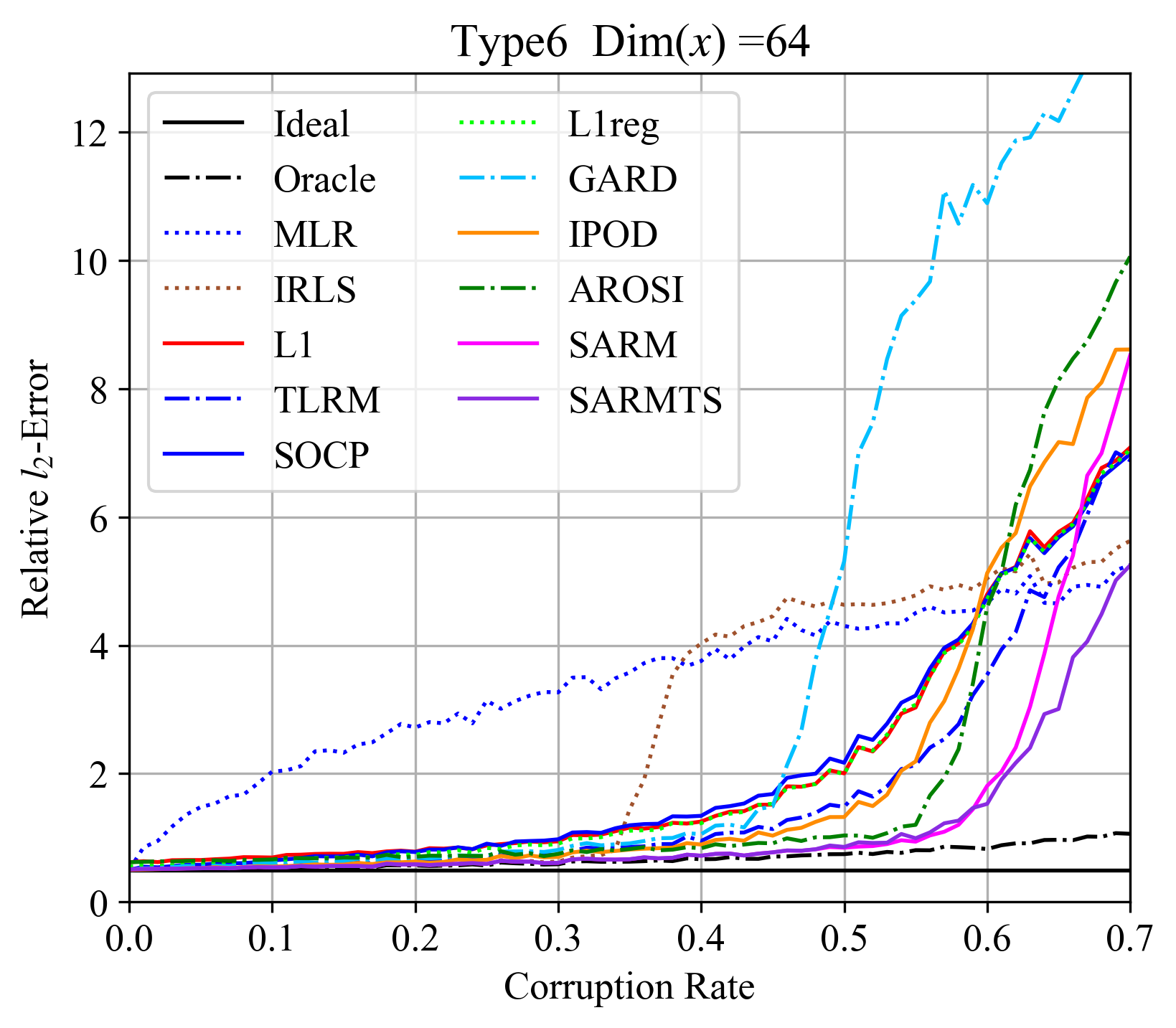}
  \end{subfigure}
  \hfill
  \begin{subfigure}[b]{0.33\textwidth}
    \centering
    \includegraphics[width=\linewidth]{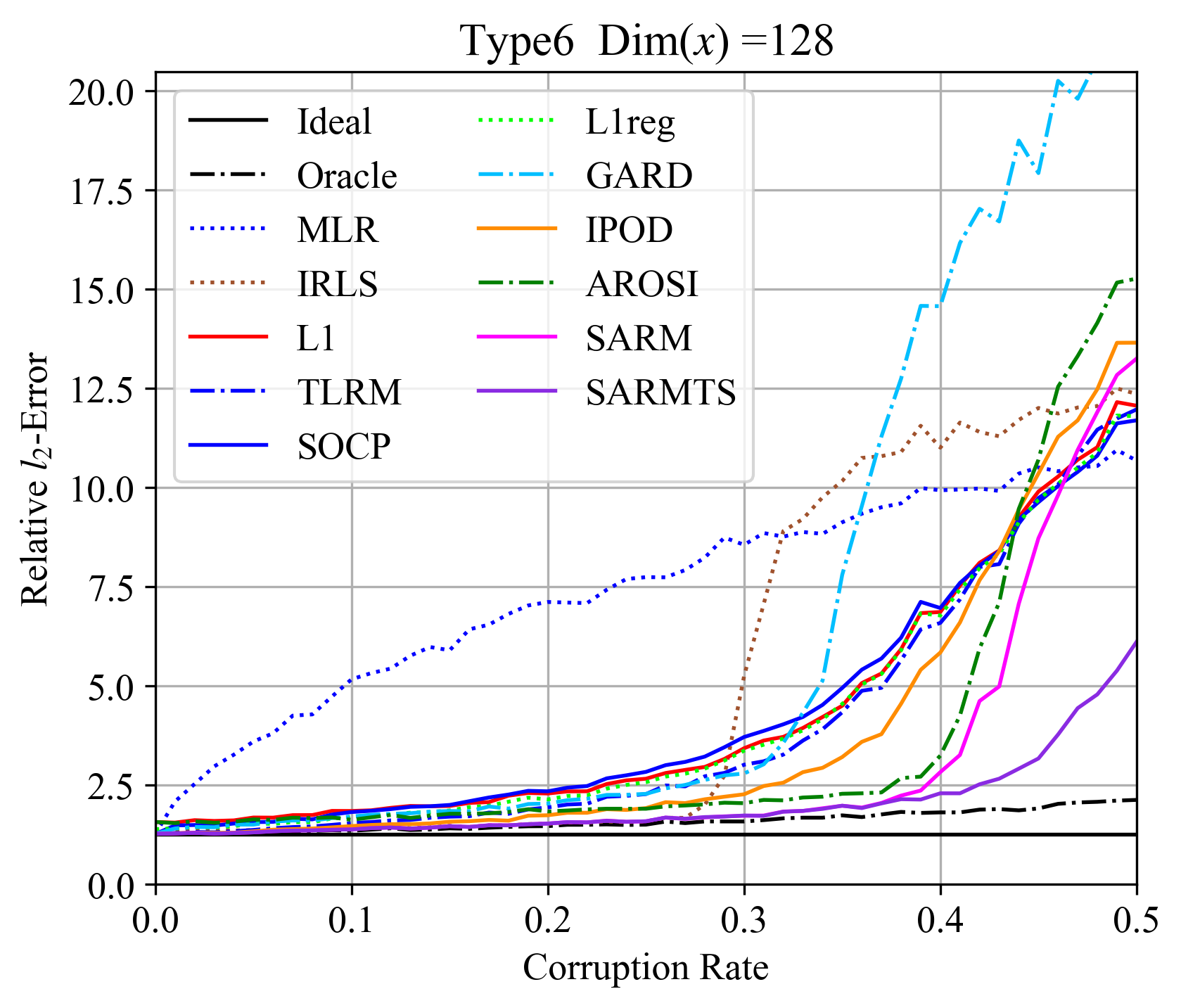}
  \end{subfigure}
  \hfill
  \begin{subfigure}[b]{0.33\textwidth}
    \centering
    \includegraphics[width=\linewidth]{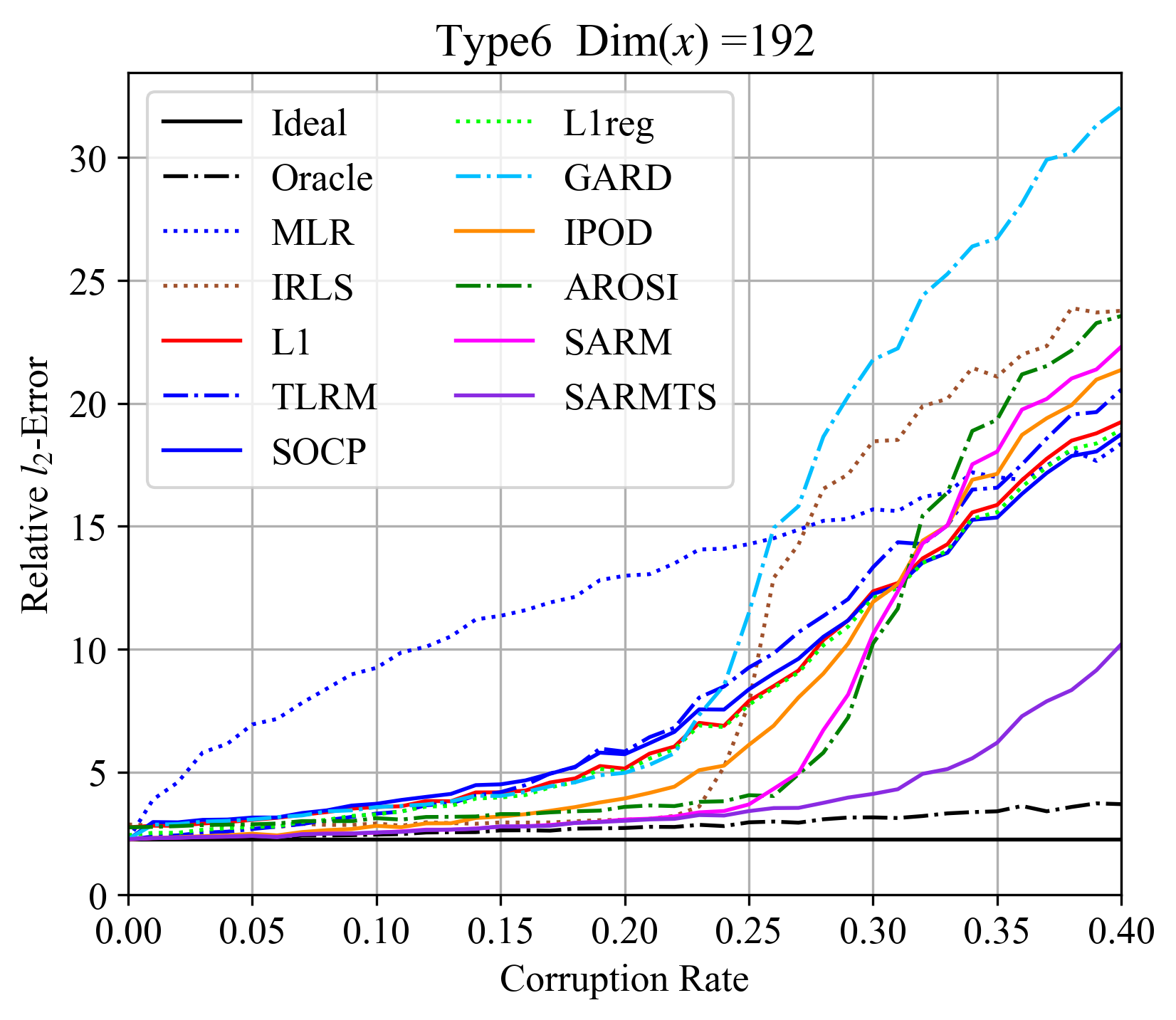}
  \end{subfigure}
  \caption{Empirical Evaluation of Robustness: Type6}
  \label{type6}
\end{figure}

%--------------------------------------------------------------------------------------------------
% Applied Studies
\section{Applied Studies on Load Forecasting} \label{Applied}

%--------------------------------------------------------------------------------------------------
% Data Integrity Attacks and Load Forecasting
\subsection{Data Integrity Attacks and Load Forecasting}

The power grid has become an essential infrastructure in modern society, and ensuring its security is vital for the stable and continuous functioning of social and economic activities.
The coordinated cyberattacks on three Ukrainian power companies in 2015 \cite{case2016analysis}, along with the cyberattack targeting a major U.S. load forecasting service provider in 2018, have brought increasing attention to the potential threats that cyberattacks pose to the reliability of power systems. Among various forms of cyber threats, integrity attacks, such as the manipulation or falsification of input data, are particularly threatening due to their covert nature, making them more dangerous than other types of cyber threats \cite{hong2022data}.

In this section, we apply SARM (SARMTS) to a load forecasting task subject to data integrity attacks in order to evaluate its practical performance. A typical load forecasting model is Tao's vanilla benchmark. It can be expressed as:
\begin{align} 
y_t= & \beta_0 + \beta_1 Trend_t + \beta_2 H_t + \beta_3 W_t + \beta_4 M_t \notag \\ 
& + \beta_5 H_t W_t + f(T_t) 
\label{Tao}
\end{align}
and $f \left( T \right)$ is
\begin{align*} 
f \left( T \right)= & \beta_6 T + \beta_7 \left(T\right)^2 + \beta_8 \left(T\right)^3 + \beta_9 T_t H_t + \beta_{10} \left( T \right)^2
H_t + \\ &  \beta_{11} \left(T\right)^3 + \beta_{12} T M_t + \beta_{13} \left(T\right)^2 M_t + \beta_{14} \left(T\right)^3 M_t, 
\end{align*}
\par where $y_t$ denotes the load at time $t$, 
$W_t$ is categorical variables indicating 7 days of a week.
$M_t$ is categorical variables indicating 12 months. 
$H_t$ is categorical variables indicating 24 hours. 
$W_t, M_t, H_t$ are all one-hot encoding vectors. 
$T_t$ is the temperature at time $t$.
$Trend_t$ is a variable of the increasing integers representing a linear trend. 
After removing linearly dependent components, a total of 285 feature variables were retained.

Tao's vanilla benchmark is a Multiple Linear Regression (MLR) load forecasting model. When historical load data are maliciously tampered with, the load forecasting model fitted on the corrupted data will produce significant forecasting errors, which may lead to disruptions in energy dispatch, resulting in substantial economic losses or widespread blackouts \cite{luo2018robust,luo2023robust}.

Next, we briefly introduce two types of data integrity attacks discussed in this section.

When data attackers inject positive attacks on the load datasets, forecasting models trained on these datasets affected by malicious attacks often lead to an overestimation of power load, which may lead to severe economic losses \cite{luo2023robust}.
Similar to the setups in \cite{luo2018robust,luo2023robust,yue2019descriptive,sobhani2020temperature,zheng2020load}, 
the data integrity attacks targeting economic losses is mathematically simulated by Positively-biased Uniform and Gaussian Attacks, respectively.
On the other hand, decreasing historical load values may result in under-estimating, which subsequently elevates the risk of brownouts. Similar to \cite{luo2023robust}, 
we simulate the data integrity attacks aimed at inducing system blackouts by Negatively-biased Uniform Attacks. Specifically, we simulate data integrity attacks by three ways:
\begin{enumerate}
  \item \textbf{Positively-biased Uniform Attacks}: Randomly pick $k\%$ of load values being deliberately increased by $p\%$. Vary $k$ from $0$ to $80$. $p\%$ is generated by the uniform distribution $U(a , b)$;
  \item \textbf{Positively-biased Gaussian Attacks}: Randomly pick $k\%$ of load values being deliberately increased by $p\%$. Vary $k$ from $0$ to $80$. $p\%$ is generated by the normal distribution $N(\mu , \sigma^2)$;
  \item \textbf{Negatively-biased Uniform Attacks}: Randomly pick $k\%$ of load values being deliberately reduced by $p\%$. Vary k from $10$ to $80$. $p\%$ is generated by the uniform distribution $U(a , b)$;
\end{enumerate}

%--------------------------------------------------------------------------------------------------
% Data Integrity Attacks and Load Forecasting
\subsection{Evaluation Metrics and Data Description}

In accordance with the conventional evaluation metrics employed in load forecasting studies \cite{luo2023robust,xie2017variable}, the Mean Absolute Percentage Error (MAPE) is utilized as the principal criterion for evaluating forecasting accuracy. 
\begin{align*}
MAPE = \frac{1}{n_t} \sum_{i=1}^{n_t} \left\vert \frac{y_i - \widehat{y_i} }{y_i} \right\vert \times 100\%.
\end{align*}  

The datasets \footnote{https://www.iso-ne.com/isoexpress/web/reports/pricing/-/tree/zone-info} utilized for our applied studies are from ISO New England (ISONE), which serves as the power system operator for the New England region of the United States. 
These datasets contain hourly load and temperature data for the ISO New England, covering eight load zones: Maine (ME), New Hampshire (NH), Vermont (VT), Connecticut (CT), Rhode Island (RI), Southeastern Massachusetts (SEMASS), Western Central Massachusetts (WCMASS), and Northeastern Massachusetts (NEMASS). In addition, the total electrical load and average temperature data for three regions in Massachusetts (MASS) and all regions mentioned above (TOTAL) are also included.
Similar to \cite{luo2023robust}, we pick two years of hourly load (2013, 2014) for training and one year (2015) for testing, and apply Min-Max normalization for the input variables in (\ref{Tao}).

%--------------------------------------------------------------------------------------------------
% Data Integrity Attacks and Load Forecasting
\subsection{Evaluation Metrics and Data Description}
 
We first consider singular value distributions. As a representative example, we present and analyze the TOTAL dataset. The singular value distribution of the input matrix $X$ is shown in Fig. \ref{SVDload}.
It can be observed that the singular values of $X$ exhibit a significant imbalance and form a distinct staircase-like pattern.
Similar imbalanced singular value distributions are observed across the datasets from other load zones as well.
Therefore, we adopt SARMTS instead of SARM to better exploit the imbalance.

\begin{figure}[htbp]
  \centering
    \centering
    \includegraphics[width=0.8\textwidth]{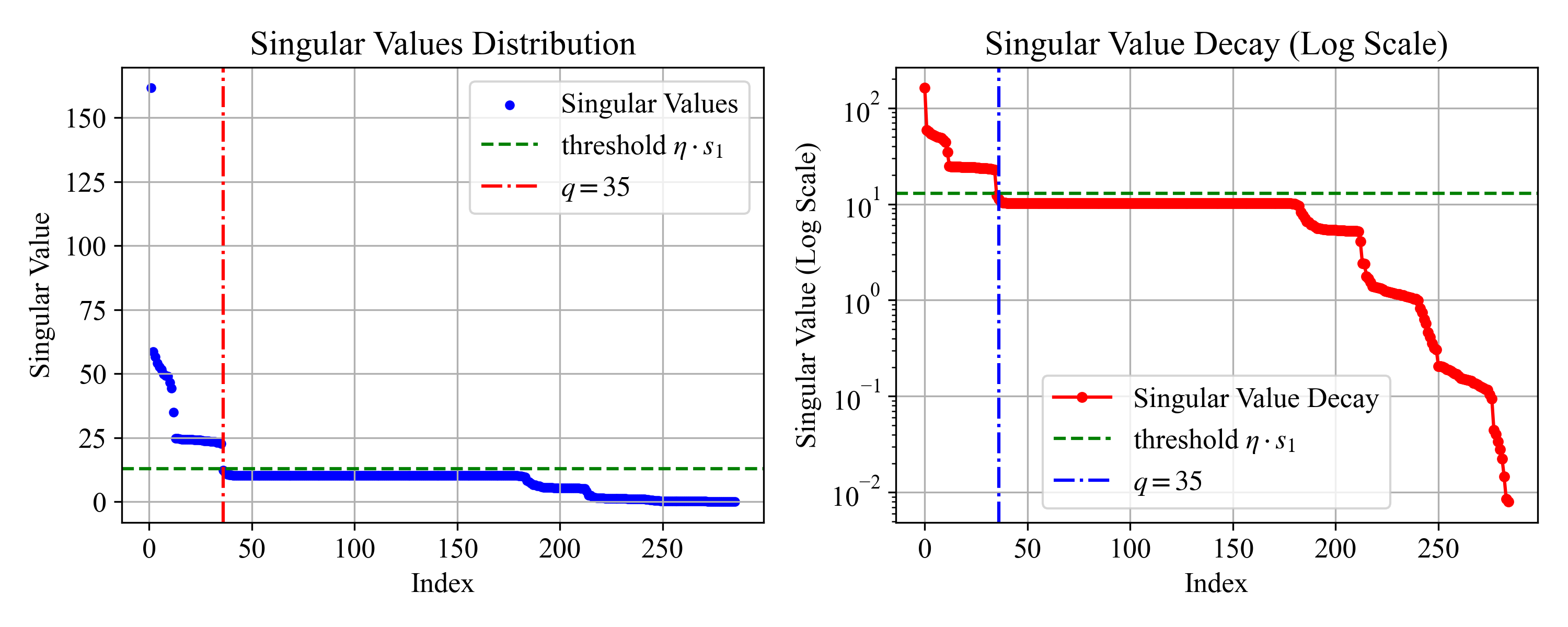}
  \caption{The singular value distribution for the ISONE load dataset.}
  \label{SVDload}
\end{figure}

An ideal robust forecasting model should maintain satisfactory performance even in the absence of data integrity attacks. To this end, we compare the forecasting performance of SARMTS against Tao's vanilla benchmark (MLR) model across ten datasets, as shown in \ref{noattack}.
The difference between the two models is negligible, and their average performance is virtually equivalent under the no-attack scenario.

\begin{table}[htbp] 
\centering
\caption{The GenCo coefficients for the 10-unit System.}
% \begin{tabular}{cccccccccc}
\begin{tabular}{m{0.08\textwidth}<{\centering} |
  m{0.05\textwidth}<{\centering} 
  m{0.05\textwidth}<{\centering} 
  m{0.05\textwidth}<{\centering} 
  m{0.07\textwidth}<{\centering} 
  m{0.05\textwidth}<{\centering} 
  m{0.05\textwidth}<{\centering} 
  m{0.07\textwidth}<{\centering} 
  m{0.05\textwidth}<{\centering} 
  m{0.05\textwidth}<{\centering} 
  m{0.07\textwidth}<{\centering} 
  m{0.05\textwidth}<{\centering} 
  }
\toprule
\textbf{Model} & \textbf{CT} & \textbf{MASS} & \textbf{ME} & \textbf{NEMASS} & \textbf{NH} & \textbf{RI} & \textbf{SEMASS}& \textbf{TOTAL}& \textbf{VT} & \textbf{WCMASS} & \textbf{AVE}\\ 
\midrule
MLR     &4.29 	&3.77 	&4.73 	&3.97 	&3.59 	&3.85 	&4.66 	&3.50 	&4.44 	&4.23     &4.10 \\ 
SARMTS  &4.22 	&3.80 	&4.87 	&3.92 	&3.59 	&3.80 	&4.55 	&3.52 	&4.36 	&4.34 	  &4.10 \\ 
\bottomrule
\end{tabular}
\label{noattack}
\end{table}

\begin{figure}[htbp]
  \centering
  \begin{subfigure}[b]{0.33\textwidth}
    \centering
    \includegraphics[width=\linewidth]{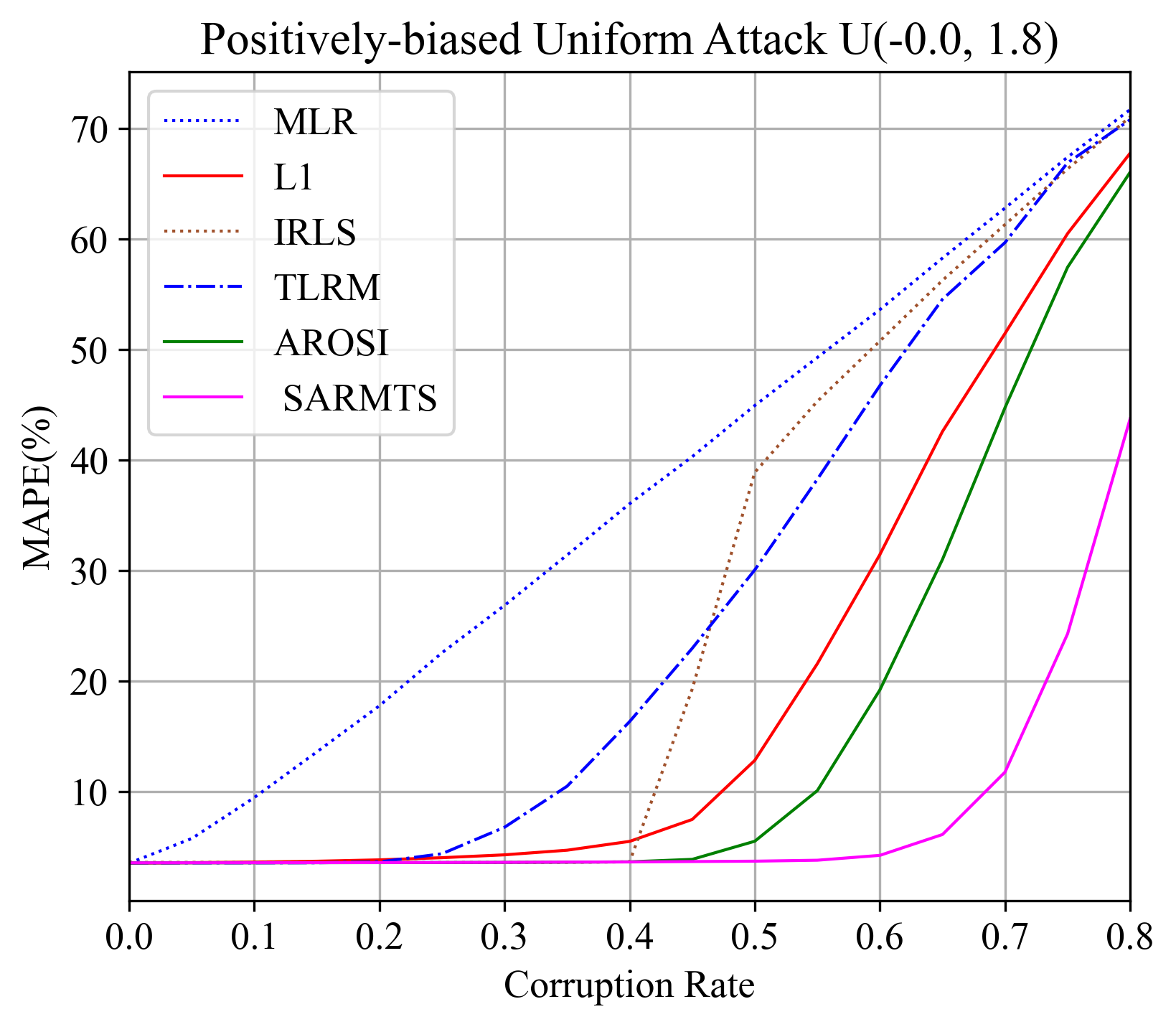}
  \end{subfigure}
  \hfill
  \begin{subfigure}[b]{0.33\textwidth}
    \centering
    \includegraphics[width=\linewidth]{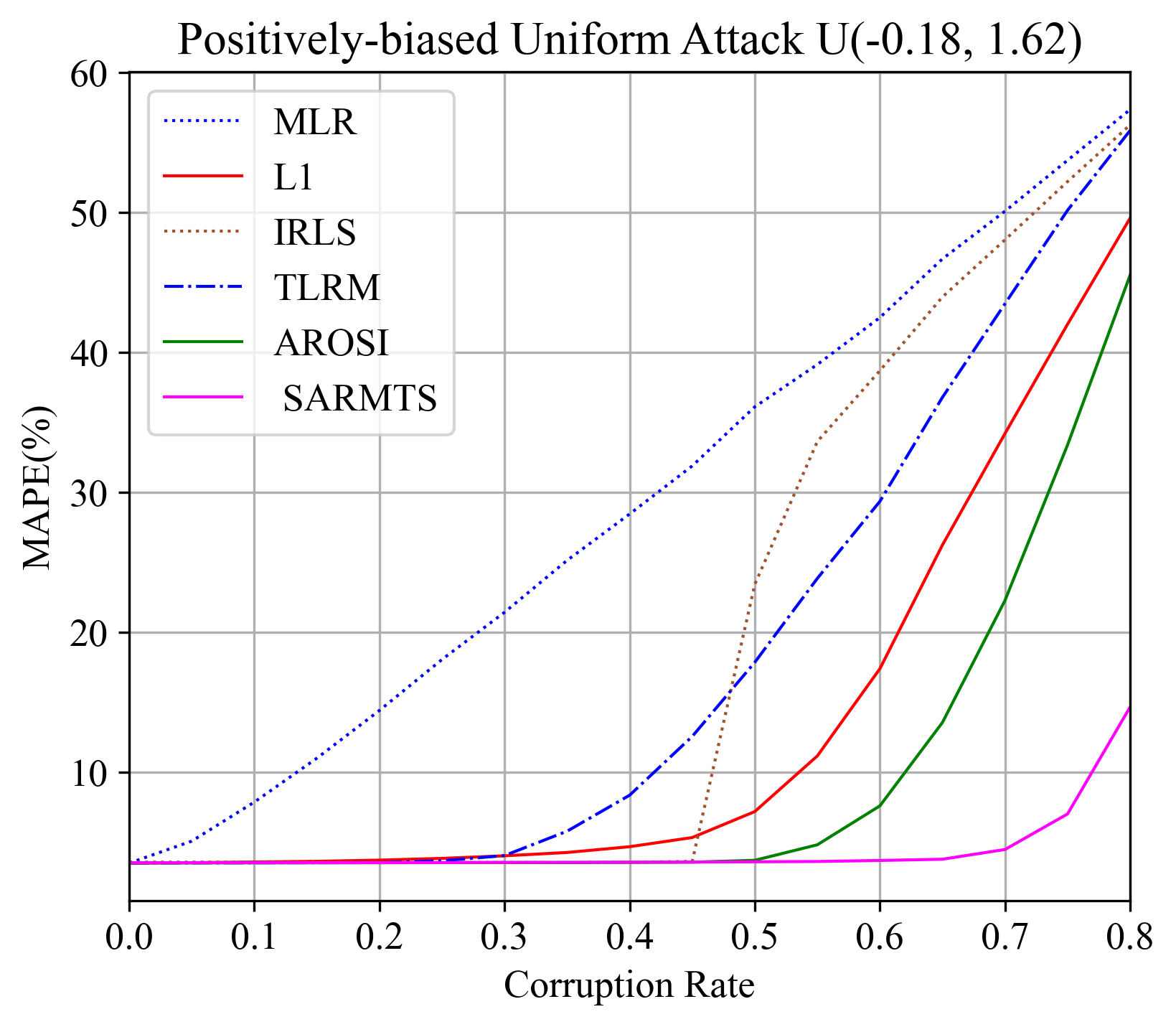}
  \end{subfigure}
  \hfill
  \begin{subfigure}[b]{0.33\textwidth}
    \centering
    \includegraphics[width=\linewidth]{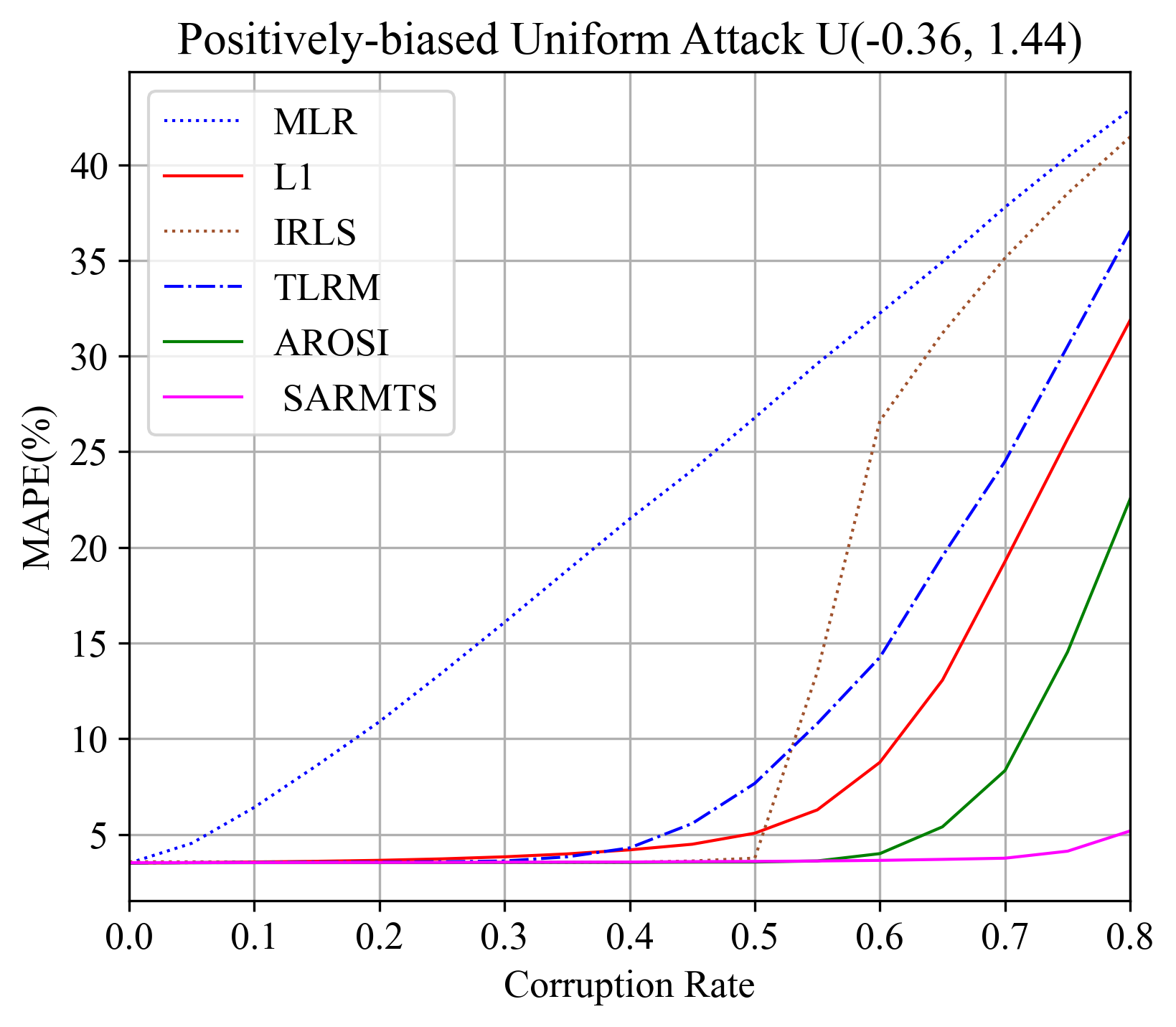}
  \end{subfigure}
  % \hfill
  % \begin{subfigure}[b]{0.24\textwidth}
  %   \centering
  %   \includegraphics[width=\linewidth]{Fig/load/Positively-biased Uniform Attack U(-0.54, 1.26)relativeL2.png}
  % \end{subfigure}
  \caption{Positively-biased Uniform Attacks}
\end{figure}

\begin{figure}[htbp]
  \centering
  \begin{subfigure}[b]{0.33\textwidth}
    \centering
    \includegraphics[width=\linewidth]{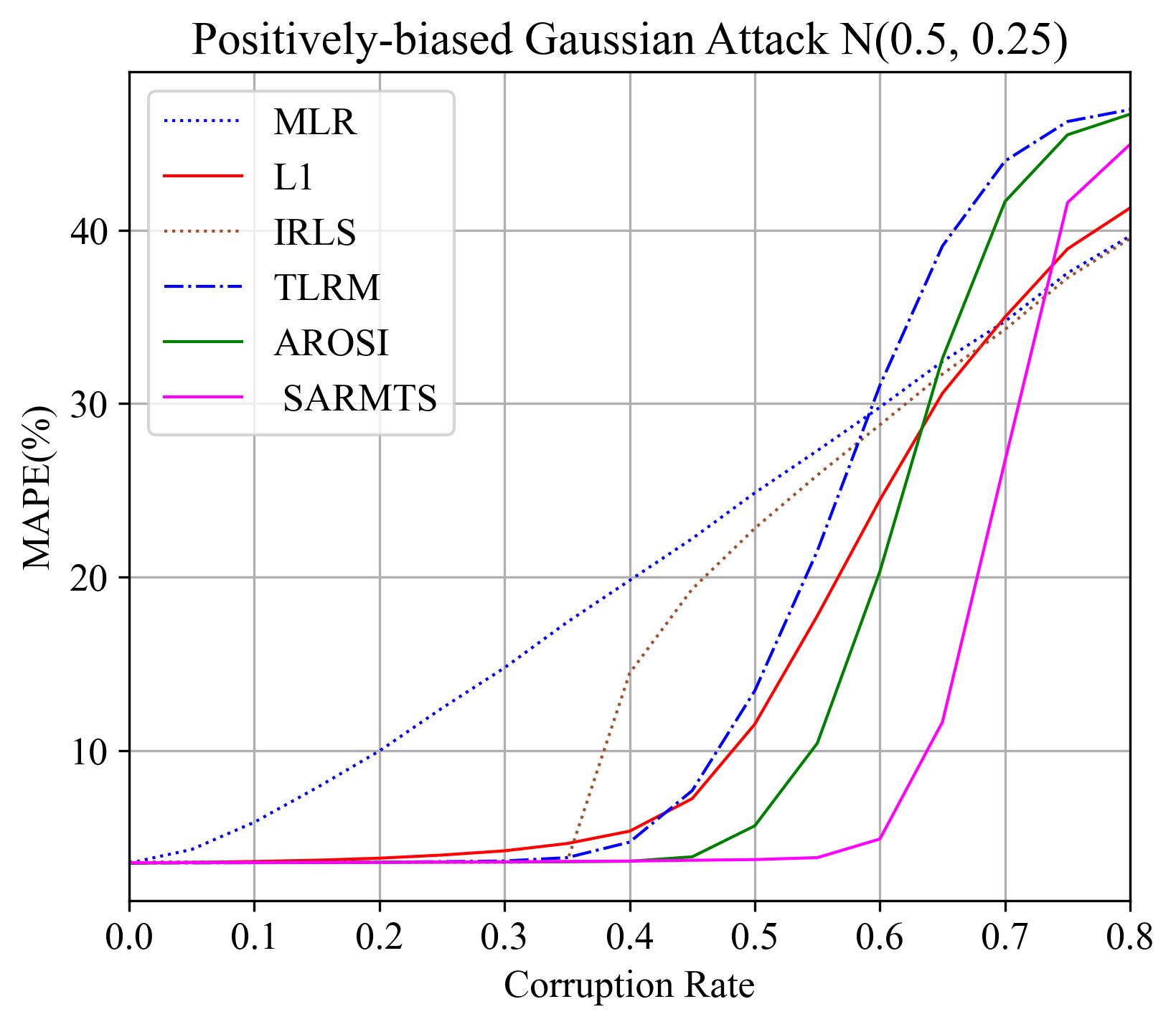}
  \end{subfigure}
  \hfill
  \begin{subfigure}[b]{0.33\textwidth}
    \centering
    \includegraphics[width=\linewidth]{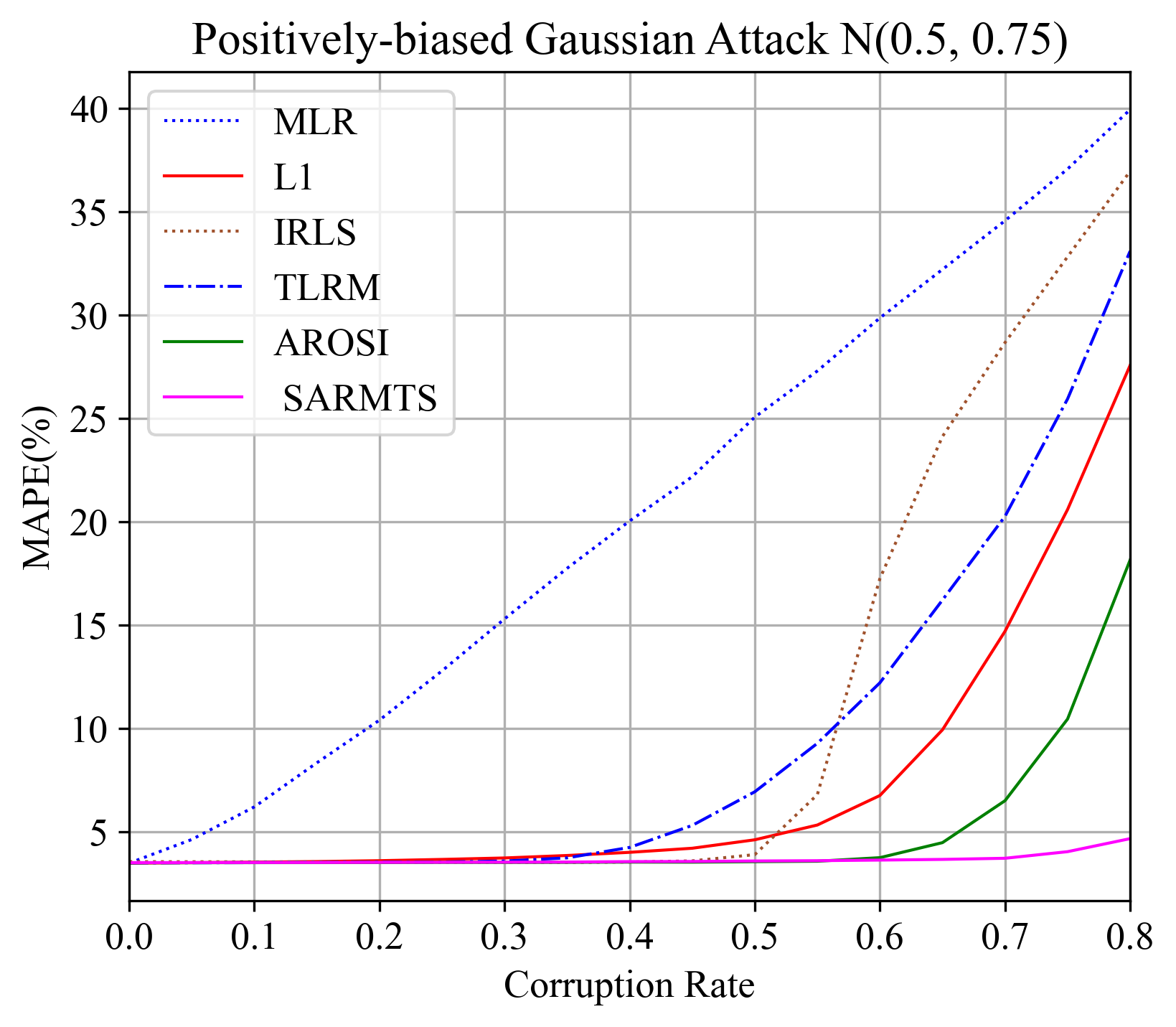}
  \end{subfigure}
  \hfill
  \begin{subfigure}[b]{0.33\textwidth}
    \centering
    \includegraphics[width=\linewidth]{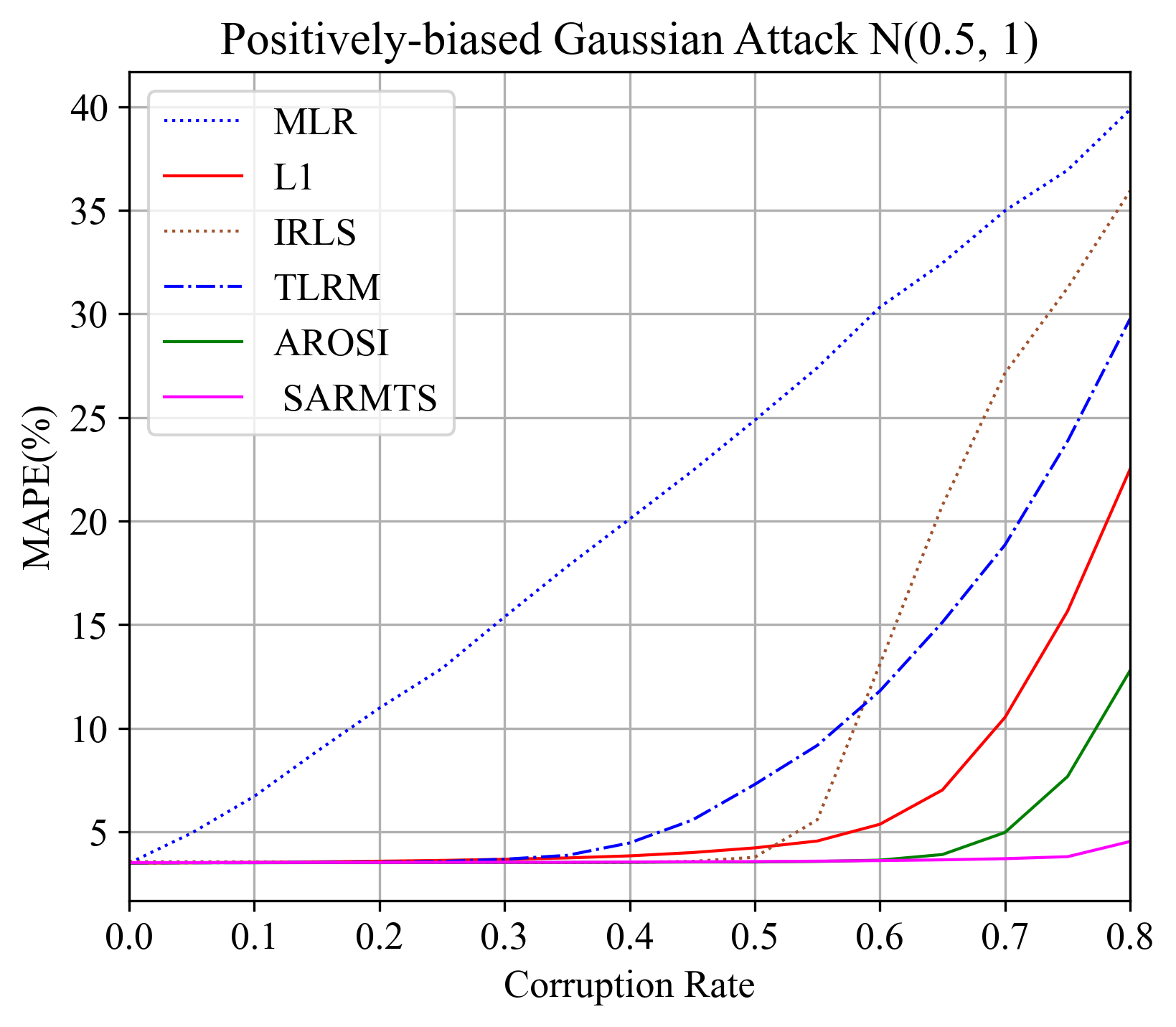}
  \end{subfigure}

  \vspace{1em}

  \begin{subfigure}[b]{0.33\textwidth}
    \centering
    \includegraphics[width=\linewidth]{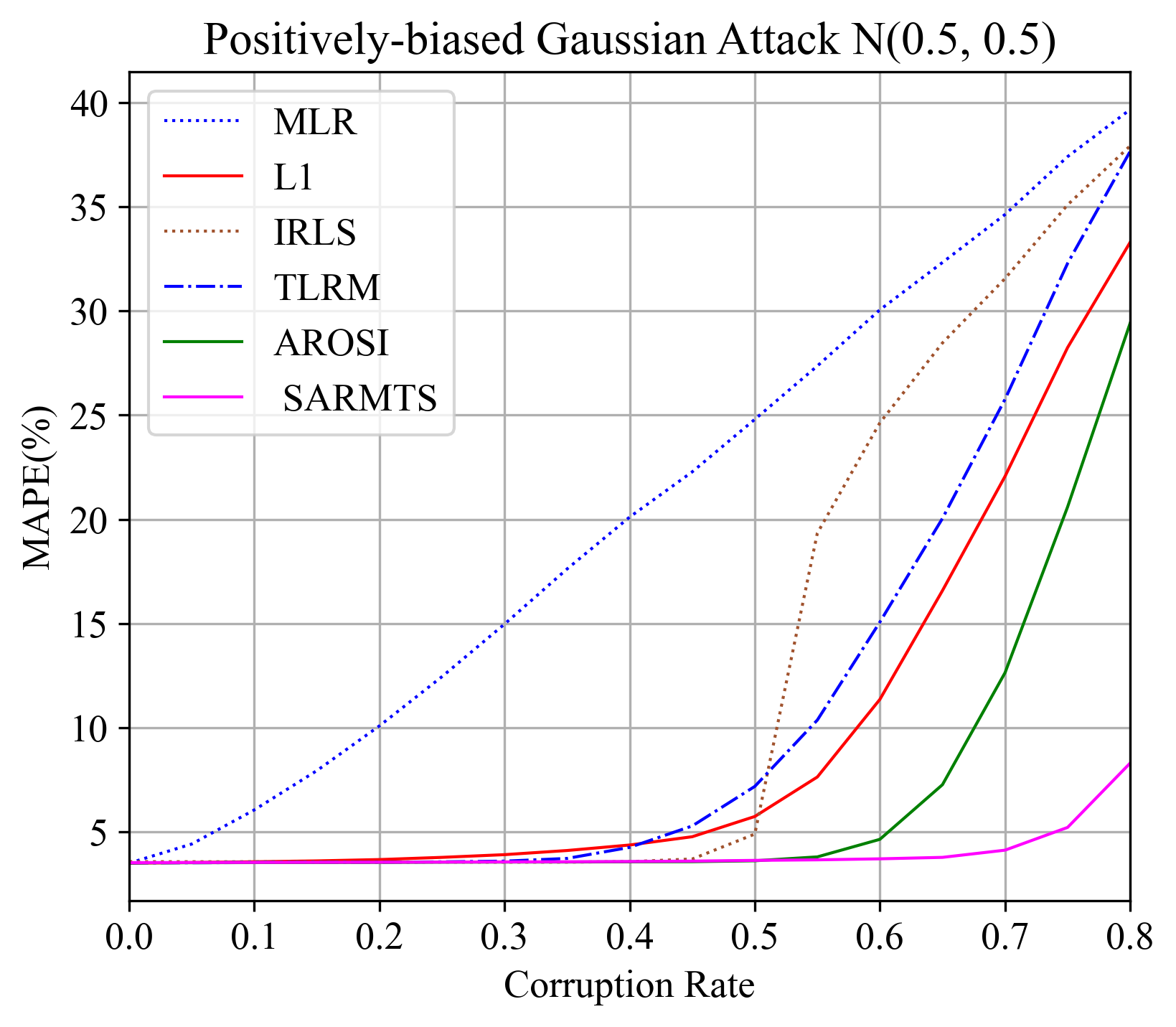}
  \end{subfigure}
  \hfill
  \begin{subfigure}[b]{0.33\textwidth}
    \centering
    \includegraphics[width=\linewidth]{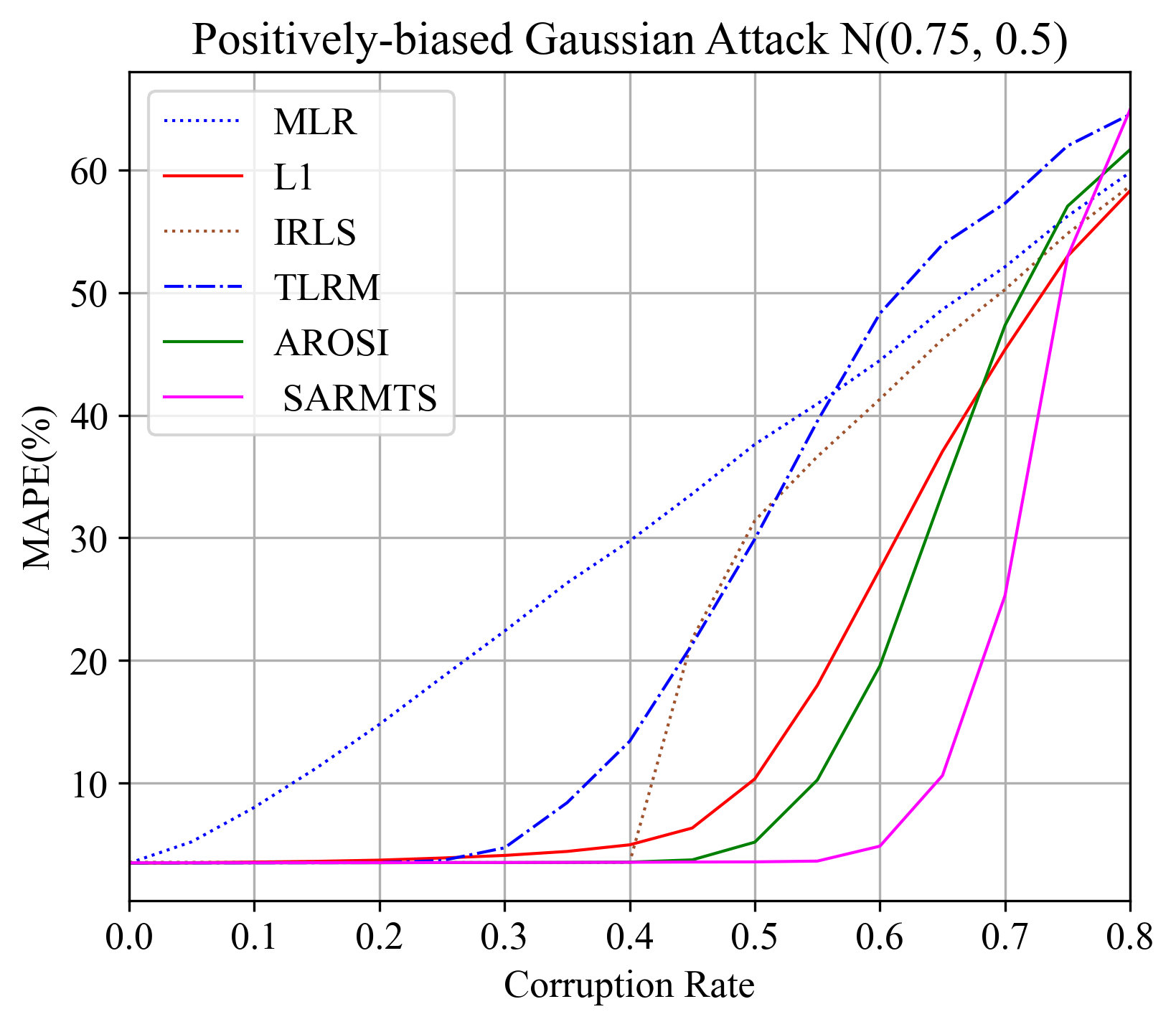}
  \end{subfigure}
  \hfill
  \begin{subfigure}[b]{0.33\textwidth}
    \centering
    \includegraphics[width=\linewidth]{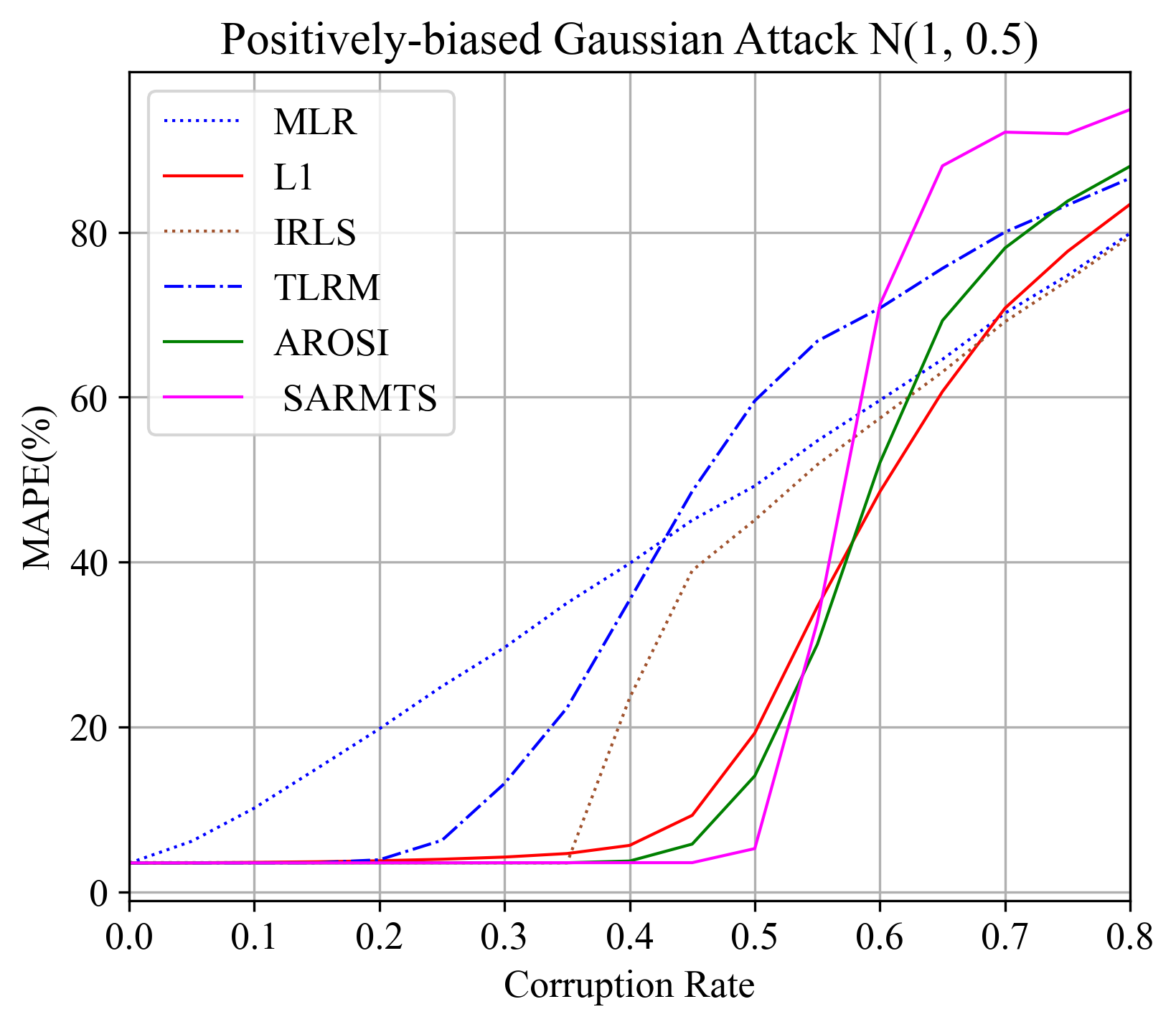}
  \end{subfigure}
  \caption{Positively-biased Gaussian Attacks}
\end{figure}

\begin{figure}[htbp]
  \centering
  \begin{subfigure}[b]{0.33\textwidth}
    \centering
    \includegraphics[width=\linewidth]{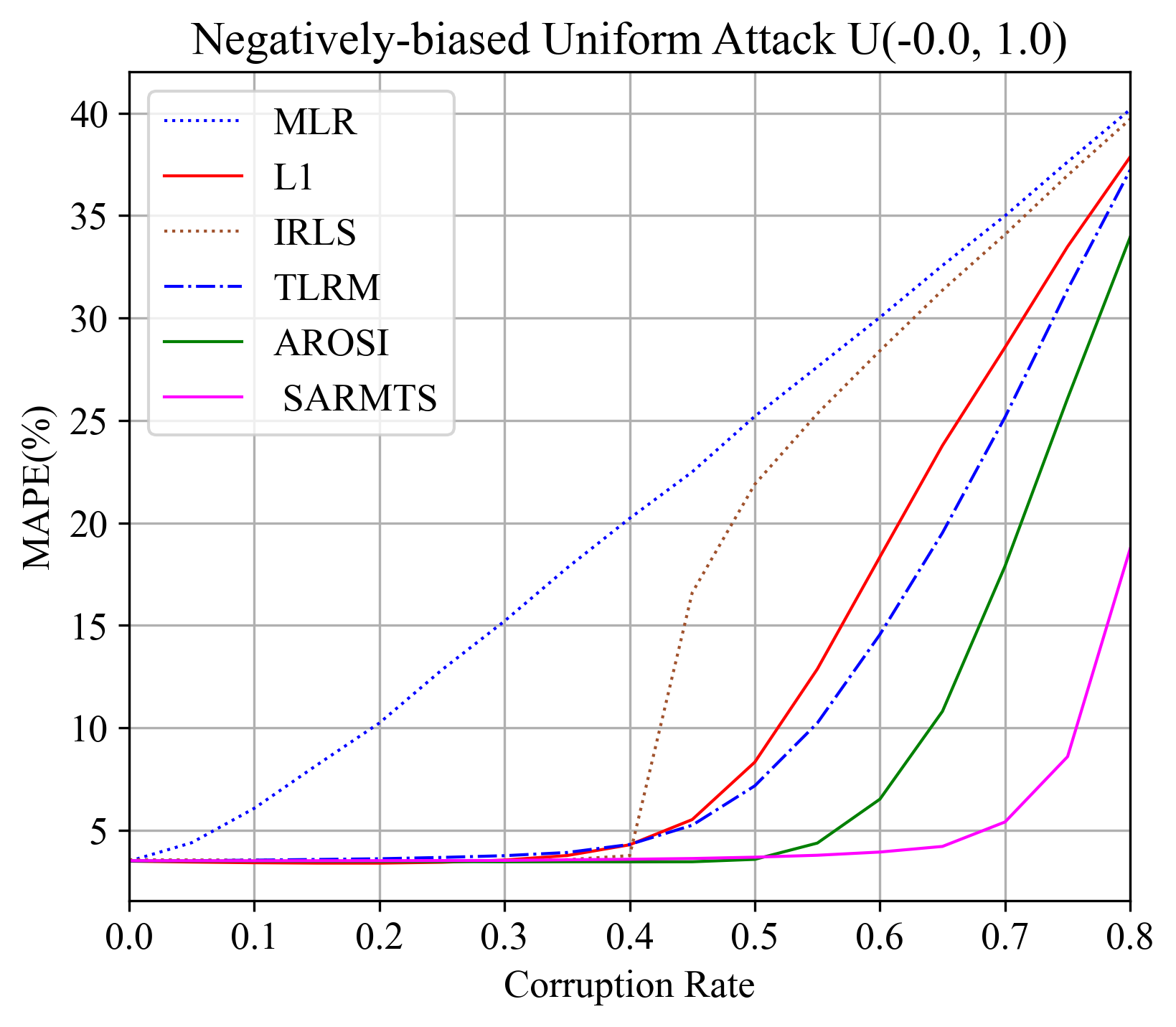}
  \end{subfigure}
  \hfill
  \begin{subfigure}[b]{0.33\textwidth}
    \centering
    \includegraphics[width=\linewidth]{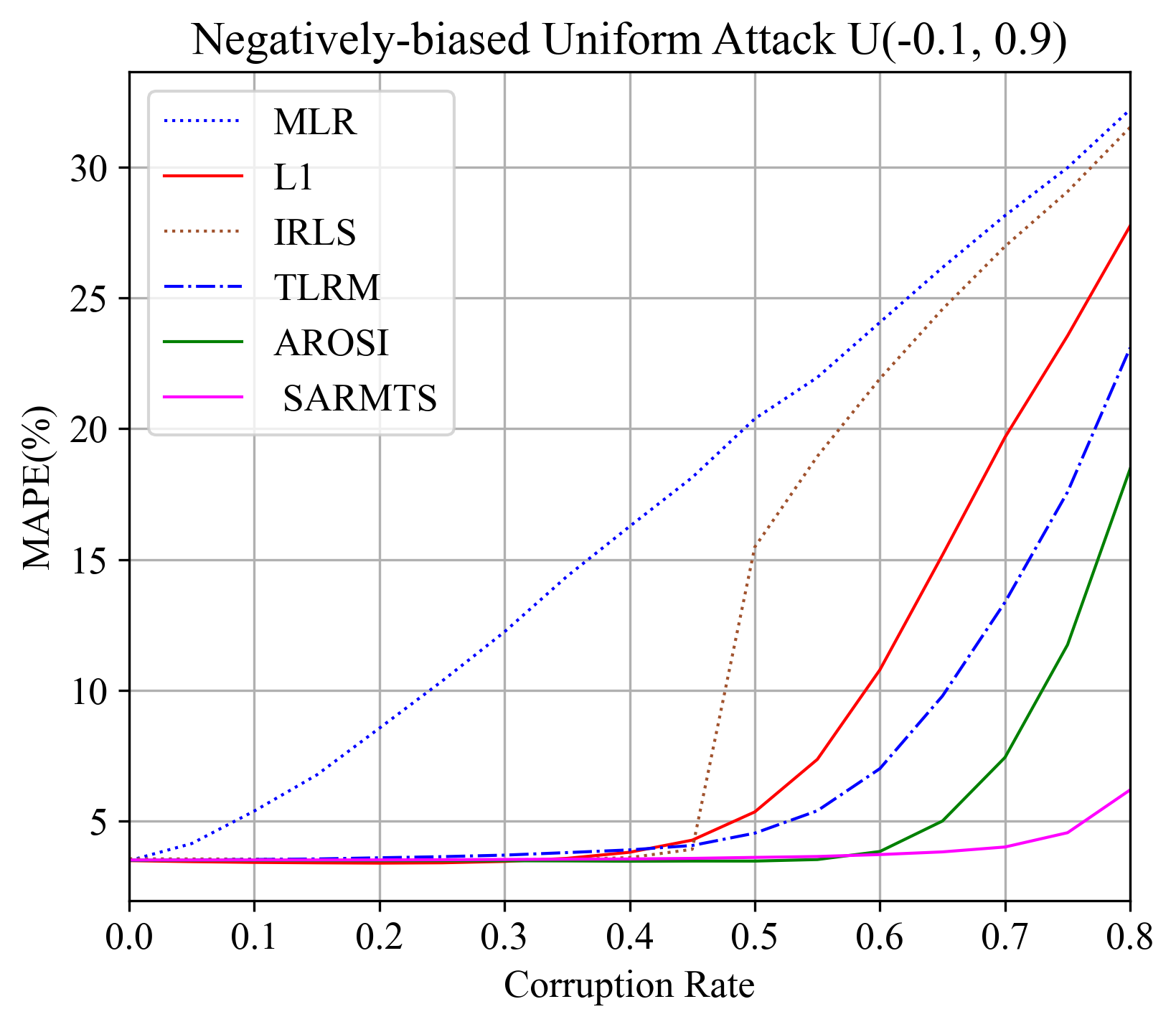}
  \end{subfigure}
  \hfill
  \begin{subfigure}[b]{0.33\textwidth}
    \centering
    \includegraphics[width=\linewidth]{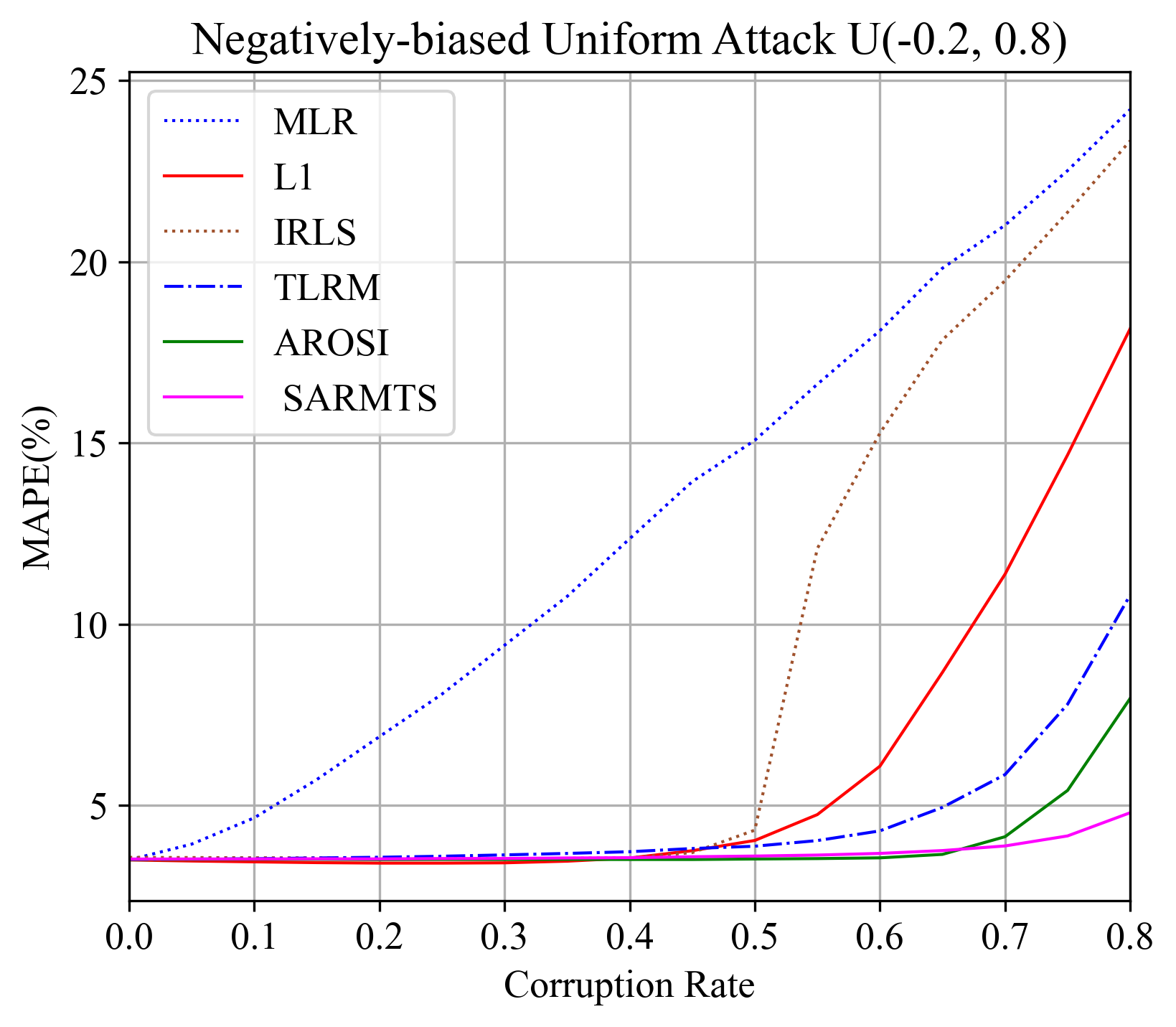}
  \end{subfigure}
  % \hfill
  % \begin{subfigure}[b]{0.24\textwidth}
  %   \centering
  %   \includegraphics[width=\linewidth]{Fig/load/Negatively-biased Uniform Attack U(-0.3, 0.7)relativeL2.png}
  % \end{subfigure}
  \caption{Negatively-biased Uniform Attacks}
\end{figure}

%--------------------------------------------------------------------------------------------------
% Applied Studies
\section{Conclusion}
Your conclusion here

%--------------------------------------------------------------------------------------------------
% Applied Studies
\section*{Acknowledgments}
This was supported in part by......

%--------------------------------------------------------------------------------------------------
% Proof of Convergence of SARM
\section*{Appendix A: Proof of Convergence of SARM}

%--------------------------------------------------------------------------------------------------
% Proof of Sufficient decrease
\subsection{The Proof of Theorem \ref{Sufficient decrease}}

The proof framework for general non-convex optimization problems typically requires that \( H(w, z) \) is differentiable with respect to all variables $w$ and $z$ and is a L-smooth function with respect to $w$ and $z$ respectively in order to ensure the existence of a quadratic upper bound. However, \( H(w, z) \) clearly does not satisfy this condition. Therefore, we need to prove an important lemma.
\begin{lemma} \textbf{(Quadratic Upper Bound)}
For any $w^{k+1}, w^k, z^{k+1}, z^k$, there exists a constant \( L > 0 \) such that the following inequality holds:
\begin{equation}
    H(w^{k+1}, z^k) \leq H(w^k, z^k) + \nabla_w H(w^k, z^k)^T (w^{k+1} - w^k) + \frac{L}{2} \| w^{k+1} - w^k \|_2^2,
\end{equation}
\label{Quadratic Upper Bound}
\end{lemma}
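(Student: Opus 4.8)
The plan is to reduce the claimed inequality to the standard descent lemma by showing that, with $z=z^k$ held fixed, the map $w\mapsto H(w,z^k)$ has a Lipschitz-continuous gradient with a constant $L$ that is uniform along the (bounded) iterate sequence. Once gradient-Lipschitzness with constant $L$ is in hand, the asserted bound is precisely the integral form of the descent lemma, namely $H(w^{k+1},z^k)-H(w^k,z^k)-\nabla_w H(w^k,z^k)^T(w^{k+1}-w^k)=\int_0^1[\nabla_w H(w^k+t(w^{k+1}-w^k),z^k)-\nabla_w H(w^k,z^k)]^T(w^{k+1}-w^k)\,dt\le \frac{L}{2}\|w^{k+1}-w^k\|_2^2$, so essentially all the work lies in the smoothness estimate and no further argument is needed afterward.

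First I would record the elementary properties of the smoothing function $S$ that make its reciprocal well-behaved: $S$ is $C^1$, it is $1$-Lipschitz because $|S'|\le 1$ everywhere, and crucially it is bounded below by $S(r)\ge \frac{\sqrt{\delta}}{2}>0$ for all $r$. From these three facts the quotient $\phi:=1/S$ inherits good regularity. Writing $|1/S(a)-1/S(b)|=|S(b)-S(a)|/(S(a)S(b))$ gives that $\phi$ is bounded ($|\phi|\le 2/\sqrt{\delta}$) and Lipschitz ($\mathrm{Lip}(\phi)\le 4/\delta$); squaring gives the same for $1/S^2$. Then $\phi'=-S'/S^2$ is a product of bounded Lipschitz factors, and the product rule for Lipschitz constants yields an explicit $L_\phi=O(\delta^{-3/2})$. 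Presenting it this way keeps the step self-contained and sidesteps the fact that $S$ fails to be twice differentiable at $\pm\sqrt{\delta}$, which would otherwise obstruct a naive Hessian computation.

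Next I would split $H(\cdot,z^k)$ into the quadratic part $\frac12\|y-Xw-z^k\|_2^2$ and the regularizer $\delta\sum_i|z_i^k|\,\phi(y_i-x_i^Tw)$; this is legitimate because $S>0$ makes the $\ell_1$ term a smooth function of $w$ once $z$ is fixed. The quadratic part has affine gradient, Lipschitz with constant $\|X^TX\|_2=1$ after the preconditioning $X^TX=I$. For the regularizer the gradient is $-\delta\sum_i|z_i^k|\,\phi'(r_i)\,x_i$, and bounding the difference of gradients via $|\phi'(r_i)-\phi'(r_i')|\le L_\phi|x_i^T(w-w')|\le L_\phi\|x_i\|_2\|w-w'\|_2$ followed by Cauchy--Schwarz yields the Lipschitz constant $\delta L_\phi\sum_i|z_i^k|\,\|x_i\|_2^2$. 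Summing the two contributions gives $L=1+\delta L_\phi\sum_i|z_i^k|\,\|x_i\|_2^2$.

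The main obstacle is that this constant still depends on $z^k$ through $\sum_i|z_i^k|\,\|x_i\|_2^2$, so a single $L$ valid across all iterations—as required to legitimately fix the step size $\alpha\le 2/L$—exists only because the sequence is bounded. I would close this gap by invoking the boundedness hypothesis (so that $|z_i^k|\le M$ uniformly) together with $\sum_i\|x_i\|_2^2=\|X\|_F^2=\mathrm{tr}(X^TX)=n$, producing the uniform bound $L\le 1+\delta L_\phi M n$. With $L$ thus fixed, the gradient of $H(\cdot,z^k)$ is $L$-Lipschitz along the entire segment $[w^k,w^{k+1}]$, which lies in the bounded region, and the descent lemma applies verbatim to give the stated quadratic upper bound.
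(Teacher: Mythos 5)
Your overall architecture matches the paper's: both reduce the claim to the descent lemma via the integral identity $H(w^{k+1},z^k)-H(w^k,z^k)-\nabla_w H(w^k,z^k)^T(w^{k+1}-w^k)=\int_0^1[g'(t)-g'(0)]\,dt$ and then prove Lipschitz continuity of $\nabla_w H(\cdot,z^k)$, and your elementary estimates on $S$, $1/S$, and $S'/S^2$ are correct (the paper proves essentially the same three facts, phrased as Lipschitzness of $S$, boundedness of $\gamma = S'/S^2$, and Lipschitzness of $\kappa = S'/S$). The genuine gap is in how the iterate-dependent factor $\sum_i |z_i^k|\,\|x_i\|_2^2$ is removed. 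You arrive at $L = 1 + \delta L_\phi \sum_i |z_i^k|\|x_i\|_2^2$ and then appeal to boundedness of the iterate sequence ($|z_i^k|\le M$) to make $L$ uniform. This fails for two reasons. First, boundedness of $\{[w^k;z^k]\}$ is not a hypothesis of this lemma or of Theorems 1--2; in the paper it enters only at STEP 3 (Theorem 3). Second, and more fundamentally, $L$ must be available \emph{before} the algorithm runs, because the step size is constrained by $\alpha \le 2/L$; if $L$ depends on a bound $M$ for the trajectory, and the trajectory (hence $M$) depends on $\alpha$, the definition is circular --- there is no a priori $M$ to plug in, and the sufficient-decrease argument that this lemma feeds (which needs $\frac{1}{\alpha}-\frac{L}{2}\ge 0$ uniformly in $k$) does not get off the ground.

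The paper's proof eliminates the $z^k$-dependence altogether by using the structure of the $z$-update rather than any boundedness: the iterate $z^k$ is the thresholding output $z_i^k = T(r_i^k)$, where $T(x)=0$ for $|x|\le\sqrt{\delta}$ and $T(x)=x-\delta/x$ otherwise, and one checks pointwise that $|T(x)|\le S(x)$, hence $\frac{|z_i^k|}{S(r_i^k)}\le 1$ for every $i$ and every $k$. Inserting the factor $\frac{|z_i^k|}{S(r_i^k)}\cdot S(r_i^k)$ into the gradient-difference bound and discarding the first factor yields $L = 1+\delta(L_1L_2+L_3)$, a constant depending only on $\delta$ (after the preconditioning that gives $\|X\|_2=1$), fixed once and for all before any iteration. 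This observation --- that the lemma is applied to algorithm iterates, for which $z^k$ is automatically dominated by $S(r^k)$ --- is the key idea missing from your argument; without it, your constant cannot be fixed in advance and the lemma cannot serve its purpose in the convergence proof.
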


\begin{proof}
Let $g(t) = H(w^k + t(w^{k+1} - w^k), z^k), t \in [0,1]$. For simplicity, we denote $ H(w, z^k) $ as $ H(w) $. $S(x)$ (\ref{SMOOTHF}) is differentiable $\Rightarrow$ $H(w)$ is differentiable with respect to $ w $ $\Rightarrow$  $g(t)$ is differentiable with respect to $ t $. Then it can be obtained that:
\[
g'(t) = \nabla_w H(w^k + t(w^{k+1} - w^k))^T (w^{k+1} - w^k).
\]
Because $g(1) - g(0) = \int_0^1 g'(t) \, dt$, the following equation holds:
\begin{align*}
H(w^{k+1}) - H(w^k) &= \int_0^1 \nabla_w H(w^k + t(w^{k+1} - w^k))^T (w^{k+1} - w^k) \, dt. \\
\end{align*}
Then:
\begin{align}
&H(w^{k+1}) - H(w^k) - \nabla_w H(w^k)^T (w^{k+1} - w^k)  \notag\\
= & \int_0^1 g'(t) - g'(0) \, dt \notag\\
= &\int_0^1 \left[\nabla_w H(w^k + t(w^{k+1} - w^k)) - \nabla_w H(w^k)\right]^T (w^{k+1} - w^k) \, dt \notag\\
\leq & \int_0^1 \left\| \nabla_w H(w^k + t(w^{k+1} - w^k)) - \nabla_w H(w^k) \right\|_2 \| w^{k+1} - w^k \|_2 \, dt  \label{quadratic1}\\
\leq & L \int_0^1 \| w^{k+1} - w^k \|_2^2 \, dt  \label{quadratic2}\\
= & \frac{L}{2}\| w^{k+1} - w^k \|_2^2 \notag
\end{align}
The above procedure constitutes a classical proof of the quadratic upper bound. To complete the argument, it suffices to show the existence of a constant \( L > 0 \) such that  
\begin{align} \label{lsmooth}
\left\| \nabla_w H(w^k + t(w^{k+1} - w^k)) - \nabla_w H(w^k) \right\|_2 \leq L \| w^{k+1} - w^k \|_2,
\end{align}
which implies that (\ref{quadratic1}) \(\implies\) (\ref{quadratic2}). We now proceed to prove this assertion.

For simplicity, we denote \(\tilde{w}^k = w^k + t(w^{k+1} - w^k) \). According to the definition of \( H(w) \),
\begin{align}
\nabla_w H(w) = X^T X w - X^Ty - X^Tz^k + \delta X^T \left[ \frac{|z^k_i| \cdot S'(y_i - x_i^Tw)}{S^2(y_i - x_i^Tw)} \right]. \label{gredient}
\end{align}
As before, we simplify the notation by letting $r^k_i = y_i - x_i^Tw^k$ and $\tilde{r}^k_i = y_i - x_i^T\tilde{w}^k$.
Then according to the triangle inequality for norms:
\begin{align}
&\| \nabla_w H(\tilde{w}^k) - \nabla_w H(w^k) \|_2  \notag  \\
=& \left\| X^T X (\tilde{w}^k - w^k) + \delta X^T \left( \left[ |z^k_i| \cdot \left(  \frac{ S'(y_i - x_i^T\tilde{w}^k)}{S^2(y_i - x_i^T\tilde{w}^k)} -\frac{S'(y_i - x_i^Tw^k)}{S^2(y_i - x_i^Tw^k)} \right)\right]  \right) \right\|_2 \label{start1} \\
=& \left\| X^T X (\tilde{w}^k - w^k) + \delta X^T \left( \left[ |z^k_i| \cdot \left(  \frac{ S'(\tilde{r}^k_i)}{S^2(\tilde{r}^k_i)} -\frac{S'(r^k_i)}{S^2(r^k_i)} \right)\right]  \right) \right\|_2  \notag  \\
\leq & \| X^T X \|_2 \|\tilde{w}^k - w^k \|_2 + \delta  \| X^T \|_2 \left\| \left[ \frac{|z^k_i|}{S(r^k_i)} \cdot S(r^k_i) \cdot \left(  \frac{S'(\tilde{r}^k_i)}{S^2(\tilde{r}^k_i)} -\frac{ S'(r^k_i)}{S^2(r^k_i)} \right) \right] \right\|_2   \notag  \\
= &  \|\tilde{w}^k - w^k \|_2 + \delta  \left\|  \left[ \left| \frac{|z^k_i|}{S(r^k_i)} \right|\cdot \left( \frac{S(r^k_i)}{S(\tilde{r}^k_i)} \cdot \frac{S'(\tilde{r}^k_i)}{S(\tilde{r}^k_i)} -\frac{ S'(r^k_i)}{S(r^k_i)} \right) \right] \right\|_2 \label{zscale1} \\
\leq &  \|\tilde{w}^k - w^k \|_2 + \delta  \left\| \left[ \frac{S(r^k_i)}{S(\tilde{r}^k_i)} \cdot \frac{S'(\tilde{r}^k_i)}{S(\tilde{r}^k_i)} -\frac{ S'(r^k_i)}{S(r^k_i)} \right] \right\|_2 \label{zscale2} \\
\leq & \|\tilde{w}^k - w^k \|_2 + \delta \left\| \left[ \frac{S(r^k_i)-S(\tilde{r}^k_i)}{S(\tilde{r}^k_i)} \cdot \frac{S'(\tilde{r}^k_i)}{S(\tilde{r}^k_i)} + \frac{S'(\tilde{r}^k_i)}{S(\tilde{r}^k_i)} -\frac{ S'(r^k_i)}{S(r^k_i)} \right] \right\|_2  \notag  \\
\leq & \|\tilde{w}^k - w^k \|_2 + \delta \left\| \left[ \left| S(r^k_i)-S(\tilde{r}^k_i) \right| \cdot \left| \frac{S'(\tilde{r}^k_i)}{S^2(\tilde{r}^k_i)} \right|\right] \right\|_2 + \delta \left\| \left[ \frac{S'(\tilde{r}^k_i)}{S(\tilde{r}^k_i)} -\frac{ S'(r^k_i)}{S(r^k_i)} \right] \right\|_2.  \notag 
\end{align}

Note that $\| X^T X \|_2 = \|X \|_2 = 1$ because $X$ has been preconditioned as (\ref{pre}). \eqref{zscale1} \(\implies\) \eqref{zscale2} follows from the fact that $\left| \frac{T(x) }{S(x)} \right| \leq 1$, where:
\begin{align}
T(x) =
    \begin{cases}
  0 , & \text{if } x < \sqrt{\delta} \\
  x-\frac{1}{x}, & \text{if } x \geq \sqrt{\delta}.
\end{cases} 
\label{TX}
\end{align}
By setting \( x = y_i - x_i^T w \), we have $\frac{|z^k_i|}{S(y_i - x_i^T w)} \leq 1 $ and \eqref{zscale1} \(\implies\) \eqref{zscale2} holds.
The functions \( T(x) \) and \( S(x) \) are illustrated in Fig. \ref{function1} for better understanding.

Next, we examine the following functions:
\begin{equation}
S(x) =
\begin{cases}
  \frac{1}{ 2 \sqrt{\delta} } x^2 + \frac{ \sqrt{\delta} }{2}, & \text{if } |x| < \sqrt{\delta} \\
  | x |, & \text{if } |x| \geq \sqrt{\delta},
\end{cases}
\end{equation}
\begin{align}
    \gamma(x) = \frac{S'(x)}{S^2(x)} = \begin{cases}
  \frac{4\sqrt{\delta}x}{ (x^2 + \delta)^2 } , & \text{if } |x| < \sqrt{\delta} \\
  \frac{sign(x)}{x^2}, & \text{if } |x| \geq \sqrt{\delta},
\end{cases}
\end{align}
\begin{align}
    \kappa(x) = \frac{S'(x)}{S(x)} = \begin{cases}
  \frac{2x}{ x^2 + \delta } , & \text{if } |x| < \sqrt{\delta} \\
  \frac{sign(x)}{|x|} = \frac{1}{x}, & \text{if } |x| \geq \sqrt{\delta},
\end{cases}
\end{align}
It is easy to verify that for any $x_1, x_2, x$, there exist constants \( L_1, L_2, L_3 > 0 \) such that:
\begin{align} \label{ineq1}
    |S(x_1) - S(x_2)| \leq L_1 |x_1 - x_2| = L_1 |x_2 - x_1|,
\end{align}
\begin{align} \label{ineq2}
    |\gamma(x)| \leq L_2,
\end{align}
\begin{align} \label{ineq3}
    |\kappa(x_1) - \kappa(x_2)| \leq L_3 |x_1 - x_2| = L_3 |x_2 - x_1|,
\end{align}
holds for any $x_1, x_2 \in \mathbb{R}$.
Although the proof is straightforward, it is somewhat tedious. We therefore omit the detailed steps, but provide the corresponding function plots as Fig. \ref{function} for reference.
\begin{figure}[htbp]
  \centering
  \begin{subfigure}[b]{0.335\textwidth}
    \centering
    \includegraphics[width=\linewidth]{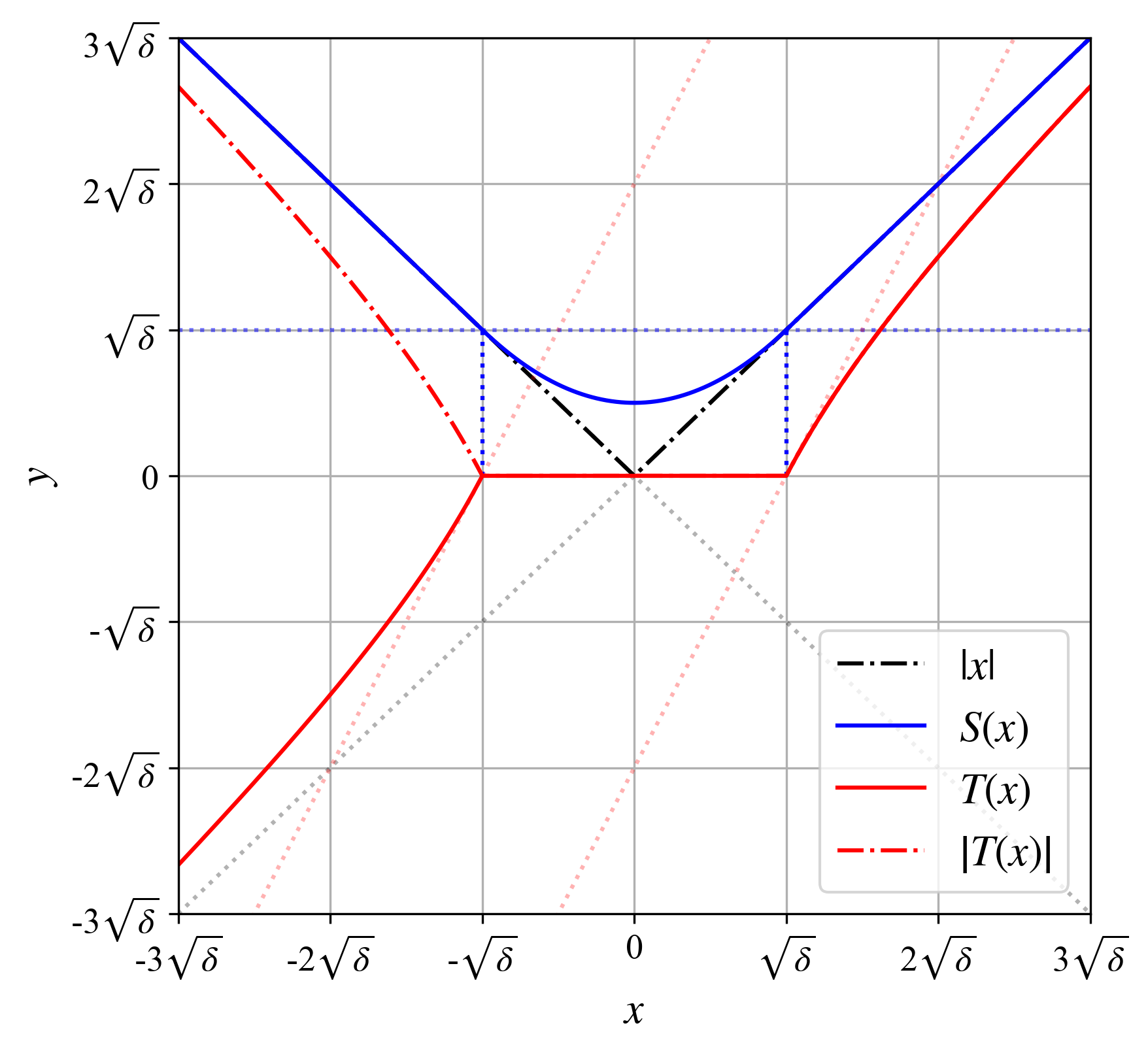}
    \caption{$S(x) $ and $ T(x)$}
    \label{function1}
  \end{subfigure}
  \hfill
  \begin{subfigure}[b]{0.335\textwidth}
    \centering
    \includegraphics[width=\linewidth]{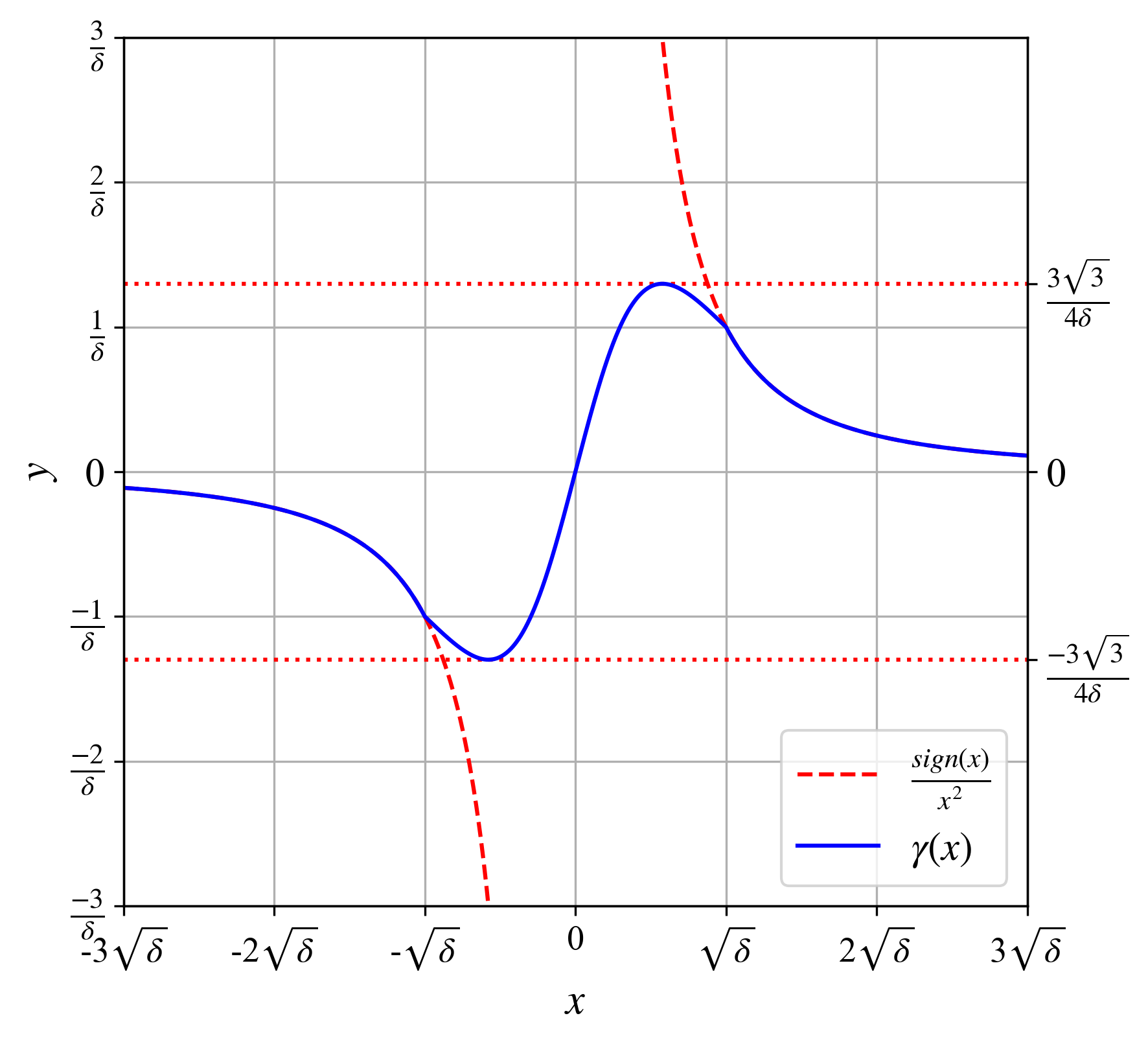}
    \caption{$\gamma(x)$}
  \end{subfigure}
  \hfill
  \begin{subfigure}[b]{0.32\textwidth}
    \centering
    \includegraphics[width=\linewidth]{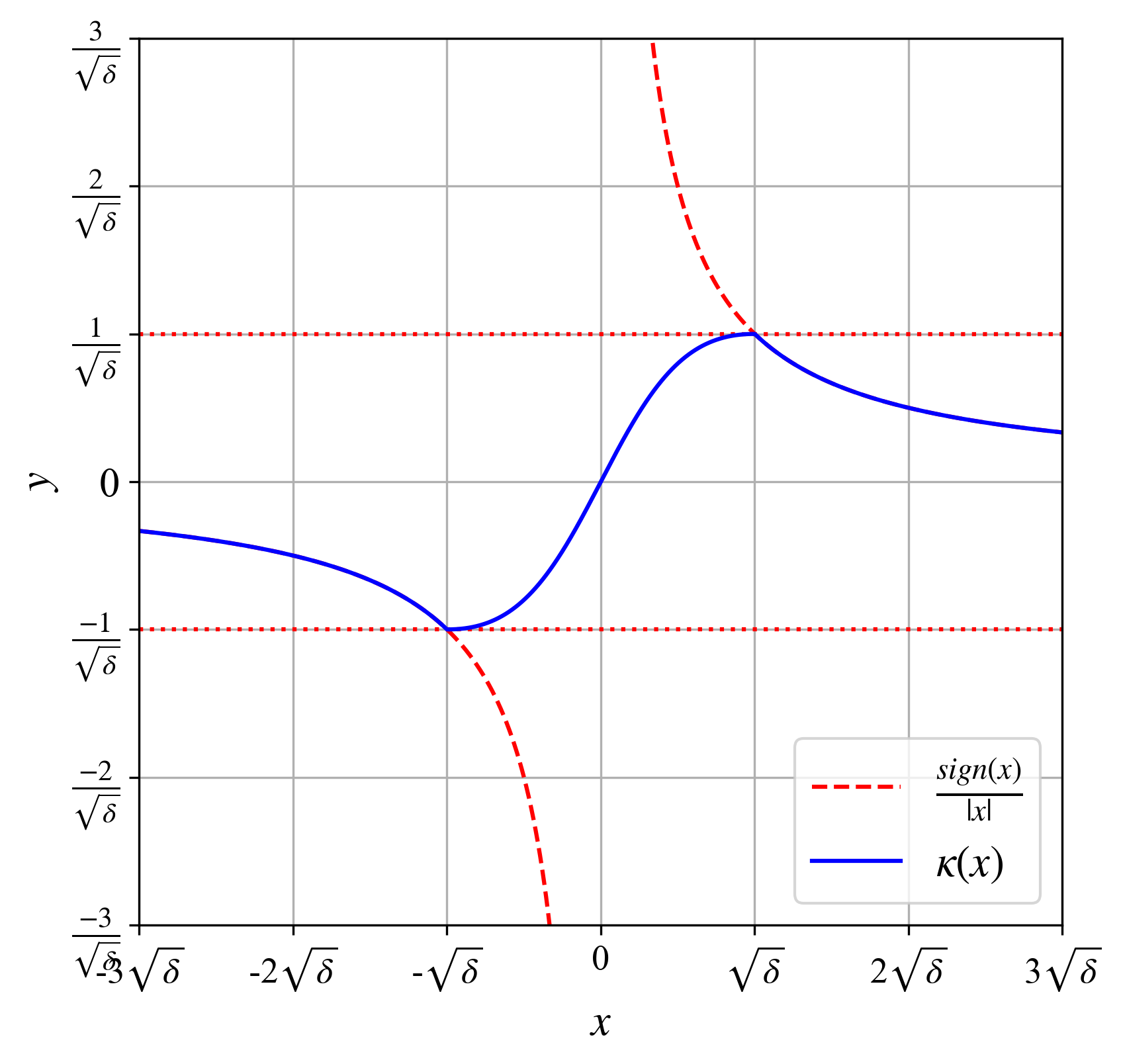}
    \caption{$\kappa(x)$}
  \end{subfigure}
  \caption{The graphs of $S(x) $, $ T(x)$, $\gamma(x)$ and $\kappa(x)$.}
  \label{function}
\end{figure}

 According to (\ref{ineq1}), (\ref{ineq2}), and (\ref{ineq3}), 
 \begin{align}
&\| \nabla_w H(\tilde{w}^k) - \nabla_w H(w^k) \|_2 \notag \\
\leq & \|\tilde{w}^k - w^k \|_2 + \delta \left\| \left[  L_1L_2  \left|\tilde{r}^k_i - r^k_i \right|\right] \right\|_2 + \delta \left\| \left[  L_3  \left(\tilde{r}^k_i - r^k_i \right)\right] \right\|_2  \notag \\
\leq & \|\tilde{w}^k - w^k \|_2 + \delta L_1L_2\left\| \left[ \tilde{r}^k_i - r^k_i \right] \right\|_2 + \delta L_3 \left\| \left[  \tilde{r}^k_i - r^k_i \right] \right\|_2  \label{rtow1} \\
\leq & \|\tilde{w}^k - w^k \|_2 + \delta (L_1L_2 + L_3)\left\| \left[ x_i^T (\tilde{w}^k - w^k) \right] \right\|_2  \label{rtow2} \\
\leq & \|\tilde{w}^k - w^k \|_2 + \delta (L_1L_2 + L_3)\left\| X (\tilde{w}^k - w^k) \right\|_2  \notag \\
\leq & \|\tilde{w}^k - w^k \|_2 + \delta (L_1L_2 + L_3)\left\| X \right\|_2 \left\|(\tilde{w}^k - w^k) \right\|_2  \notag \\
= & (1 + \delta (L_1L_2 + L_3)) \left\|(\tilde{w}^k - w^k) \right\|_2 \label{start2},
\end{align}
where (\ref{rtow1}) $\implies$ (\ref{rtow2}) due to the fact that: $\left| r^k_i -\tilde{r}^k_i \right| = \left|(y_i - x_i^Tw^k) - (y_i - x_i^T\tilde{w}^k)\right| = \left|-x_i^Tw^k + x_i^T\tilde{w}^k\right|= \left|x_i^Tw^k - x_i^T\tilde{w}^k\right|$.
Letting \( L = (1 + \delta (L_1L_2 + L_3))  \), the inequality in (\ref{lsmooth}) holds.
\end{proof}

\begin{remark}
    It is worth noting that the constant \( L \) provided above is merely an upper bound and not necessarily the smallest possible $L$ satisfying  Inequality (\ref{lsmooth}). Lemma \ref{Quadratic Upper Bound} guarantees the existence of such an \( L \), and thus there must exist some \( L_0 \) such that \( L_0 = \inf \{L \in \mathbb{R}|L \text{ satisfies (\ref{lsmooth}) } \} \). 
    We construct a sequence \( \{L^{(k)}\} \) that satisfies (\ref{lsmooth}) and converges to \( L_0 \). By substituting \( L^{(k)} \) into (\ref{lsmooth}) and taking the limit on both sides of the inequality as \( k \to \infty \), we conclude that the limiting value \( L_0 \) also satisfies  Inequality (\ref{lsmooth}), hence \( L_0 \) can be regarded as a tighter choice of the constant \( L \).
\end{remark}

We next prove Theorem \ref{Sufficient decrease} based on Lemma \ref{Quadratic Upper Bound}.

\begin{proof}\textbf{(Proof of Theorem \ref{Sufficient decrease})}
By the optimality of \( z_{k+1} \), we obtain: $H(w_{k+1}, z_{k}) - H(w_{k+1}, z_{k+1}) \geq 0$. Using this fact together with Lemma \ref{Quadratic Upper Bound}, it follows that:
\begin{align}
H(w^k, z^k) - H(w^{k+1}, z^{k+1}) 
&= H(w^k, ^k) - H(w^{k+1}, z^{k}) + H(w^{k+1}, z^{k}) - H(w^{k+1}, z^{k+1})  \notag\\
&\geq H(w^k, z^k) - H(w^{k+1}, z^{k})  \notag\\
&\geq -\nabla_w H(w^k, z^k)^T (w^{k+1} - w^k) - \frac{L}{2} \| w^{k+1} - w^k \|_2^2 \label{nabla1} \\
&\geq \frac{1}{\alpha}\| w^{k+1} - w^k \|_2^2- \frac{L}{2} \| w^{k+1} - w^k \|_2^2  \label{nabla2} \\
&\geq \left( \frac{1}{\alpha} - \frac{L}{2} \right) \| w^{k+1} - w^k \|_2^2   \notag
\end{align}
where (\ref{nabla1}) \(\implies\) (\ref{nabla2}) due to the fact that $w^{k+1} = w^{k} - \alpha \nabla_w H(w^k,z^k)$.
We can also obtained that:
\begin{align}
  0 \leq \alpha \leq \frac{2}{L} \Rightarrow  \frac{1}{\alpha} - \frac{L}{2} \geq 0 \Rightarrow H(w^k, z^k) - H(w^{k+1}, z^{k+1}) \geq 0.
  \label{monotonicity}
\end{align}
Meanwhile, it is easy to verify that $\left| T(x_2) - T(x_1) \right| \leq 2\left| x_2 - x_1 \right|$ (see Fig. \ref{function1}), where $T(x)$ is defined as (\ref{TX}).
Therefore:
\begin{align}
  \| z^{k+1} - z^{k} \|_2 &= \left\|T(y - Xw^{k+1}) - T(y - Xw^k) \right\|_2 \notag \\
&\leq 2 \left\|y - Xw^{k+1} + Xw^k - y \right\|_2 \notag \\
  &= 2 \left\|Xw^{k+1} - Xw^k \right\|_2 \notag \\
  &\leq 2 \left\|X\|_2 \|w^{k+1} - w^k \right\|_2 \notag \\
    &\leq 2 \left\|w^{k+1} - w^k \right\|_2. \notag 
\end{align}
Finally, we let $\rho_1 = \frac{1}{5}\left( \frac{1}{\alpha} - \frac{L}{2} \right)$ and obtain that for any $k$:
\begin{align}
    \rho_1 \left(\left\|[w^{k+1}; z^{k+1}] - [w^{k}; z^{k}]\right\|^2_2 \right) &= 
    \frac{1}{5}\left( \frac{1}{\alpha} - \frac{L}{2} \right) (\|w^{k+1} - w^k\|_2^2 + \| z^{k+1} - z^{k} \|_2^2 ) \notag \\
    &\leq\frac{1}{5}\left( \frac{1}{\alpha} - \frac{L}{2} \right) (\|w^{k+1} - w^k\|_2^2 + 4\| w^{k+1} - w^{k} \|_2^2 ) \notag\\
    &=\left( \frac{1}{\alpha} - \frac{L}{2} \right)\|w^{k+1} - w^k\|_2^2 \notag\\
    &\leq H \left( w^{k+1}, z^{k+1} \right) - H\left(w^{k}, z^{k}\right). \notag
\end{align}
\end{proof}

\begin{remark}
Up to this point, we can establish another important result. 
First, according to (\ref{monotonicity}), \( H(w^k, z^k) \) is monotonically decreasing with respect to $k$. 
Moreover, it is evident that $ H(w, z) \geq 0 $, and thus \( H(w^k, z^k) \) converges to a nonnegative real value \( H^* \). 
Furthermore, by Theorem \ref{Sufficient decrease}, we obtain:
\begin{align}
  \sum_{k=0}^{N-1} \left(\left\|[w^{k+1}; z^{k+1}] - [w^{k}; z^{k}]\right\|^2_2 \right) &\leq \frac{1}{\rho_1} \left( H(w^0,z^0) - H(w^N,z^N)\right) \\
  &\leq \frac{1}{\rho_1} \left( H(w^0,z^0) - H^*\right), 
\end{align}
where $N $ is an arbitrary positive integer. We let $N \to \infty $ and then obtain that 
$\sum_{k=0}^{\infty} \left(\left\|[w^{k+1}; z^{k+1}] - [w^{k}; z^{k}]\right\|^2_2 \right) \leq \infty$ , which also means that:
\begin{align}
\left\|[w^{k+1}; z^{k+1}] - [w^{k}; z^{k}]\right\|_2 \to 0,
\end{align}
and a simple corollary is:
\begin{align}
  \|w^{k+1} - w^k\|_2 \to 0, \| z^{k+1} - z^{k} \|_2 \to 0
  \label{extrem}
\end{align}.
\end{remark}

%--------------------------------------------------------------------------------------------------
% Proof of Subgradient Lower Bound
\subsection{Proof of Theorem \ref{Subgradient Lower Bound}}

Next, we present the proof of Theorem \ref{Subgradient Lower Bound}. For convenience, we still denote $y_i - x_iw^k = r^k_i$.
\begin{proof}\textbf{(Proof of Theorem \ref{Subgradient Lower Bound})}
We first consider \( A^{k+1} = \nabla_w H(w^{k+1}, z^{k+1}) \). 
Note that, since \( H(w, z) \) is differentiable with respect to \( w \), the notation used here refers to the gradient $\nabla$, not the subgradient  $\partial$.
According to (\ref{gredient}):
\begin{align}
&\nabla_w H(w^{k+1},z^{k+1}) \notag  \\
=& X^T X w^{k+1} - X^Ty - X^Tz^{k+1} + \delta X^T \left[ \frac{|z^{k+1}_i| \cdot S'(r^{k+1}_i)}{S^2(r^{k+1}_i)} \right]  \notag \\
=& \frac{1}{\alpha}w^{k+1} - \frac{1}{\alpha} \left( I - \alpha X^T X \right)w^{k+1} - X^Ty - X^Tz^{k+1} + \delta X^T \left[ \frac{|z^{k+1}_i| \cdot S'(r^{k+1}_i)}{S^2(r^{k+1}_i)} \right]  \notag \\
=& \frac{1}{\alpha} \left(w^{k} - \alpha X^T X w^{k} - \alpha X^Ty - \alpha X^Tz^{k} + \alpha \delta X^T \left[ \frac{|z^{k}_i| \cdot S'(r^{k}_i)}{S^2(r^{k}_i)} \right] \right)  \notag  \notag \\
& - \frac{1}{\alpha} \left( I - \alpha X^T X \right)w^{k+1} - X^Ty - X^Tz^{k+1} + \delta X^T \left[ \frac{|z^{k+1}_i| \cdot S'(r^{k+1}_i)}{S^2(r^{k+1}_i)} \right]  \notag \\
=& \frac{1}{\alpha} \left( I - \alpha X^T X \right) \left( w^{k} - w^{k+1} \right) + \alpha X^T(z^{k} - z^{k+1}) + \delta X^T \left( \left[ \frac{|z^{k}_i| \cdot S'(r^{k}_i)}{S^2(r^{k}_i)} \right] - \left[ \frac{|z^{k+1}_i| \cdot S'(r^{k+1}_i)}{S^2(r^{k+1}_i)} \right] \right).  \notag
\end{align}
Therefore:
\begin{align}
 &\left\| \nabla_w H(w^{k+1},z^{k+1}) \right\|_2   \notag \\
\leq&\frac{1}{\alpha} \left\| I - \alpha X^T X \right\|_2 \left\| w^{k} - w^{k+1} \right\|_2 + \alpha \left\|X^T \right\|_2 \left\|z^{k} - z^{k+1}\right\|_2 + \delta \left\|X^T\right\|_2 \left\| \left[ \frac{|z^{k}_i| \cdot S'(r^{k}_i)}{S^2(r^{k}_i)} \right] - \left[ \frac{|z^{k+1}_i| \cdot S'(r^{k+1}_i)}{S^2(r^{k+1}_i)} \right] \right\|_2  \notag \\
\leq&\frac{1}{\alpha} (1-\alpha) \left\| w^{k} - w^{k+1} \right\|_2 + \alpha \left\|z^{k} - z^{k+1}\right\|_2 + \delta \left\| \left[ \frac{|z^{k}_i| \cdot S'(r^{k}_i)}{S^2(r^{k}_i)} \right] - \left[ \frac{|z^{k+1}_i| \cdot S'(r^{k+1}_i)}{S^2(r^{k+1}_i)} \right] \right\|_2.  \notag
\end{align}

The first two terms clearly tend to zero; we now analyze the last term.
\begin{align}
&\left\| \left[ \frac{|z^{k}_i| \cdot S'(r^{k}_i)}{S^2(r^{k}_i)} - \frac{|z^{k+1}_i| \cdot S'(r^{k+1}_i)}{S^2(r^{k+1}_i)} \right] \right\|_2  \notag \\
=&\left\| \left[ \frac{(|z^{k}_i| - |z^{k+1}_i|) \cdot S'(r^{k}_i)}{S^2(r^{k}_i)}\right] + \left[\frac{|z^{k+1}_i| \cdot S'(r^{k}_i)}{S^2(r^{k}_i)} - \frac{|z^{k+1}_i| \cdot S'(r^{k+1}_i)}{S^2(r^{k+1}_i)} \right] \right\|_2  \notag \\
\leq& \left\| \left[(|z^{k}_i| - |z^{k+1}_i|)  \cdot \frac{ S'(r^{k}_i)}{S^2(r^{k}_i)} \right] \right\|_2 + \left\| \left[|z^{k+1}_i| \cdot \left( \frac{ S'(r^{k}_i)}{S^2(r^{k}_i)} - \frac{ S'(r^{k+1}_i)}{S^2(r^{k+1}_i)} \right) \right] \right\|_2  \notag \\
\leq& L_2 \left\| \left[(|z^{k}_i| - |z^{k+1}_i|)\right] \right\|_2 + \left\| \left[|z^{k+1}_i| \cdot \left( \frac{ S'(r^{k}_i)}{S^2(r^{k}_i)} - \frac{ S'(r^{k+1}_i)}{S^2(r^{k+1}_i)} \right) \right] \right\|_2 \label{scalesimilar1}  \\
\leq& L_2 \left\| \left[(|z^{k}_i| - |z^{k+1}_i|)\right] \right\|_2 + (L_1L_2 + L_3)\left\| w^{k}_i - w^{k+1}_i \right\|_2 \label{scalesimilar2}\\
\leq& L_2 \left\| \left[z^{k}_i - z^{k+1}_i\right] \right\|_2 + (L_1L_2 + L_3)\left\| w^{k}_i - w^{k+1}_i \right\|_2, \label{scalesimilar3}
\end{align}
where (\ref{scalesimilar1}) \(\implies\) (\ref{scalesimilar2}) is similar to the proof procedure from (\ref{start1}) to (\ref{start2}).
(\ref{scalesimilar2}) \(\implies\) (\ref{scalesimilar3}) holds due to the fact that 
$\left|a - b \right| \geq \left||a| - |b| \right|, \forall a,b \in \mathbb{R}$.

Therefore we can bound \( \left\| \nabla_w H(w^{k+1},z^{k+1}) \right\|_2 \) like:
\begin{align}
 & \left\| \nabla_w H(w^{k+1},z^{k+1}) \right\|_2   \notag \\
\leq&\frac{1}{\alpha} (1-\alpha) \left\| w^{k} - w^{k+1} \right\|_2 + \alpha \left\|z^{k} - z^{k+1}\right\|_2 + \delta \left\| \left[ \frac{|z^{k}_i| \cdot S'(r^{k}_i)}{S^2(r^{k}_i)} \right] - \left[ \frac{|z^{k+1}_i| \cdot S'(r^{k+1}_i)}{S^2(r^{k+1}_i)} \right] \right\|_2  \notag \\
\leq& \left(\frac{1-\alpha}{\alpha} + \delta \left(L_1L_2 + L_3 \right) \right) \left\| w^{k} - w^{k+1} \right\|_2 + (\alpha + \delta L_2) \left\|z^{k} - z^{k+1}\right\|_2. \label{decentbound}
\end{align}

Next, we consider \(B^{k+1} \). $x_i^T w^{k+1} - y_i$ remains abbreviated as $ r_i^{k+1}$.
The subdifferential of \( H(w,z)\) with respect to \( z \) can be expressed as:
\begin{align}
\partial_z H(w^{k+1}, z^{k+1}) =& z^{k+1} - X w^{k+1} + y + \delta \left[ \frac{\partial |z^{k+1}_i|}{S(x_i^Tw- y_i)}\right]  \notag \\
=&     \begin{cases}
  r_i^{k+1} - \frac{\delta}{r_i^{k+1}} - r_i^{k+1} + \frac{\delta \cdot sign (z^{k+1})}{| r_i^{k+1}|}, & \text{if } | r_i^{k+1}| \geq \sqrt{\delta}\\
   - r_i^{k+1} + \frac{\delta \cdot \partial |z^{k+1}| }{S( r_i^{k+1})}, & \text{if } | r_i^{k+1}| < \sqrt{\delta}
\end{cases}  \notag \\
=&     \begin{cases}
  0, & \text{if } | r_i^{k+1}| \geq \sqrt{\delta}  \\
  \frac{\delta \cdot \partial |z^{k+1}| }{S( r_i^{k+1})} - r_i^{k+1}, & \text{if } | r_i^{k+1}| < \sqrt{\delta},
\end{cases}
\end{align}
where $+$ denotes sum of sets. 
Considering only the \( i \)-th component, when $| r_i^{k+1}| < \sqrt{\delta} $, we have $ z^{k+1} = 0 $ and $\partial |z^{k+1}| = [-1, 1]$.
We let $c = \frac{S(r_i^{k+1})}{\delta} \cdot r_i^{k+1}$, where $|c| \in [-1, 1]$ so $c \in \partial |z^{k+1}|$. 
Moreover, since $\frac{\delta \cdot c }{S( r_i^{k+1})} - r_i^{k+1} = 0$, we can obtain that if $| r_i^{k+1}| < \sqrt{\delta}$, $ 0 \in \partial_{z_i} H(w^{k+1}, z^{k+1})$.

Furthermore, combining with the situation where \( | r_i^{k+1} | \geq \sqrt{\delta} \) and (\ref{decentbound}), the following conclusion holds: 
$ \mathbf{0} \in \partial_{z} H(w^{k+1}, z^{k+1})$ holds for any $w^{k+1}, z^{k+1}$. 
By setting \( B^{k+1} = \mathbf{0} \), we obtain the conclusion:
\begin{align}
\left\|[A^{k+1}; B^{k+1}]\right\|_2 &= \left\|A^{k+1}\right\|_2   \notag \\
&= \left\| \nabla_w H(w^{k+1},z^{k+1}) \right\|_2   \notag \\
&\leq \left(\frac{1-\alpha}{\alpha} + \delta \left(L_1L_2 + L_3 \right) \right) \left\| w^{k} - w^{k+1} \right\|_2 + (\alpha + \delta L_2) \left\|z^{k} - z^{k+1}\right\|_2   \notag \\
&\leq \rho_2 \left\|[w^{k+1}; z^{k+1}] - [w^{k}; z^{k}] \right\|_2,  \notag
\end{align}
where $\rho_2$ is $ max \left\{\frac{1-\alpha}{\alpha} + \delta \left(L_1L_2 + L_3 \right), \alpha + \delta L_2 \right\}$.
\end{proof}

\begin{remark}
When $k \to \infty, \|w^{k+1} - w^k\|_2 \to 0$ and $ \| z^{k+1} - z^{k} \|_2 \to 0$ (see (\ref{extrem})), then we obtain \( \left\| \nabla_w H(w^{k+1},z^{k+1}) \right\|_2 \to 0\).
\end{remark}

\subsection{Properties of the Critical Points}

We now proceed to discuss the properties of the critical points. 
For the sake of simplicity, let \( u^k = [w^{k}; z^{k}]\).
Similar to \cite{bolte2014proximal, liu2020optimization}, we define \( \omega(u^0) \) as the set of limit points of the sequence $\{ u^k\}$ generated from Algorithm \ref{SARMAlgorithm} with \( u^0 \) as the initial point.

\begin{lemma} \label{Properties of the Critical Points}
\textbf{(Properties of the Critical Points)}: Supposing that $u^k$ obtained from Algorithm \ref{SARMAlgorithm} are bounded and 
$\alpha$ satisfied $0 \leq \alpha \leq \frac{2}{L}$. Then the following properties hold:
\begin{enumerate}
  \item $\omega(u^0) \subset \operatorname{crit} H$ and $\omega(u^0) \neq \emptyset$.
  \item $\lim_{k \to \infty} \operatorname{dist}(u^k, \omega(u^0)) = 0$. \label{Properties of the Critical Points 1}
  \item $\omega(u^0)$ is a nonempty, compact, and connected set. \label{Properties of the Critical Points 2}
  \item $H(u)$ is finite and constant on $\omega(u^0)$.
\end{enumerate}
\label{Properties of the Critical Points}
\end{lemma}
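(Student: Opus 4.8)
The plan is to follow the standard KL-framework argument (as in \cite{bolte2014proximal,liu2020optimization}), exploiting the two ingredients already established: the asymptotic regularity $\|u^{k+1}-u^{k}\|_2 \to 0$ from (\ref{extrem}) and the monotone convergence $H(u^k) \to H^*$ for some finite $H^* \geq 0$ noted in the remark following Theorem \ref{Sufficient decrease}. First I would settle nonemptiness: since $\{u^k\}$ is assumed bounded, Bolzano--Weierstrass yields a convergent subsequence, so $\omega(u^0) \neq \emptyset$. Property 2 then follows by a routine contradiction: if $\operatorname{dist}(u^k,\omega(u^0)) \not\to 0$, some subsequence stays a fixed positive distance from $\omega(u^0)$, yet being bounded it admits a further convergent subsequence whose limit lies in $\omega(u^0)$ by definition, a contradiction.

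For property 1, the containment $\omega(u^0) \subset \operatorname{crit} H$ is the technical heart. Fix $u^* \in \omega(u^0)$ with $u^{k_j} \to u^*$. By Theorem \ref{Subgradient Lower Bound} there exist $v^{k_j} := [A^{k_j}; B^{k_j}] \in \partial H(u^{k_j})$ with $\|v^{k_j}\|_2 \leq \rho_2 \|u^{k_j} - u^{k_j - 1}\|_2 \to 0$, hence $v^{k_j} \to 0$. To invoke the closedness of the graph of the limiting subdifferential and conclude $0 \in \partial H(u^*)$, I also need $H(u^{k_j}) \to H(u^*)$. This is where the smoothing (\ref{SMOOTHF}) pays off: since $S(x) \geq \sqrt{\delta}/2 > 0$ everywhere, the denominator in the regularizer never vanishes and $H$ is continuous on all of $\mathbb{R}^{m+n}$, giving $H(u^{k_j}) \to H(u^*)$ directly. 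Therefore $0 \in \partial H(u^*)$, i.e. $u^* \in \operatorname{crit} H$. For property 4, this same continuity together with $H(u^k) \to H^*$ forces $H(u^*) = H^*$ for every $u^* \in \omega(u^0)$, so $H$ is finite and equals the constant $H^*$ on the limit set.

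Finally, for property 3: nonemptiness is already in hand, and compactness follows because $\omega(u^0)$ is closed (it equals $\bigcap_{q \geq 0} \overline{\bigcup_{k\ge q}\{u^k\}}$, an intersection of closed sets) and bounded (a subset of the bounded region containing the whole sequence), hence compact. I expect connectedness to be the main obstacle, though it is classical and again rests crucially on the asymptotic regularity $\|u^{k+1}-u^k\|_2 \to 0$. The argument is that a bounded sequence whose consecutive gaps vanish cannot have a disconnected limit set: were $\omega(u^0) = C_1 \cup C_2$ a separation into disjoint nonempty compact pieces at positive distance, the iterates would be forced to move back and forth between neighborhoods of $C_1$ and $C_2$ infinitely often, which is incompatible with $\|u^{k+1}-u^k\|_2 \to 0$; formally one cites the Ostrowski-type result reproduced in \cite{bolte2014proximal}. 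Assembling these pieces yields all four claims.
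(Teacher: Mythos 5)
Your proposal is correct and follows essentially the same route as the paper: the containment $\omega(u^0) \subset \operatorname{crit} H$ via the subgradient bound of Theorem \ref{Subgradient Lower Bound}, the asymptotic regularity $\|u^{k+1}-u^k\|_2 \to 0$, continuity of $H$ (which holds since $S(x) \geq \sqrt{\delta}/2$ keeps the denominator bounded away from zero), and closedness of the limiting subdifferential, together with monotone convergence of $H(u^k)$ plus continuity for the constancy claim. The only difference is presentational: where the paper delegates properties 2 and 3 to Lemma 5 (ii)--(iii) of \cite{bolte2014proximal}, you spell out the standard distance-contradiction argument and the compactness/Ostrowski-type connectedness argument, which rely on exactly the same ingredients the paper's citation does.
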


\begin{proof}
\begin{enumerate}
  \item $\omega(u^0) \neq \emptyset$ follows directly from the boundedness of $\{ u^k\}$. Let \( u^* \) be a limit point of the sequence \( \{u^k\} \), 
  which means that there exists a subsequence \( \{u^{k_q}\}_{q \in \mathbb{N}} \) such that \( u^{k_q} \to u^{*} \) as $q \to \infty$. 
  Since \( H(u) \), i.e. \( H(w,z) \), is a continuous function,
  \begin{align}
    \lim_{q \to \infty} H(u^{k_q}) = H(u^{*}).  \notag
  \end{align}
  Meanwhile, $ A^{k} = \nabla_w H(w^{k}, z^{k}) $ and $ B^{k} \in \partial_z H(w^{k}, z^{k}) $ ,
  so $ \left(A^{k},  B^{k}\right) \in \partial H(w^{k}, z^{k}) = \partial H(u^{k}) $. 
  According to  Theorem \ref{Subgradient Lower Bound}, as 
  $k \to \infty, \left\| A^{k} \right\|_2 \to 0 $ and $B^{k}=0 $, hence $\left( A^{k}, B^{k}\right) \to 0$.
  It is also true that $\partial H(u)$ is closed, which can be found in Remark 1 of \cite{bolte2014proximal}, hence $ 0 \in H(u^{*})$. 
  This proves that $\omega(u^0) \subset \operatorname{crit} H$. 
  \item As proved in Lemma 5 (ii) of \cite{bolte2014proximal}. 
  \item As proved in Lemma 5 (iii) of \cite{bolte2014proximal}.
  \item According to (\ref{monotonicity}), $\{H(u^{k})\}$ is a monotone sequence. We also have $H(x)\geq 0$. Let $H_0$ be the infimum of $\{H(u^{k})\}$. 
  Assume that $u^{*} \in \operatorname{crit} H$. There exists a subsequence \( \{u^{k_q}\}\) converging to $u^{*}$ as $q \to \infty$. 
  Based on the monotonicity  of $\{H(u^{k})\}$ with respect to $k$ and the continuity of \( H(u) \) with respect to $u$, we have the fact that 
  \( \{H(u^{k_q})\}\) converges to $H_0$ and \( H(u^{*})=H_0\).
\end{enumerate}
\end{proof}
Note that the assumptions of Lemma \ref{Properties of the Critical Points} are not the same as those of Lemma 5 in \cite{bolte2014proximal}.
The reason why Property \ref{Properties of the Critical Points 1} and Property \ref{Properties of the Critical Points 2} can be proved in the same way of 
Lemma 5 (ii) and (iii) in \cite{bolte2014proximal} is describe in Remark 5 of \cite{bolte2014proximal}.

A more detailed discussion of this part can be found in Lemma 5 of \cite{bolte2014proximal}.

\subsection{KL Property and Convergence of the Entire Sequence}

Up to this point, we have established that the sequence \(\{u_k\}\) admits convergent subsequences. 
However, the full convergence of the entire sequence—i.e., the Cauchy property—remains to be shown. 
To this end, we utilize the Kurdyka-Łojasiewicz (KL) property, 
which provides a powerful and well-established analytical framework for proving the global convergence of iterative algorithms. 
This methodology has been widely adopted in both nonconvex optimization theoretical algorithm analysis \cite{bolte2014proximal, attouch2010proximal,attouch2013convergence}
and practical applications \cite{linglobally}.

In the following, we proceed to apply the KL property to provide the proof of Theorem \ref{Apply KL Property}.

As a preliminary step, we introduce the definition of the KL property \cite{attouch2010proximal, attouch2013convergence}. 
\begin{definition} \label{KL Property}
\textbf{(KL Property)}
Let \( \sigma: \mathbb{R}^d \to (-\infty, +\infty] \) be a proper lower semicontinuous function. We say that \( \sigma \) 
satisfies the KL property at a point \( \bar{u} \in \mathrm{dom}(\partial \sigma) \left(\overset{\text{def}}{=} \{u| \partial \sigma(u) \neq \emptyset\} \right) \) 
if there exist constants \( \eta > 0 \), a neighborhood \( \mathcal{U} \) of \( \bar{u} \), 
and a continuous concave function \( \phi: [0, \eta) \to \mathbb{R}_{+} \) such that:
\( \phi(0) = 0 \),
\( \phi \) is \( C^1 \) on \( (0, \eta) \) and is continuous at \(0 \), with \( \phi' > 0 \) on \( (0, \eta) \),
and for all \( u \in \mathcal{U} \) satisfying
\begin{align}
  H(\bar{u}) < H(u) < H(\bar{u}) + \eta,  \notag
\end{align}
  the following inequality holds:
\begin{align}
  \phi'\big(\sigma(u) - \sigma(\bar{u})\big) \cdot \mathrm{dist}\left(0, \partial \sigma(u)\right) \geq 1,  \notag
\end{align}
where $ \mathrm{dist}(0,\partial \sigma(u))$ is defined as (\ref{dist}) and denotes the distance from the point $0$ to the set $\partial \sigma(u)$. 
If \( \sigma \) satisfies the KL property at each point of \( \mathrm{dom}(\partial \sigma) \), then \( \sigma \) is called a KL function.
\end{definition}
For a detailed discussion on the development of the KL property, please refer to Remark 4 in \cite{attouch2010proximal}.

We now show that
\begin{align}
    H(w, z) &= \frac{1}{2}\|y-Xw-z\|_2^2 + \delta \left\|\frac{z}{S(y-Xw)}\right\|_1  \notag \\
    &=\frac{1}{2} \sum_{i=1}^{m} (y_i-x_iw-z_i)^2 + \delta \sum_{i=1}^{m} \frac{\left|z_i\right|_1}{S(y_i-x_iw)} \notag 
\end{align}
satisfies the KL property. $H(w,z)$ is also denoted by $H(u)$, where $u = [w;z ]$.

Directly verifying the KL property of a function by Definition \ref{KL Property} is relatively difficult; it is often easier to prove by using the definition of semi-algebraic functions.
Specifically, we verify that $H(u)$ is a semi-algebraic function. 
Since semi-algebraic functions are KL functions (as described in Theorem 3 of \cite{bolte2014proximal}), it follows that $H(u)$ also satisfies the KL property.

We first present the definition of semi-algebraic functions.
\begin{definition}
\textbf{(Semialgebraic Functions)} \cite{attouch2010proximal, attouch2013convergence} A subset of $\mathbb{R}^d$ is called semialgebraic if it can be written as:
\begin{align}
\bigcup_{i=1}^{q} \left\{ u \in \mathbb{R}^d : P_i(u) = 0,\, Q_i(u) < 0 , i=1,2,\cdots ,p\right\}, \notag
\end{align}
where $P_i, Q_i$ are all real polynomial functions.
A function \( \sigma : \mathbb{R}^d \to \mathbb{R} \cup \{+\infty\} \) is semialgebraic if its graph is a semialgebraic subset of \( \mathbb{R}^{n+1} \).  
\end{definition}
We then state two usefull properties \cite{attouch2010proximal}.
\begin{itemize}
    \item Finite sums and products of semialgebraic functions are semialgebraic;
    \item Compositions of semialgebraic functions or mappings are semialgebraic;
\end{itemize}
By the above properties, we know that it suffices to prove that \( \frac{1}{S(x)} \) and \( |u| \) are semi-algebraic functions to conclude 
that \( H(u) = H(w,z) \) is a semi-algebraic function. We let 
\begin{align}
  U_1 = \left\{ (u,v) \in \mathbb{R}^{2} : v - \frac{1}{ 2 \sqrt{\delta} } u^2 - \frac{ \sqrt{\delta} }{2} = 0,\, u- \sqrt{\delta}<0, -u - \sqrt{\delta}<0 \right\}, \notag
\end{align}
\begin{align}
  U_2 = \left\{ (u,v) \in \mathbb{R}^{2} : u - v = 0,\, -u + \sqrt{\delta} \leq 0 \right\}, \notag
\end{align}
\begin{align}
  U_3 = \left\{ (u,v) \in \mathbb{R}^{2} : u + v = 0,\, u + \sqrt{\delta} \leq 0 \right\}. \notag
\end{align}
\begin{align}
  U_4 = \left\{ (u,v) \in \mathbb{R}^{2} : u - v = 0,\, -u \leq 0 \right\}, \notag
\end{align}
\begin{align}
  U_5 = \left\{ (u,v) \in \mathbb{R}^{2} : u + v = 0,\, u \leq 0 \right\}. \notag
\end{align}
Then \( \left\{ \left( u, \frac{1}{S(u)}\right)| u \in \mathbb{R} \right\}, \) can be written as:
\begin{align}
   \left\{ \left. \left( u, \frac{1}{S(u)}\right) \right| u \in \mathbb{R} \right\} = U_1 \cup U_2 \cup U_3, \notag
\end{align}
and \( \left\{ \left( u, |u|\right) | u \in \mathbb{R}\right\} \) can be written as:
\begin{align}
  \left\{ \left( u, |u|\right) | u \in \mathbb{R}\right\} = U_4 \cup U_5 . \notag
\end{align}
Therefore $H(u)$ is a semi-algebraic function and, moreover, a KL function.

Since \( \omega(u^0) \) is a set that may contain more than one point, the pointwise KL property described in Definition \ref{KL Property} is not sufficient to establish Theorem \ref{Apply KL Property}.
Therefore, it is necessary to introduce a lemma to characterize the uniformized KL property.
\begin{lemma}\textbf{(Uniformized KL property)} \label{Uniformized KL property} \cite{bolte2014proximal}
\( \Omega \) is a compact set and \( \sigma : \mathbb{R}^d \to (-\infty, \infty] \) is a proper and lower semicontinuous function. Suppose that \( \{ \sigma(u)| u \in \Omega \}\) only contains a constant and \( \sigma \) satisfies the KL property at each point of \( \Omega \). Then, there exist \( \varepsilon > 0 \), \( \eta > 0 \), and a continuous concave function \( \phi: [0, \eta) \to \mathbb{R}_{+} \) satisfying: 
\( \phi(0) = 0 \),
\( \phi \) is \( C^1 \) on \( (0, \eta) \) and is continuous at \(0 \), with \( \phi' > 0 \) on \( (0, \eta) \), such that for all \( \bar{u} \in \Omega \) and all \( u\) in the set
\begin{align}
\left\{ \left. u \in \mathbb{R}^d \right| \operatorname{dist}(u, \Omega) < \varepsilon \right\} \cap \left\{ \left. u \in \mathbb{R}^d \right| \sigma(\bar{u}) < \sigma(u) < \sigma(\bar{u}) + \eta \right\}, \notag 
\end{align}
we have
\begin{align}
\phi'\left( \sigma(u) - \sigma(\bar{u}) \right) \cdot \operatorname{dist}(0, \partial \sigma(u)) \geq 1. \notag 
\end{align}
\end{lemma}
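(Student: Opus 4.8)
The plan is to derive the uniform constants $\varepsilon,\eta$ and the single desingularizing function $\phi$ from the \emph{pointwise} KL property (Definition \ref{KL Property}) by exploiting the compactness of $\Omega$ together with the hypothesis that $\sigma$ is constant on $\Omega$. Write $\hat{\sigma}$ for this common value, so that $\sigma(\bar u) = \hat{\sigma}$ for every $\bar u \in \Omega$. The decisive consequence of constancy is that the argument $\sigma(u) - \sigma(\bar u) = \sigma(u) - \hat{\sigma}$ and the lower threshold $\sigma(\bar u) = \hat{\sigma}$ appearing in the admissible window both become independent of the choice of $\bar u$. Hence the target inequality $\phi'(\sigma(u)-\sigma(\bar u))\operatorname{dist}(0,\partial\sigma(u)) \geq 1$ collapses into a statement about $u$ alone, and it suffices to establish it for every $u$ with $\operatorname{dist}(u,\Omega) < \varepsilon$ and $\hat{\sigma} < \sigma(u) < \hat{\sigma} + \eta$.

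First I would apply Definition \ref{KL Property} at each point $\bar u \in \Omega$: this yields radii $\varepsilon_{\bar u} > 0$, $\eta_{\bar u} > 0$ and a desingularizing function $\phi_{\bar u}$ (concave, $C^1$ on $(0,\eta_{\bar u})$, with $\phi_{\bar u}(0)=0$ and $\phi_{\bar u}' > 0$) such that the KL inequality holds for all $v$ obeying $\|v-\bar u\| < \varepsilon_{\bar u}$ and $\hat{\sigma} < \sigma(v) < \hat{\sigma}+\eta_{\bar u}$. The half-radius balls $\{B(\bar u, \varepsilon_{\bar u}/2)\}_{\bar u \in \Omega}$ form an open cover of the compact set $\Omega$, so I extract a finite subcover $\{B(u_i,\varepsilon_i/2)\}_{i=1}^p$ with associated $\eta_i$ and $\phi_i$. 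I then set
\begin{align}
\varepsilon := \min_{1\le i\le p}\tfrac{\varepsilon_i}{2}, \qquad \eta := \min_{1\le i\le p}\eta_i, \qquad \phi := \sum_{i=1}^{p}\phi_i. \notag
\end{align}

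To verify the conclusion, fix any $\bar u \in \Omega$ and any $u$ in the prescribed intersection. Since $\operatorname{dist}(u,\Omega) < \varepsilon$, there is $w \in \Omega$ with $\|u-w\| < \varepsilon$, and because $w$ lies in some $B(u_i,\varepsilon_i/2)$, the triangle inequality gives $\|u - u_i\| < \varepsilon + \varepsilon_i/2 \le \varepsilon_i$, so $u \in B(u_i,\varepsilon_i)$. As $\hat{\sigma} < \sigma(u) < \hat{\sigma}+\eta \le \hat{\sigma}+\eta_i$, the point $u$ meets the hypotheses of the pointwise inequality centered at $u_i$, whence $\phi_i'(\sigma(u)-\hat{\sigma})\operatorname{dist}(0,\partial\sigma(u)) \geq 1$. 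Because every $\phi_j$ is admissible on the common window $(0,\eta)$ with $\phi_j' > 0$ there, the sum satisfies $\phi'(s) = \sum_j \phi_j'(s) \geq \phi_i'(s)$ at $s = \sigma(u)-\hat{\sigma}$, which upgrades the single-index bound to $\phi'(\sigma(u)-\hat{\sigma})\operatorname{dist}(0,\partial\sigma(u)) \geq 1$; recalling $\sigma(\bar u)=\hat{\sigma}$ yields exactly the claimed inequality.

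The main obstacle is producing a \emph{single} function $\phi$ that is simultaneously admissible (continuous, concave, $C^1$ with strictly positive derivative, vanishing at $0$) and dominant enough to replace each local $\phi_i$ at once. Taking a pointwise maximum would generally destroy $C^1$ regularity, so the key idea is to \emph{sum} the finitely many desingularizing functions: summation preserves concavity, $C^1$ smoothness and the normalization $\phi(0)=0$, while the monotonicity $\phi' \ge \phi_i'$ — valid precisely because all derivatives are strictly positive — transfers each local KL inequality to $\phi$. The only remaining care is bookkeeping the covering radii, namely working with half-radius balls so that points merely \emph{near} $\Omega$, rather than inside $\Omega$, still fall within a full-radius ball on which the pointwise KL inequality is available.
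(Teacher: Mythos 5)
Your proof is correct and complete: the pointwise-KL-to-uniform-KL passage via a finite subcover of half-radius balls, the choices $\varepsilon = \min_i \varepsilon_i/2$, $\eta = \min_i \eta_i$, and the summation $\phi = \sum_i \phi_i$ (which preserves concavity, $C^1$ regularity, $\phi(0)=0$, and dominates each $\phi_i'$ because all derivatives are positive) all go through exactly as you describe. The paper itself gives no proof of this lemma---it simply defers to Lemma 6 of \cite{bolte2014proximal}---and your argument is precisely the standard proof of that cited lemma, so you have in effect reconstructed the source's argument rather than diverged from it.
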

\begin{proof}
    Please refer to Lemma 6 of \cite{bolte2014proximal}.
\end{proof}
This lemma mainly ensures that the choices of \( \phi \) and \( \eta \) are uniform with respect to \( \bar{u} \).

%--------------------------------------------------------------------------------------------------
% Proof of Subgradient Lower Bound
\subsection{Proof of Theorem \ref{Apply KL Property}}
With all the preparatory results in place, we next prove Theorem \ref{Apply KL Property} by an approach similar to that in \cite{bolte2014proximal, attouch2013convergence, liu2020optimization}.

\begin{proof}\textbf{(Proof of Theorem \ref{Apply KL Property})}
    According to the proof of Lemma \ref{Properties of the Critical Points}, there exists a subsequence \( u^{k_q} \to u^*\) as $k\to \infty$ and $\lim_{k \to \infty} H(u^{k}) = H(u^{*})$. The conclusion clearly holds when there exists a $k^*$ such that \( H(u^{*}) = H(u^{k^*}) \), 
    which because the monotonicity of $H(u)$ and the fact that $ \left\|u^{k+1} - u^{k} \right\|^2_2 \leq \frac{1}{\rho_1} \left( H \left( u^{k+1} \right) - H\left(u^{k} \right) \right)= 0$ if $k \geq k^*$ (Theorem \ref{Subgradient Lower Bound}). 
    Therefore, we only consider the case where \( H(u^{*}) < H(u^{k}) \) for all $k$.

    Based on properties of the critical points (Lemma \ref{Properties of the Critical Points}), we can obtain that for any $\varepsilon, \eta > 0$, there exists a
    a sufficiently large integer \( K \) such that for any $ k>K$, 
    \begin{align}
        \operatorname{dist}(u^k, \Omega) < \varepsilon, H(u^*) < H(u^k) < H(u^*) + \eta.  \notag 
    \end{align}
    Therefore, it can be obtained from Lemma \ref{Uniformized KL property} that when \( k \) is sufficiently large, the sequence \( \{u_k\} \) satisfies the uniformized KL property. Since \( \omega(u^0)\) is a nonempty compact set and \( H(u) \) is constant on \( \omega(u^0) \), as long as we set \( \sigma = H(u), \Omega = \omega(u^0) \) and \( \bar{u} = u^* \) in Lemma \ref{Uniformized KL property}, we obtain that for any $k>K$:
    \begin{align} 
    \phi'\left( H(u^k) - H(u^*) \right) \cdot \operatorname{dist}(0, \partial H(u^k)) \geq 1.  \label{KLunequal}
    \end{align}
    According to Theorem \ref{Subgradient Lower Bound}:
    \begin{align} 
    \operatorname{dist}(0, \partial H(u^k)) \leq \left\|[A^{k}; B^{k}]\right\|_2 \leq \rho_2 \left\|[w^{k}; z^{k}] - [w^{k-1}; z^{k-1}] \right\|_2 = \rho_2 \left\|u^{k} - u^{k-1} \right\|_2  \label{KLbound}
    \end{align}
    By substituting (\ref{KLbound}) into (\ref{KLunequal}), we obtain:
    \begin{align}
    \phi'\left( H(u^k) - H(u^*) \right) \geq \frac{1}{\rho_2 \left\|u^{k} - u^{k-1} \right\|_2}. \label{phiequal1}
    \end{align}
    Moreover, since \( \phi \) is a concave function, 
    \begin{align}
        &\phi \left( H(u^k) - H(u^*) \right) - \phi \left( H(u^{k+1}) - H(u^*) \right) \notag \\
        \geq & \phi'\left( H(u^k) - H(u^*) \right) \left( H(u^k) - H(u^{k+1})\right). \label{phiequal2}
    \end{align}
    To simplify the notation, we let:
    \begin{align}
        \Delta_{p,q} = \phi \left( H(u^p) - H(u^*) \right) - \phi \left( H(u^q) - H(u^*) \right),p,q \in \mathbb{N}. \notag 
    \end{align}
    and 
    \begin{align}
        C = \frac{\rho_1}{\rho_2} > 0.
    \end{align}
    According to (\ref{phiequal1}) and (\ref{phiequal2}), we have:
    \begin{align}
        \Delta_{k,k+1} &\geq \phi'\left( H(u^k) - H(u^*) \right) \left( H(u^k) - H(u^{k+1})\right) \notag \\
        &\geq \frac{1}{\rho_2 \left\|u^{k} - u^{k-1} \right\|_2} \cdot \rho_1 \| u^{k+1} - u^{k} \|_2^2 \notag \\
        &\geq \frac{\| u^{k+1} - u^{k} \|_2^2}{C \left\|u^{k} - u^{k-1} \right\|_2}, \notag 
    \end{align}
    which ie equal to:
    \begin{align}
        \| u^{k+1} - u^{k} \|_2 \leq \sqrt{C \Delta_{k,k+1} \left\|u^{k} - u^{k-1} \right\|_2}.  \notag 
    \end{align}
    Using an arithmetic–geometric mean inequality: $ 2\sqrt{ab} \leq a + b, \forall a,b >0$, we let $a =  \left\|u^{k} - u^{k-1} \right\|_2$ and $b = C \Delta_{k,k+1}$, 
    \begin{align} \label{ARGM}
        2\| u^{k+1} - u^{k} \|_2 \leq  C \Delta_{k,k+1} + \left\|u^{k} - u^{k-1} \right\|_2.
    \end{align}
    For any $k>K$, replace $k$ in (\ref{ARGM}) and sum over $i = K+1,K+1,\cdots k$:
    \begin{align}
        2 \sum_{i=K+1}^{k} \| u^{k+1} - u^{k} \|_2 &\leq  C \sum_{i=K+1}^{k}\Delta_{k,k+1} + \sum_{i=K+1}^{k}\left\|u^{k} - u^{k-1} \right\|_2 \label{deltasum1}\\
        &\leq  C \Delta_{K+1,k+1} + \sum_{i=K+1}^{k}\left\| u^{k+1} - u^{k} \right\|_2 + \left\| u^{K+1} - u^{K} \right\|_2. \label{deltasum2},
    \end{align}
    where (\ref{deltasum1}) $ \implies$ (\ref{deltasum2}) because $\Delta_{p,q} + \Delta_{q,r} = \Delta_{p,r}$.
    We cancel terms on both sides of the above inequality and obtain:
    \begin{align}
        \sum_{i=K+1}^{k} \| u^{k+1} - u^{k} \|_2 &\leq  C \Delta_{K+1,k+1} + \left\| u^{K+1} - u^{K} \right\|_2 \notag \\
        &\leq  C \left(\phi \left( H(u^{K+1}) - H(u^*) \right) - \phi \left( H(u^{k+1}) - H(u^*) \right) \right) + \left\| u^{K+1} - u^{K} \right\|_2 \notag \\
        &\leq  C \phi \left( H(u^{K+1}) - H(u^*) \right) + \left\| u^{K+1} - u^{K} \right\|_2. \label{limitedB}
    \end{align}
    The right-hand side of Inequality (\ref{limitedB}) is bounded and independent of \(k\). By the definition of the convergence of a series, we obtain:
    \begin{align}
        \sum_{k=1}^{\infty} \|u^{k+1} - u^{k} \|_2 < + \infty \label{infintelength}
    \end{align}
    which means that $\sum_{k=K}^{\infty} \|u^{k+1} - u^{k} \|_2 \to 0$ as $K \to \infty$, and:
    \begin{align} 
    \sum_{k=1}^{\infty} \left\| [w^{k+1}; z^{k+1}] - [w^{k}; z^{k}] \right\|_2 < + \infty.  \notag
    \end{align}
    Additionally, for any \( q > p > K \), we have the following:
    \begin{align}
        u^{q} - u^{p}  = \sum_{k=p}^{q-1} (u^{k+1} - u^{k}). \notag
    \end{align}
    By the triangle inequality, we have:
    \begin{align}
        \left\| u^{q} - u^{p} \right\|_2  = \left\| \sum_{k=p}^{q-1} (u^{k+1} - u^{k}) \right\|_2 \leq \sum_{k=p}^{q-1}\left\| (u^{k+1} - u^{k}) \right\|_2. \notag
    \end{align}
    Combining with (\ref{infintelength}), it is easy to conclude that $\{u^k\}$ is a Cauchy sequence.
\end{proof}

%--------------------------------------------------------------------------------------------------
% upper error bound proof
\section*{Appendix B: Proof of Theorem \ref{upper bound}}

\subsection{Proof of a Preliminary Lemma}

We first denote the objective function (\ref{objective}) as $H(\delta,w,z)$ with $\delta$ being also regarded as a parameter, and present an important theorem as a prelude.

\begin{theorem} \textbf{Berge's Maximum Theorem.} \label{Berge's Maximum Theorem}
Let \( \Lambda \subset \mathbb{R}^p \) and \( U \subset \mathbb{R}^q \) be both non-empty, where $p,q \in \mathbb{N}$.
Assume \( f : \Lambda \times U \rightarrow \mathbb{R} \) is a continuous function and that  
\( \varphi : \Lambda \rightarrow 2^U \) is a continuous correspondence that is compact and non-empty for all \( \lambda \in \Lambda \).  
For all \( \delta \in \Lambda \), define
\begin{align}
h(\delta) = \max_{u \in \varphi(\delta)} f(\delta,u)
\quad \text{and} \quad
\Gamma(\delta) = \{ u \in \varphi(\delta) | h(\delta) = f(\delta,u) \}.
\end{align}
Then \( \Gamma(\delta) \) is non-empty for all \( \lambda \in \Lambda \), \( \Gamma(\delta) \) and is upper hemicontinuous, and \( h(\delta) \) is continuous.
\end{theorem}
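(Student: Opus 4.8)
The plan is to prove the three assertions in their natural order: first that $\Gamma(\delta)$ is non-empty, then that the value function $h$ is continuous, and finally that the argmax correspondence $\Gamma$ is upper hemicontinuous. Throughout I would exploit the standing hypothesis that $\varphi$ is a \emph{continuous} correspondence, i.e.\ both upper and lower hemicontinuous, in addition to being compact- and non-empty-valued; since $U \subset \mathbb{R}^q$ is metric, I would work with the sequential characterizations of hemicontinuity rather than the open-set definitions. Non-emptiness of $\Gamma(\delta)$ is immediate: for fixed $\delta$ the set $\varphi(\delta)$ is non-empty and compact and $f(\delta,\cdot)$ is continuous, so by the Weierstrass extreme value theorem the maximum $h(\delta)$ is attained, giving $\Gamma(\delta)\neq\emptyset$.

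For continuity of $h$ I would establish lower and upper semicontinuity separately along an arbitrary sequence $\delta_n \to \delta$. Lower semicontinuity uses \emph{lower} hemicontinuity of $\varphi$: fixing a maximizer $u^* \in \Gamma(\delta)$, lower hemicontinuity supplies $u_n \in \varphi(\delta_n)$ with $u_n \to u^*$, whence $h(\delta_n) \geq f(\delta_n, u_n) \to f(\delta, u^*) = h(\delta)$, so $\liminf_n h(\delta_n) \geq h(\delta)$. Upper semicontinuity uses \emph{upper} hemicontinuity together with compact values: choosing $u_n \in \Gamma(\delta_n)$, the sequence $\{u_n\}$ admits a subsequence converging to some $u^* \in \varphi(\delta)$, and then $\limsup_n h(\delta_n) = \lim_k f(\delta_{n_k}, u_{n_k}) = f(\delta, u^*) \leq h(\delta)$. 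Combining the two inequalities yields continuity of $h$.

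Finally, for upper hemicontinuity of $\Gamma$ I would verify the sequential (closed-graph-type) criterion. Given $\delta_n \to \delta$ and $u_n \in \Gamma(\delta_n)$, I extract, via upper hemicontinuity and compact-valuedness of $\varphi$, a convergent subsequence $u_{n_k} \to u^* \in \varphi(\delta)$. Passing to the limit in the identity $h(\delta_{n_k}) = f(\delta_{n_k}, u_{n_k})$ and invoking the already-established continuity of $h$ and of $f$ gives $h(\delta) = f(\delta, u^*)$, i.e.\ $u^* \in \Gamma(\delta)$. Since $\Gamma(\delta)$ is a closed subset of the compact set $\varphi(\delta)$, it is compact, and the criterion above is precisely upper hemicontinuity.

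The main point requiring care is keeping the two hemicontinuity hypotheses in their correct roles: lower hemicontinuity drives the $\liminf$ bound for $h$, whereas upper hemicontinuity together with compact-valuedness drives both the $\limsup$ bound and the subsequence extraction in the $\Gamma$ argument. The delicate step is justifying that the sequential version of upper hemicontinuity used here coincides with the topological definition, which is exactly where compact-valuedness is indispensable: without it the convergent subsequence need not exist and $\Gamma$ may fail to be upper hemicontinuous. Because this is a classical result, in the paper itself it would suffice to cite a standard reference; the sketch above is the route I would take to prove it from first principles.
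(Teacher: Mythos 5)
Your proof is correct, but it differs from the paper in the most basic possible way: the paper offers no proof of this statement at all. Berge's Maximum Theorem appears in Appendix~B purely as a stated prelude, and is then applied in Remark~\ref{continuous} with the constant correspondence $\varphi(\delta) \equiv U$ (trivially continuous, compact- and non-empty-valued) to conclude that $\Gamma(\delta)$ is upper hemicontinuous, hence continuous once uniqueness of the minimizer makes it single-valued. What you supply is the standard textbook argument: Weierstrass for non-emptiness of $\Gamma(\delta)$; the sequential form of \emph{lower} hemicontinuity for $\liminf_n h(\delta_n) \geq h(\delta)$; the sequential form of \emph{upper} hemicontinuity plus compact values for $\limsup_n h(\delta_n) \leq h(\delta)$ and for the closed-graph subsequence criterion applied to $\Gamma$; and the observation that $\Gamma(\delta)$, being a closed subset of the compact $\varphi(\delta)$, is compact, so that the sequential criterion is genuinely equivalent to upper hemicontinuity in the metric setting $U \subset \mathbb{R}^q$. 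All of this is sound, and your closing remark about where each hypothesis enters is exactly right. The one step worth tightening is the upper-semicontinuity bound for $h$: as written, the equality $\limsup_n h(\delta_n) = \lim_k f(\delta_{n_k}, u_{n_k})$ presumes you have already passed to a subsequence of $\{\delta_n\}$ along which $h(\delta_n)$ converges to the $\limsup$, and only then extracted the further subsequence of maximizers converging to some $u^* \in \varphi(\delta)$; this double extraction should be made explicit, though it is cosmetic. In terms of trade-offs, your route makes the paper self-contained and records precisely which hypotheses do which work; the paper's route (bare statement of a classical theorem, which you yourself anticipate would suffice) buys brevity, at the cost that a reader must trust or look up the result--ideally the paper would at least attach a standard citation to the theorem, which it currently does not.
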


\begin{remark} \label{continuous}
    By setting \( \varphi(\delta) \stackrel{\text{def}}{=} U \),\( \Lambda = \left\{ \delta|\delta > 0 \right\} \), \( p = 1 \), \( q = m+n \), \( f = -H(w, z) \), and \( u = [w;z] \), we can invoke Theorem \ref{Berge's Maximum Theorem} to conclude that \( \Gamma(\delta) = \{ [w;z] \in U | \min_{[w;z] \in U}H(\delta,w,z) = \min_{[w;z] \in \mathbb{R}^{m+n}}H(\delta,w,z) = H(\delta,w,z) \} = \{ [ w^*_{\delta}; z^*_{\delta} ]\} \) is upper semicontinuous.
    According to the assumptions in Theorem \ref{upper bound}, for any \( \delta > 0 \), the set \( \Gamma(\delta) \) admits a unique point. Hence, \( \Gamma(\delta) \) can be regarded as a function on \( \Lambda \), then it is a continuous function since upper hemicontinuous functions are continuous.
\end{remark}

Next, we prove an additional lemma to address two extreme cases respectively.
\begin{lemma} \label{deltalimitopt}
Assume that \( \Gamma(\delta) \)is a function with respect to \( \delta \), as described in Remark \ref{continuous}. Then we can obtain that:
\begin{itemize}
    \item[(1)] \( \lim_{\delta \to 0}\Gamma(\delta) = [w^*_{0};y-Xw^*_{0}] \),
    \item[(2)] \( \lim_{\delta \to +\infty} \Gamma(\delta) = [w^*_{\infty};0] \),
\end{itemize}
where \( w^*_{0}\) is a global optimal solution of 
\begin{align} \label{equalL1}
\min_{w \in \mathbb{R}^n} \left\| \frac{y-Xw}{S(y-Xw)} \right\|_1,
\end{align}
and \( w^*_{\infty}\) is a global optimal solution of 
\begin{align}
\min_{w \in \mathbb{R}^n} \|y-Xw\|_2^2.
\end{align}
\end{lemma}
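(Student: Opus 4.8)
The plan is to treat both limits with a single template built on the compactness of $U$, the pointwise optimality of $\Gamma(\delta)$, and carefully chosen comparison points, passing to the limit along subsequences. Since $\Gamma(\delta) \in U$ for all $\delta > 0$ and $U$ is compact, it suffices to show that along any sequence $\delta_n$ tending to the target value ($0$ or $+\infty$), every convergent subsequence of $\{\Gamma(\delta_n)\}$ has the claimed limit; as that limit is then independent of the subsequence, the whole net converges to it. I write $r^*_\delta = y - X w^*_\delta$ and recall that, by the proximal computation leading to (\ref{altz}), the inner minimization $\min_z H(\delta, w, z)$ is separable and attained at the soft-thresholded $z^*(w)$, so the reduced objective $\tilde H(\delta, w) := \min_z H(\delta, w, z) = \sum_{i} \phi_\delta\big((y-Xw)_i\big)$, where a direct case split gives $\phi_\delta(r) = \tfrac12 r^2$ for $|r| \le \sqrt{\delta}$ and $\phi_\delta(r) = \delta - \tfrac{\delta^2}{2 r^2}$ for $|r| > \sqrt{\delta}$. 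Two elementary facts, proved by the same case split, will be used repeatedly: the componentwise bound $\phi_\delta(r) \le \delta\, \big|\tfrac{r}{S(r)}\big|$, and $S(r) \le \max(|r|, \sqrt{\delta})$.

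For part (2), I compare $\Gamma(\delta)$ against the feasible point $[w^*_{\infty}; 0]$:
\begin{align}
\tfrac12\|y - X w^*_\delta - z^*_\delta\|_2^2 + \delta \left\|\tfrac{z^*_\delta}{S(r^*_\delta)}\right\|_1 = H(\delta, w^*_\delta, z^*_\delta) \le H(\delta, w^*_{\infty}, 0) = \tfrac12\|y - X w^*_{\infty}\|_2^2 =: M. \notag
\end{align}
Both left-hand terms are nonnegative and $M$ is independent of $\delta$, so $\big\|z^*_\delta / S(r^*_\delta)\big\|_1 \le M/\delta$. Using $S(r^*_{\delta,i}) \le \max(R, \sqrt{\delta})$ with $R := \sup_{[w;z]\in U}\|y - Xw\|_\infty < \infty$, I obtain $|z^*_{\delta,i}| \le (M/\delta)\max(R,\sqrt{\delta}) \le M/\sqrt{\delta} \to 0$, hence $z^*_\delta \to 0$. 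Passing to a convergent subsequence with limit $[\bar w; 0]$ and using $\tfrac12\|y - X\bar w\|_2^2 \le \liminf \tfrac12\|y - Xw^*_\delta - z^*_\delta\|_2^2 \le M = \min_w \tfrac12\|y - Xw\|_2^2$ shows $\bar w$ is a least-squares minimizer; full rank of $X$ makes it unique, so $\bar w = w^*_{\infty}$ and the limit is $[w^*_{\infty}; 0]$.

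For part (1), I compare against the perfect-fit points $[w; y - Xw]$, for which the quadratic term of $H$ vanishes and the regularizer equals $\delta\big\|\tfrac{y-Xw}{S(y-Xw)}\big\|_1 \le m\delta$ (each entry of $\tfrac{y-Xw}{S(y-Xw)}$ has modulus at most $1$). Hence $H(\delta, w^*_\delta, z^*_\delta) \le m\delta \to 0$, so the fit term vanishes in the limit and any subsequential limit $[\bar w; \bar z]$ satisfies $\bar z = y - X\bar w$. To identify $\bar w$, I use optimality of $w^*_\delta$ for the reduced objective together with the componentwise bound: $\tilde H(\delta, w^*_\delta) \le \tilde H(\delta, w) \le \delta \big\|\tfrac{y-Xw}{S(y-Xw)}\big\|_1$ for every $w$. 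Dividing by $\delta$ and comparing $\tilde H(\delta, \cdot)/\delta$ with $\big\|\tfrac{y-Xw}{S(y-Xw)}\big\|_1$ as $\delta \to 0$ is intended to force $\bar w$ to be a global solution $w^*_0$ of (\ref{equalL1}), after which $\bar z = y - X w^*_0$ yields the claim.

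The main obstacle is exactly this $w$-characterization. The reduced objective $\tilde H(\delta, w)$ is strictly below the perfect-fit value $\delta\|\tfrac{y-Xw}{S(y-Xw)}\|_1$ (the bound $\phi_\delta(r) \le \delta|\tfrac{r}{S(r)}|$ is never tight for $|r|>\sqrt{\delta}$), and after rescaling by $1/\delta$ both functionals converge only to the discontinuous count $\|y-Xw\|_0$, with non-uniform convergence near residual sign changes; thus I cannot naively interchange limit and $\arg\min$. The way I would control this is to note that, since assumption (\ref{condition2}) forces $y \notin \operatorname{range}(X)$, the residual coordinates bounded away from zero remain in the outer regime $|r|>\sqrt{\delta}$ for small $\delta$, where the two-sided sandwich $\tfrac{\delta}{2}\big|\tfrac{r}{S(r)}\big| \le \phi_\delta(r) \le \delta\big|\tfrac{r}{S(r)}\big|$ holds. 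Combining this sandwich with the lower semicontinuity of $\big\|\tfrac{y-Xw}{S(y-Xw)}\big\|_1$ in the $\delta\to0$ limit and the comparison inequalities above should pin down $\bar w$ as a minimizer of (\ref{equalL1}) while ruling out spurious limits in which residuals collapse into the inner regime.
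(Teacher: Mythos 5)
Your part (2) is correct and complete: the comparison $H(\delta,w^*_\delta,z^*_\delta)\le H(\delta,w^*_\infty,0)=M$, the coordinatewise bound $|z^*_{\delta,i}|\le (M/\delta)\max(R,\sqrt{\delta})\to 0$, and the identification of the subsequential limit via full rank of $X$ are all sound; this actually supplies the argument the paper omits by saying ``(2) follows similarly.''

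Part (1), however, has a genuine gap, and it is the one you flag yourself: the identification of the subsequential limit $\bar w$ as a global minimizer of (\ref{equalL1}) is never completed. Your route --- minimize out $z$ to obtain $\tilde H(\delta,w)=\sum_i\phi_\delta\bigl((y-Xw)_i\bigr)$, rescale by $1/\delta$, and try to pass argmins to the limit --- runs into exactly the $\ell_0$ degeneration you describe, and the proposed repair does not close it: condition (\ref{condition2}) only prevents the \emph{whole} residual vector $y-Xw$ from vanishing, not individual coordinates, and it is precisely the coordinates of $y-Xw^*_\delta$ that drift into the inner regime $|r|\le\sqrt{\delta}$ that break both your sandwich (its lower half fails there) and the uniformity of the convergence. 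Moreover, even if this program were carried out, it would identify $\bar w$ as a minimizer of the residual-$\ell_0$ problem, which is not what the lemma asserts: the lemma's limit problem (\ref{equalL1}) is stated with $S$ in the denominator. (A smaller shared looseness: for part (1) your ``limit independent of the subsequence'' step would also need uniqueness of the minimizer of (\ref{equalL1}), which is not assumed; the paper's statement has the same issue.)

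The missing idea, which is how the paper closes this in two lines, is that no interchange of limit and argmin is needed: it suffices to test optimality against a \emph{single fixed} comparison point. Let $\bar w$ be any global minimizer of (\ref{equalL1}) and plug the feasible pair $[\bar w;\, y-X\bar w]$, whose quadratic term vanishes, into the optimality of $[w^*_\delta;z^*_\delta]$:
\[
\frac12\|y-Xw^*_\delta-z^*_\delta\|_2^2+\delta\left\|\frac{z^*_\delta}{S(y-Xw^*_\delta)}\right\|_1\;\le\;\delta\left\|\frac{y-X\bar w}{S(y-X\bar w)}\right\|_1 .
\]
The first consequence (right side is $O(\delta)$) is your own first step, giving $z^*_0=y-Xw^*_0$ in the limit. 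The second consequence is obtained by dropping the nonnegative quadratic term and cancelling $\delta$, which leaves the $\delta$-free inequality $\left\|\frac{z^*_\delta}{S(y-Xw^*_\delta)}\right\|_1\le\left\|\frac{y-X\bar w}{S(y-X\bar w)}\right\|_1$; letting $\delta\to0$ along the convergent subsequence and substituting $z^*_0=y-Xw^*_0$ yields $\left\|\frac{y-Xw^*_0}{S(y-Xw^*_0)}\right\|_1\le\left\|\frac{y-X\bar w}{S(y-X\bar w)}\right\|_1$, so $w^*_0$ is itself a global minimizer of (\ref{equalL1}). This limit passage is legitimate because the paper reads $S$, and hence the functional in (\ref{equalL1}), as a fixed continuous function bounded away from zero; your difficulty is aggravated by coupling the smoothing parameter inside $S$ to the vanishing weight $\delta$, which is what produces the discontinuous $\ell_0$ limit. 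Under either reading, the reduced-objective detour is unnecessary.
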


\begin{proof}
We only prove (1); (2) follows similarly. 

Recall Optimization Problem (\ref{SARMOPT2}) and the definition of \( \Gamma(\delta) \):
\begin{align}
\Gamma(\delta) &= \underset{w \in \mathbb{R}^n, z \in \mathbb{R}^m}{\arg\min} H(\delta,w,z) \notag \\
&= \underset{w \in \mathbb{R}^n, z \in \mathbb{R}^m}{\arg\min} \frac{1}{2}\|y-Xw-z\|_2^2 + \delta \left\|\frac{z}{S(y-Xw)}\right\|_1.
\end{align} 
Suppose that $\bar{w}$ is a global optimal solution of (\ref{equalL1}), $\{\delta^k\}$ is a monotonically decreasing sequence converging to zero and  $\Gamma(\delta^k) = \left[ w^*_{\delta^k},z^*_{\delta^k} \right]$.
$H(\delta,w,z)$ is continuous with respect to $\delta,w,z$.
By the optimality of $\bar{w}$, we have:
\begin{align}
    \left\| \frac{y-X\bar{w}}{S(y-X\bar{w})} \right\|_1 \leq \left\| \frac{y-Xw}{S(y-Xw)} \right\|_1, \forall w \in \mathbb{R}^n.
\end{align}
Furthermore, based on the definition of \( w^*_{\delta^k},z^*_{\delta^k} \), we have that for any $w \in \mathbb{R}^n, z \in \mathbb{R}^m$:
\begin{align}
    &\frac{1}{2}\|y-Xw^*_{\delta^k}-z^*_{\delta^k}\|_2^2 + \delta^k \left\|\frac{z^*_{\delta^k}}{S(y-Xw^*_{\delta^k})}\right\|_1 \notag \\
    \leq&\frac{1}{2}\|y-Xw-z\|_2^2 + \delta^k \left\|\frac{z}{S(y-Xw)}\right\|_1. \notag
\end{align}
We let $w = \bar{w}, z = y-X\bar{w}$, therefore the following inequality holds:
 \begin{align}
    &\frac{1}{2}\|y-Xw^*_{\delta^k}-z^*_{\delta^k}\|_2^2 + \delta^k \left\|\frac{z^*_{\delta^k}}{S(y-Xw^*_{\delta^k})}\right\|_1 \notag \\
    \leq&\frac{1}{2}\|y-X\bar{w}-(y-X\bar{w})\|_2^2 + \delta^k \left\|\frac{y-X\bar{w}}{S(y-X\bar{w})}\right\|_1 \notag \\
    =& \delta^k \left\|\frac{y-X\bar{w}}{S(y-X\bar{w})}\right\|_1. \label{inequalstar}
\end{align}
By rearranging the above inequality, we obtain:
 \begin{align} \label{gammalimit}
    &\|y-Xw^*_{\delta^k}-z^*_{\delta^k}\|_2^2 \leq 2\delta^k \left( \left\|\frac{y-X\bar{w}}{S(y-X\bar{w})}\right\|_1 - \left\|\frac{z^*_{\delta^k}}{S(y-Xw^*_{\delta^k})}\right\|_1 \right).
\end{align}
 Suppose $\left[w^*_{0}, z^*_{0}\right]$ is a limit point of sequence $\left\{\left[ w^*_{\delta^k},z^*_{\delta^k} \right] \right\}$. Since \( U \) is compact, $\left[w^*_{0}, z^*_{0}\right]$ can be attained. By the continuity of \(\Gamma(\delta)\), we have: $\lim_{k\to \infty}\left[ w^*_{\delta^k},z^*_{\delta^k} \right] = [w^*_{0}, z^*_{0}]$. In (\ref{gammalimit}), let $k \to \infty$. Then based on the continuity of $\|y-Xw-z\|_2^2$ and $\left\|\frac{z}{S(y-Xw)}\right\|_1$, and the fact that $\delta^k \to 0$, we have $\|y-Xw^*_{0}-z^*_{0}\|_2^2 = 0$. Therefore $z^*_{0} = y-Xw^*_{0}$, which means that $w^*_{0}$ is a feasible solution to (\ref{equalL1}). According to (\ref{inequalstar}):
\begin{align}
    \delta^k \left\|\frac{z^*_{\delta^k}}{S(y-Xw^*_{\delta^k})}\right\|_1
    \leq \delta^k \left\|\frac{y-X\bar{w}}{S(y-X\bar{w})}\right\|_1.  \label{deltainequal}
\end{align}
Removing the shared coefficient $\delta^k$ and taking the limit on both sides of the inequality (\ref{deltainequal}), we obtain:
\begin{align}
    \left\|\frac{y-Xw^*_{0}}{S(y-Xw^*_{0})}\right\|_1 = \left\|\frac{z^*_{0}}{S(y-Xw^*_{0})}\right\|_1
    \leq \left\|\frac{y-X\bar{w}}{S(y-X\bar{w})}\right\|_1. 
\end{align}
From the optimality of $w^*_{0}$, we have:
\begin{align}
    \left\|\frac{y-Xw^*_{0}}{S(y-Xw^*_{0})}\right\|_1 = \left\|\frac{y-X\bar{w}}{S(y-X\bar{w})}\right\|_1,
\end{align}
which means that $w^*_{0}$ is also a global optimal solution of (\ref{equalL1}).
\end{proof}

\subsection{Proof of the Error Bound}

With all the preliminaries in place, we now proceed to prove Theorem \ref{upper bound}.
\begin{proof} \textbf{(Proof of Theorem \ref{upper bound})}
According to Lemma \ref{deltalimitopt}, as \(\delta \to 0\), we have \( z^*_0 = y - X w^*_0 \) and \( \frac{1}{2}\|y-Xw^*_0-z^*_0\|_2^2 =0 < \varepsilon \). 
Moreover, based on the assumption \label{condition2} and Lemma \ref{deltalimitopt}, as \(\delta \to \infty\), \( \frac{1}{2}\|y-Xw^*_{\infty}-z^*_{\infty}\|_2^2 \geq \varepsilon\)
Then, by the continuity of \( \Gamma(\delta) \), we can deduce that there exists a lower bound \( \delta^{\mathrm{lb}} > 0 \) such that for all \( \delta \geq \delta^{\mathrm{lb}} \), \( [w^*_{\delta}; z^*_{\delta}] \) satisfies \( \frac{1}{2}\|y-Xw^*_{\delta}-z^*_{\delta}\|_2^2 = \varepsilon({\delta})\geq \varepsilon \).

We next show the optimal solution of 
\begin{equation}
\begin{aligned} \label{penaltyformopt}
\min_{w \in \mathbb{R}^n, z \in \mathbb{R}^m} \frac{1}{2}\|y-Xw-z\|_2^2 + \delta \left\|\frac{z}{S(y-Xw)}\right\|_1
\end{aligned} 
\end{equation}
is also an optimal solution of 
\begin{equation}
\begin{aligned} \label{equalformoptimization}
    \min_{w \in \mathbb{R}^n, z \in \mathbb{R}^m} \left\|\frac{z}{S(y-Xw)}\right\|_1 \\
    s.t.\frac{1}{2}\|y-Xw-z\|_2^2 \leq \varepsilon({\delta}).
\end{aligned} 
\end{equation}
$\left[ w^*_{\delta}, z^*_{\delta}\right]$ is the optimal solution of (\ref{penaltyformopt}) and we define $\left[ w^{\circ}_{\delta}, z^{\circ}_{\delta}\right]$ is an optimal solution of (\ref{penaltyformopt}). Due to the optimality of $\left[ w^{\circ}_{\delta}, z^{\circ}_{\delta}\right]$ and the fact that $\frac{1}{2}\|y-Xw-z\|_2^2 \leq \varepsilon({\delta})=\frac{1}{2}\|y-Xw^*_{\delta}-z^*_{\delta}\|_2^2$, we have $ \frac{1}{2}\|y-Xw^{\circ}_{\delta}-z^{\circ}_{\delta}\|_2^2 + \delta \left\|\frac{z^{\circ}_{\delta}}{S(y-Xw^{\circ}_{\delta})}\right\|_1 \leq \frac{1}{2}\|y-Xw^*_{\delta}-z^*_{\delta}\|_2^2 + \delta \left\|\frac{z^*_{\delta}}{S(y-Xw^*_{\delta})}\right\|_1$.
Moreover, due to the optimality of \( \left[ w^*_{\delta}, z^*_{\delta} \right] \), the inequality above holds with equality. Therefore $\left[ w^*_{\delta}, z^*_{\delta}\right]$ is an optimal solution (\ref{equalformoptimization}).

For convenience, we fix a certain value of \( \delta \geq \delta^{\mathrm{lb}} \) and omit the subscript \(\delta\) in the notation hereafter.

Suppose $[w^*; z^*]$ is the global solution of Optimization Problem (\ref{penaltyformopt}) with given $\delta$ and $\lambda \in (1,+\infty)$.
By the optimality condition of $z^*$, we obtain:
\begin{align}
z_i^* &=\underset{z_i \in \mathbb{R}}{\arg\min} \frac{1}{2}(y_i - x_i^T w^*-z_i)^2 + \delta \left| \frac{z_i}{S(y_i - x_i^T w^*)}\right| \notag \\
    & =     \begin{cases}
        0, & \text{if } | y_i - x_i^T w^* |  \leq \sqrt{\delta} \\
         y_i - x_i^T w^* - \frac{\delta}{y_i - x_i^T w^*} , & \text{if } | y_i - x_i^T w^*|  > \sqrt{\delta},
    \end{cases} \label{zoptimal}
\end{align}
We then define several index sets:
\begin{align}
    &\mathcal{I} = \left\{ i \in \mathbb{N}| |y_i-x_i^Tw^*|\geq \sqrt{\delta}\right\}, \notag\\
    &\mathcal{I}_a = \left\{ i \in \mathbb{N}| |y_i-x_i^Tw^*|\geq \lambda \sqrt{\delta}\right\}, \notag\\
    &\mathcal{I}_b = \left\{ i \in \mathbb{N}| \sqrt{\delta} \leq |y_i-x_i^Tw^*|< \lambda\sqrt{\delta}\right\}, \notag
\end{align}
where $I = I_a \cup I_b$. Then based on (\ref{zoptimal}) we can obtain that:
\begin{align}
    \left\|\frac{z^*}{S(y-Xw^*)}\right\|_1 &= \sum_{i=1}^{m} \left| \frac{z_i^*}{S(y_i - x_i^T w^*)}\right| \notag\\
    &= \sum_{i \in \mathcal{I}} \left| \frac{y_i - x_i^T w^* - \frac{\delta}{y_i - x_i^T w^*}}{\left| y_i - x_i^T w^* \right| }\right| \notag\\
    &= \sum_{i \in \mathcal{I}} \left| 1 - \frac{\delta}{(y_i - x_i^T w^*)^2}\right| \notag\\
    &\geq \sum_{i \in \mathcal{I}_a} \left( 1 - \frac{\delta}{(y_i - x_i^T w^*)^2}\right) \label{L1inequal1}
\end{align}
On the other hand, because $\frac{1}{2}\|y-Xw^{true}-z^{true}\|_2^2 = \varepsilon \leq \varepsilon({\delta})$, $\left[ w^{true},z^{true}\right]$ satisfies the constraint of (\ref{equalformoptimization}).
Note that $\left[ w^*, z^*\right]$ is also an optimal solution (\ref{equalformoptimization}), hence, by the optimality of $\left[ w^*, z^*\right]$, we obtain that: 
\begin{align}
    \left\|\frac{z^*}{S(y-Xw^*)}\right\|_1 &= \sum_{i=1}^{m} \left| \frac{z_i^*}{S(y_i - x_i^T w^*)}\right| \notag\\
    &\leq \sum_{i=1}^{m} \left| \frac{z_i^{true}}{S(y_i - x_i^T w^{true})}\right| \notag\\
    &=\sum_{i=1}^{m} \left| \frac{z_i^{true}}{S(z_i^{true} + e_i) }\right| \notag\\
    &=\sum_{i \in \left\{i|z_i^{true}\neq0 \right\}} \left| \frac{z_i^{true}}{S(z_i^{true}) }\right| \notag\\
    &\leq k^{true}.\label{L1inequal2}
\end{align}
Furthermore, combining (\ref{L1inequal1}), (\ref{L1inequal1}) and the fact that for any $ i \in \mathcal{I}_a$, \( \left|y_i - x_i^T w^*\right|\geq \lambda \sqrt{\delta}\), we obtain that:
\begin{align}
    & \sum_{i \in \mathcal{I}_a} \left( 1 - \frac{1}{\lambda^2}\right) \leq \sum_{i \in \mathcal{I}_a} \left( 1 - \frac{\delta}{(y_i - x_i^T w^*)^2}\right) \leq \left\|\frac{z^*}{S(y-Xw^*)}\right\|_1 \leq k^{true} \\
    \implies& \left( 1 - \frac{1}{\lambda^2}\right) \left| \mathcal{I}_a\right| = \sum_{i \in \mathcal{I}_a} \left( 1 - \frac{1}{\lambda^2}\right) \leq k^{true} \\
    \implies& \left| \mathcal{I}_a\right| \leq \frac{k^{true}}{ \left( 1 - \frac{1}{\lambda^2}\right) } 
\end{align}

The optimality of \([w^*; z^*]\) implies that it satisfies $\nabla_w H(w,z)=0$. From Assumption (\ref{modified attacked}), multiplying by \(X^T\) yields: $X^Ty = X^TX w^{true}  + X^Te + X^Tz^{true}$. 
We further denote 
$ z^*_a = [(z^*_a)_i]  = 
\begin{cases}
(z^*)_i, & \text{if } i \in \mathcal{I}_a \\ 
0, & \text{if } i \notin \mathcal{I}_a
\end{cases}$,
and 
$\begin{cases}
 z^*_b = [(z^*_b)_i]  = 
(z^*)_i, & \text{if } i \in \mathcal{I}_b \\ 
0, & \text{if } i \notin \mathcal{I}_b
\end{cases}$.

Therefore, the following relationship holds for any \(\lambda \in (1,+\infty)\):
\begin{align}
    &X^T (y - Xw^* - z^*) - \delta X^T \left[ \frac{\left|z_i^*\right|S'(y_i - x_i^T w^*)}{S^2(y_i - x_i^T w^*)} \right] = 0 \notag\\
    \implies &X^TX(w^{true} - w^*) + X^T(z^{true} - z^*_a) = X^T z^*_b + \delta X^T \left[ \frac{\left|z_i^*\right| S'(y_i - x_i^T w^*)}{S^2(y_i - x_i^T w^*) } \right] - X^Te \notag
\end{align}
Taking the \(\ell_2\)-norm on both sides and applying the triangle inequality, we have:
\begin{align}
    \left\|\left[ I,X^T\right] 
    \begin{bmatrix}
    w^{true} - w^*  \\
    z^{true} - z^*_a
    \end{bmatrix}
    \right\|_2 & \leq \left\|X^T\right\|_2\left\|z^*_b\right\|_2 + \delta \left\|X^T\right\|_2\left\| \left[ \frac{\left|z_i^*\right|S'(y_i - x_i^T w^*)}{S^2(y_i - x_i^T w^*)} \right] \right\|_2  + \varepsilon \notag \\
 & \leq \left\|z^*_b\right\|_2 + \delta \sqrt{\sum_{i\in \mathcal{I}}  \frac{\left(z_i^*\right)^2}{S^4(y_i - x_i^T w^*)} }  + \varepsilon \notag\\
 & \leq \left\|z^*_b\right\|_2 + \sqrt{|\mathcal{I}|\delta}  + \varepsilon \notag\\
 & \leq \delta \sqrt{\sum_{i\in \mathcal{I}_b}  \left|\frac{z_i^* }{y_i - x_i^T w^*}\right|^2} + \sqrt{|\mathcal{I}|\delta}  + \varepsilon \notag\\
 & \leq \delta \sqrt{\sum_{i \in \mathcal{I}_b} \left( 1 - \frac{\delta}{(y_i - x_i^T w^*)^2}\right)^2 } + \sqrt{|\mathcal{I}|\delta}  + \varepsilon \label{lambdanomo1}\\
 & \leq \left( \lambda - \frac{1}{\lambda}\right)\sqrt{|\mathcal{I}_b|\delta} + \sqrt{|\mathcal{I}|\delta}  + \varepsilon \label{lambdanomo2}, \\
 &\leq \left( \lambda - \frac{1}{\lambda} + 1 \right ) \cdot \sqrt{m\delta} + \varepsilon, \label{boundcond1} 
\end{align}
where (\ref{lambdanomo1}) $\implies$ (\ref{lambdanomo2}) due to the definition of $\mathcal{I}_b$.

Applying the R.I.P. condition (Definition \ref{RIP}), we obtain:
\begin{align}
    &\sqrt{1- \mu_{s} }  \left\|    
    \begin{bmatrix}
    w^{true} - w^*  \\
    z^{true} - z^*_a
    \end{bmatrix}\right\|_2 
    \leq
    \left\|\left[ I,X^T\right] 
    \begin{bmatrix}
    w^{true} - w^*  \\
    z^{true} - z^*_a
    \end{bmatrix}
    \right\|_2 \label{boundcond2}.
\end{align}
We also have:
\begin{align}
      \left\|w^* -  w^{true}\right\|_2  \leq    
    \left\|
    \begin{bmatrix}
    w^{true} - w^*  \\
    z^{true} - z^*_a
    \end{bmatrix}
    \right\|_2. \label{boundcond3}
\end{align}
Therefore, combine (\ref{boundcond1}), (\ref{boundcond2}) and (\ref{boundcond3}), we can obtain:
\begin{align}
    &\sqrt{1- \mu_{s} }  \left\|w^* -  w^{true}\right\|_2  \leq \left( \lambda - \frac{1}{\lambda} + 1 \right ) \cdot \sqrt{m\delta} + \varepsilon \notag\\
    \implies& \left\|w^* -  w^{true}\right\|_2 \leq \frac{\left( \lambda - \frac{1}{\lambda} + 1 \right ) \cdot \sqrt{m\delta} + \varepsilon}{\sqrt{1- \mu_{s} }}, \notag
\end{align}
where
\begin{align}
s = \left(\frac{\lambda^2}{ \lambda^2 - 1 } + 1 \right) k^{true} + n.\notag
\end{align}
Note that the inequality holds for any \(\lambda>1\).
\end{proof}

%Bibliography
\bibliographystyle{unsrt}  
\bibliography{references}

\end{document}